\documentclass[pdflatex,sn-mathphys-num]{sn-jnl}
\usepackage{amsmath,amssymb,amsfonts}
\usepackage{amsthm}
\usepackage{amscd}
\usepackage{cases}
\usepackage{cancel}
\usepackage{breqn}
\usepackage{tikz}
\usepackage{cleveref}
\usepackage{textcomp}
\usepackage{xcolor}

\usetikzlibrary{arrows.meta}
\def\sc{\scriptstyle}

\def\de{\delta}

\def\la{\lambda}
\def\a{\alpha}

\def\p{\partial}
\def\vs{\vspace*}

\def\A{\mathcal{A}}

\def\O{\mathcal{O}}
\def\B{\mathcal{B}}
\def\C{\mathbb{C}}

\def\Cend{{\rm {Cend}}}
\def\ad{{\rm {ad}}}
\def\id{{\rm {id}}}

\def\ad{{\rm {ad}}}
\def\gc{{\rm {gc}}}

\def\id{{\rm id}}
\def\Vir{\hbox{Vir}}

\def\Cend{{\rm {Cend}}}

\def\vs{\vspace*}

\newcommand{\circlednum}[1]{\tikz[baseline=(char.base)]{\node[shape=circle,draw,inner sep=0.5pt](char){#1};}}
\numberwithin{equation}{section}
\newtheorem{theo}{Theorem}[section]
\newtheorem{defi}[theo]{Definition}
\newtheorem{coro}[theo]{Corollary}
\newtheorem{lemm}[theo]{Lemma}
\newtheorem{prop}[theo]{Proposition}

\newtheorem{ex}[theo]{Example}

\newtheorem{remark}[theo]{Remark}

\allowdisplaybreaks[4]

\begin{document}

\title[Compatible Lie conformal algebras and bialgebras]{Compatible Lie conformal bialgebras}

\author*[1]{\fnm{Lamei} \sur{Yuan}}\email{lmyuan@hit.edu.cn}

\author[1]{\fnm{Jiansen} \sur{Zhao}}\email{24S012013@stu.hit.edu.cn}

\affil*[1]{\orgdiv{School of Mathematics}, \orgname{Harbin Institute of Technology}, \orgaddress{\city{Harbin}, \postcode{150001}, \country{China}}}

\abstract{We introduce and study compatible Lie conformal bialgebras as conformal counterparts of compatible Lie bialgebras. Such a structure consists of two compatible Lie conformal brackets and two compatible conformal cobrackets on the same $\C[\partial]$-module, and every simultaneous linear combination of them is again a Lie conformal bialgebra. We develop representations and matched pairs for compatible Lie conformal algebras, introduce compatible Lie conformal coalgebras, and establish their duality through the conformal dual in the finite case. For $\C[\p]$-modules that are free of finite rank, we prove the equivalence among compatible Lie conformal bialgebras, standard compatible conformal Manin triples and matched pairs. 
In the coboundary case, we characterize the tensors $r$ that determine compatible Lie conformal bialgebras. The characterization requires the symmetric part of $r$ to be invariant with respect to both brackets and three conformal Yang--Baxter conditions to hold. The first two conditions correspond to the two brackets separately, whereas the third is the compatible conformal Yang--Baxter condition.
For comparison, we introduce the compatible conformal classical Yang--Baxter equation (CYBE) and show that each of its solutions satisfies these three conditions, while the converse fails.
One explicit example shows that the first two conditions do not imply the third. Another shows that all three conformal Yang-Baxter conditions may hold even though $r$ is not a solution of compatible conformal CYBE.}

\keywords{Lie conformal algebra, compatible Lie conformal algebra, conformal Manin triple, Lie conformal bialgebra, compatible comformal classical Yang--Baxter equation}

\pacs[MSC Classification]{17B38, 17B62, 17B69, 17B10, 17B99}

\maketitle

\section{Introduction}

Lie conformal algebras were introduced by Kac as an algebraic language for the singular part of the operator product expansion of chiral fields in two-dimensional conformal field theory \cite{KAC}. They may be viewed as formal-distribution analogues of Lie algebras: the ordinary Lie bracket is replaced by a $\lambda$-bracket, and the translation operator is encoded in the $\C[\partial]$-module structure. This framework has become a standard setting for the structure theory, representation theory and cohomology of conformal algebras \cite{BKV,CK,DK,DK3}.

	For ordinary Lie algebras, one of the basic sources of bialgebra structures is the classical Yang--Baxter equation. If $\mathfrak g$ is a Lie algebra and $r\in\mathfrak g\otimes\mathfrak g$, the tensor form is
$
	[r_{12},r_{13}]+[r_{12},r_{23}]+[r_{13},r_{23}]=0.
$	
	Equivalently, if $r=\sum_i x_i\otimes y_i$, this equation reads
	\[
		\sum_{i,j}\Big(
		[x_i,x_j]\otimes y_i\otimes y_j
		-x_i\otimes [x_j,y_i]\otimes y_j
		-x_i\otimes x_j\otimes [y_j,y_i]
		\Big)=0.
	\]
	Solutions give rise to triangular Lie bialgebras and are closely related to Poisson--Lie groups, quantum groups and integrable systems \cite{Drinfeld,STS,BD}. The coboundary formula $\delta(x)=x\cdot r$ is the mechanism which connects this equation with Lie bialgebras and Manin triples.

Liberati constructed the conformal counterpart of this picture by introducing Lie conformal coalgebras and Lie conformal bialgebras, together with conformal Manin triples, Drinfeld doubles and a conformal classical Yang--Baxter equation \cite{Li}. In this setting, tensor equations cannot be copied verbatim from Lie algebra theory. The bracket carries a formal parameter, and the position of the translation operator in the tensor factors becomes part of the formula. For example, if $r=\sum_i a_i\otimes b_i$ in a Lie conformal algebra $\A$, the conformal Yang--Baxter expression is
\[
\begin{aligned}
\relax[\![r,r]\!]
=\sum_{i,j}\Big([a_i{}_\mu a_j]\otimes b_i\otimes b_j\big|_{\mu=1\otimes\partial\otimes1}
-a_i\otimes[a_j{}_\mu b_i]\otimes b_j\big|_{\mu=1\otimes1\otimes\partial}
-a_i\otimes a_j\otimes[b_j{}_\mu b_i]\big|_{\mu=1\otimes\partial\otimes1}\Big),
\end{aligned}
\]
and the conformal classical Yang--Baxter equation is imposed modulo the total translation operator. Further work relates this equation to conformal $\mathcal O$-operators, conformal $S$-equations and bialgebra structures of associative, Gel'fand--Dorfman and Poisson conformal algebras \cite{HB1,HB2,HBG,HB3,GR1,GR2,YL,YLZ}.

The question considered here comes from compatibility. In many algebraic problems one studies not a single isolated operation, but a family of linear combinations of two operations. For instance, two Lie brackets are compatible if every linear combination of them is again a Lie bracket. This idea underlies compatible Poisson structures, bi-Hamiltonian systems and several deformation-theoretic constructions \cite{B1,B2,GS1,GS2,GS3,LSB}. Wu and Bai developed the corresponding bialgebra theory for compatible Lie algebras and obtained a compatible form of the classical Yang--Baxter equation in the coboundary case \cite{WB}.

It is natural to ask whether the same picture survives for Lie conformal algebras. Suppose that a $\C[\partial]$-module carries two Lie conformal brackets
$[\cdot_\lambda\cdot]^1$, $[\cdot_\lambda\cdot]^2$
and two conformal cobrackets
$\delta_1$, $\delta_2$.
When does every linear combination $k_1[\cdot_\lambda\cdot]^1+k_2[\cdot_\lambda\cdot]^2$, $k_1\delta_1+k_2\delta_2$
define a Lie conformal bialgebra? The answer is not obtained by a formal translation of the Lie algebra case. Besides the compatible Jacobi and compatible co-Jacobi identities, one has to keep track of conformal sesquilinearity, the conformal dual module and the precise placement of $\partial$ in the tensor substitutions. These are exactly the points where the conformal theory differs from the ordinary Lie algebra theory.

This paper studies this compatibility problem for Lie conformal bialgebras. We first introduce compatible Lie conformal algebras, together with their representations and matched pairs, and then define compatible Lie conformal coalgebras and compatible Lie conformal bialgebras. For $\C[\partial]$-modules that are free of finite rank, we prove that compatible Lie conformal bialgebras, standard compatible conformal Manin triples and matched pairs of compatible Lie conformal algebras are equivalent. This is the compatible analogue of Liberati's correspondence between Lie conformal bialgebras and conformal Manin triples. Along the way we establish a duality between finite compatible Lie conformal algebras and compatible Lie conformal coalgebras through the conformal dual, and we show that the conformal dual of a compatible Lie conformal bialgebra of finite rank is again a compatible Lie conformal bialgebra.

The coboundary case leads to a further question. Let $(\A,[\cdot_\lambda\cdot]^1,[\cdot_\lambda\cdot]^2)$ be a compatible Lie conformal algebra and let $r\in\A\otimes\A$. The same tensor $r$ gives two coboundary cobrackets, one through each bracket, and we determine exactly when the resulting data form a compatible Lie conformal bialgebra: the symmetric part of $r$ must be invariant for both brackets, $r$ must satisfy the conformal Yang--Baxter condition for each bracket, and it must satisfy a compatible conformal Yang--Baxter condition involving both brackets (see Theorem \ref{com YB}). We also introduce the compatible conformal CYBE, and show that every solution satisfies these three conformal Yang--Baxter conditions. Two examples delimit the theory: in one, the conformal Yang--Baxter conditions for the two brackets hold while the compatible conformal Yang--Baxter condition fails, so the compatible condition is an independent requirement; in another, built from a Schr\"odinger--Virasoro type compatible Lie conformal algebra, the three conformal Yang--Baxter conditions all hold although $r$ is not a solution of the compatible conformal CYBE.

The paper is organized as follows. Section~2 recalls Lie conformal algebras, their representations and conformal duals. Section~3 introduces compatible Lie conformal algebras, their representations and matched pairs. Section~4 treats compatible Lie conformal coalgebras and their duality with compatible Lie conformal algebras in the finite free case. Section~5 proves the equivalence among compatible Lie conformal bialgebras, standard compatible conformal Manin triples and matched pairs of compatible Lie conformal algebras. Section~6 is devoted to the coboundary construction and the compatible conformal classical Yang--Baxter equation.

Throughout the paper, all vector spaces, linear maps and tensor products are over $\C$. For a vector space $A$, the space of polynomials in $\lambda$ with coefficients in $A$ is denoted by $A[\lambda]$.


\section{Preliminaries}
\quad In this section, we recall some basic definitions and notations. The material can be found in \cite{DK,DK3,HB1,HB2,KAC,Li}.
\begin{defi}\label{defi1}
{\rm A { conformal algebra} $\mathcal{A}$ is a $\mathbb{C}[\partial]$-module  endowed with a $\mathbb{C}$-bilinear map
\begin{equation*}
\mathcal{A}\otimes \mathcal{A} \rightarrow \mathcal{A} [\lambda], ~~~~~~a\otimes b\mapsto a_\lambda b,
\end{equation*}
satisfying conformal sesquilinearity:
\begin{eqnarray}\label{sesquilinearity1}
(\partial a)_\lambda b=-\lambda a_\lambda b,~~~
a_\lambda \partial b=(\partial+\lambda)a_\lambda b,  ~~ \forall ~~ a, b\in \mathcal{A}.
\end{eqnarray} 
If, in addition, it satisfies associativity:
\begin{eqnarray}\label{ASS}
(a_\lambda b)_{\lambda+\mu}c=a_\lambda (b_\mu c), ~~ \forall ~~ a, b, c\in \mathcal{A},
\end{eqnarray}
then $\mathcal{A}$ is called an {associative conformal algebra}.
}
\end{defi}

A conformal algebra is called {\it finite} if it has finite rank as $\C[\p]$-module.

\begin{defi}\rm 
A { Lie conformal algebra} (LCA) $\A$ is a $\C[\partial]$-module endowed with a $\C$-bilinear map
\[
 \A\otimes \A\rightarrow \A[\lambda],\ \  a\otimes b \mapsto [a_\lambda b],
\]
called the $\la$-bracket, and
satisfying the following axioms (for all $a, b, c\in \A$):
\begin{align}
\mbox{conformal\  sesquilinearity:}~~ [\partial a_\lambda b]&=-\lambda[a_\lambda b],\ \ [ a_\lambda \partial b]=(\partial+\lambda)[a_\lambda b], \label{Lc1}\\
 \ \mbox{skew-symmetry:}~~ {[a_\lambda b]} &= -[b_{-\lambda-\partial}a], \ \label{Lc2}\\
 \mbox{Jacobi \ identity:}~~ {[a_\lambda[b_\mu c]]}&=[[a_\lambda b]_{\lambda+\mu
}c]+[b_\mu[a_\lambda c]]. \label{Lc3}
\end{align}
\end{defi}
\begin{defi} \rm Let $U$ and $V$ be two $\mathbb{C}[\partial]$-modules.
A conformal homomorphism (or {conformal linear map}) from $U$ to $V$ is a $\mathbb{C}$-linear map
$f_\lambda: U\rightarrow V[\lambda]$, such that
\begin{eqnarray}
f_\lambda(\partial u)=(\partial+\lambda)f_\lambda (u), ~~ \forall ~~ u\in U.
\end{eqnarray}
The vector space of all such maps, denoted by ${\rm Chom}(U,V)$, is a $\C[\partial]$-module via
\begin{align}
(\partial f)_\la=-\la f_\la, \ \forall\ \ f_\la\in {\rm Chom}(U,V).
\end{align}
\end{defi}
Define the {\it conformal dual} of a $\C[\partial]$-module $M$ as $M^{*c}={\rm Chom}(M,\C)$, where $\C$ is viewed as the trivial $\C[\partial]$-module, that is,
\begin{align}
 M^{*c}= \{f:M\rightarrow \C[\lambda] \mid  \mbox{$f$ is $\C$-linear and } f_{\lambda}(\partial v)=\lambda f_{\lambda}v, ~~\forall~v\in M  \}.
\end{align}
	If $M$ is a finitely generated $\C[\partial]$-module, then ${\rm Cend}(M):={\rm Chom}(M, M)$
	is an associative conformal algebra via:
	\begin{align*}
	(f_\la g)_\mu v=f_\la(g_{\mu-\la}v), ~~ \mbox{for~all} ~~v\in M, ~ f,g\in {\rm Cend}(M).
	\end{align*}
	Hence ${\rm Cend}(M)$ becomes a Lie conformal algebra, denoted by ${\rm gc}(M)$ and called the {\it general Lie conformal algebra} on $M$, with respect to
	the following $\la$-bracket:
	\begin{align*}
	[f_\la g]_\mu v=f_\la(g_{\mu-\la}v) -g_{\mu-\la}(f_\la v),~~ \mbox{for} ~~v\in M, ~ f,g\in {\rm Cend}(M).
	\end{align*}

\begin{defi}\rm
	Let $U$ and $V$ be two $\mathbb{C}[\partial]$-modules. For a $\mathbb{C}[\p]$-module homomorphism $\phi \colon U \to V$, we denote the dual $\mathbb{C}[\p]$-module homomorphism $\phi^* \colon V^{*c} \to U^{*c}$ given by
	\[
	\phi^*(f)_\lambda (u) = f_\lambda(\phi(u)), \quad f \in V^{*c}, u \in U.
	\]
\end{defi}

If $U$ and $V$ are conformal modules of a Lie conformal algebra $R$, then ${\rm Chom}(U,V )$ also has a conformal $R$-module structure defined by
\begin{align}
(a_\la \varphi)_\mu u= a_\la^V (\varphi_{\mu-\la}u)-\varphi_{\mu-\la}(a_\la^U u),
\end{align}
where $a\in R$, $\varphi\in {\rm Chom}(U,V)$ and $u\in U$. Therefore, one particular case is the contragradient
conformal $R$-module $U^{*c} = {\rm Chom}(U,\C)$, also called the {\it dual module}, where $\C$ is viewed as the trivial $R$-module and $\C[\partial]$-module.

\begin{defi} \rm 
	A representation $(M,\rho)$ of a  Lie conformal algebra $ (\A,{\left [\cdot  _{\lambda}\cdot\right ]}) $ is pair of a finite $\mathbb{C}[\partial]$-module $M$ and a $\mathbb{C}$-linear map $\rho: \A \rightarrow \gc(M)$, such that
	\begin{align}
		\rho\left ( a \right )_\lambda\rho\left (b   \right )_\mu-\rho\left ( b \right )_\mu\rho\left (a  \right )_\lambda&=\left [ \rho\left (  a\right )_\lambda\rho\left ( b \right ) \right ]_{\lambda+\mu}=\rho\left (\left [ a_\lambda b \right ]  \right )_{\lambda+\mu},\\		
		\rho\left (  \partial\left ( a \right )\right )_\lambda&=-\lambda\rho\left ( a \right )_\lambda , \forall a,b\in \A.
	\end{align} That is, $\rho$ is a homomorphism of Lie conformal algebras from $\A$ to $\gc\left(M  \right )$. 
\end{defi}

Let $(M,\rho)$ be a representation of a  Lie conformal algebra $(\A,{\left [\cdot  _{\lambda}\cdot\right ]})$. Then
$\A \oplus M$ has a  $\C[\p]$-module structure given by 
\[
\p(a+v)=\p a+\p v, ~~ \forall ~~ a\in\A,~v\in M,
\]
and   $\A \oplus M$ carries a  Lie conformal algebra structure via
\[
[\!\![(x+u)_\la (y+v)]\!\!] = [x_\la y] + \rho(x)_\la v - \rho(y)_{-\la-\p}u, ~~ \forall ~~ x,y\in\A,~u,v\in M.
\]
 We call $(\A \oplus M, [\!\![\cdot_\la \cdot]\!\!])$ the {\it semi-direct product} of $(\A,[\cdot_\la \cdot])$ and $(M,\rho)$, and we  denote it by $\A \ltimes M$.

Moreover, define $\rho^*:\A\rightarrow \Cend(M^{*c})$  by 
\begin{align}\label{dual representation}
	(\rho^*(a)_\lambda \phi)_\mu v = -\phi_{\mu-\lambda}(\rho(a)_\lambda v),\quad \forall a\in \A,\phi \in M^{*c},v\in M.  
\end{align}
The defining identity for $(M^{*c},\rho^*)$ follows from the representation identity for
$(M,\rho)$ by applying \eqref{dual representation}. This representation is called the
{\it dual representation} of $(M,\rho)$.

For example, the map $\ad: \A\rightarrow \Cend(\A)$ defined by
\[
\ad(a)_\la b = [a_\la b],~~ \forall~ a,b \in \A,
\]
makes $\A$ a representation of itself, called the {\it adjoint representation}. Hence, $(\A^{*c},\ad^{*})$ is also a representation of $\A$, called the {\it coadjoint representation.}

\begin{ex}\label{ex:Vir-rank-one-module}
Let $\Vir=\C[\partial]L$ be the Virasoro Lie conformal algebra with
\[
[L_\lambda L]=(\partial+2\lambda)L.
\]
For $\Delta,\alpha\in\C$, the free rank-one $\C[\partial]$-module
$M_{\Delta,\alpha}=\C[\partial]v$ is a $\Vir$-module defined by
\[
L_\lambda v=(\partial+\Delta\lambda+\alpha)v.
\]
When $\alpha=0$, we call $\Delta$ the conformal weight of $v$.
\end{ex}

\begin{defi}\label{RBLCA}{\rm 
		Let $(\A,[\cdot_\lambda\cdot])$ be a Lie conformal algebra and $\a \in \C$. If $T:\A \rightarrow \A$ is a $ \mathbb{C} [\partial] $-module homomorphism satisfying
		\begin{align}
			[T(a)_\lambda T(b)] = T([a_\lambda T(b)]+[T(a)_\lambda b]+\alpha [a_\lambda b]),~~\forall~ a,b\in \A,
		\end{align}
		then $T$ is called a Rota-Baxter operator of weight $ \alpha $ on $\A$. In this case, $(\A,[\cdot_\lambda\cdot],T)$ is called a Rota-Baxter Lie conformal algebra of weight $ \alpha $.}
\end{defi}

\begin{prop} \label{descendent}
		Let $(\A,[\cdot_\lambda \cdot])$ be a Lie conformal algebra and $T:\A \rightarrow \A$ a Rota-Baxter operator of weight $\alpha$
	on $\A$. Then there is a new $\la$-bracket $[\cdot_\la \cdot]^T$ on $\A$ defined by 
	\begin{align}
		[x_\la y]^T = [T(x)_\lambda y]+[x_\lambda T(y)]+\alpha [x_\lambda y], \quad \forall x,y\in \A.
	\end{align}
	The Lie conformal algebra $ (\A,[\cdot_\la \cdot]^T) $ is called the {\it descendant Lie conformal algebra} and simply denoted by $\A_T$. By the Rota-Baxter identity, $T$ is a Lie conformal algebra homomorphism from $(\A,[\cdot_\la\cdot]^T)$ to $(\A,[\cdot_\la\cdot])$:
	\begin{align}
		T\big([x_\la y]^T\big) = [T(x)_\lambda T(y)], ~~\forall~ x,y\in \A.
	\end{align}
\end{prop}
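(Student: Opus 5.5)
The plan is to verify the three Lie conformal algebra axioms for $[\cdot_\la\cdot]^T$ directly, and then derive the homomorphism identity $T([x_\la y]^T)=[T(x)_\la T(y)]$. The homomorphism identity is immediate: apply $T$ to the definition of $[x_\la y]^T$ and use the Rota-Baxter identity. I will record this first, since it is used in the Jacobi verification.

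Sesquilinearity and skew-symmetry are routine. Since $T$ is a $\C[\p]$-module homomorphism, $T(\p x)=\p T(x)$. Each of the three terms of $[x_\la y]^T$ therefore inherits \eqref{Lc1} from the original bracket. For example, $[T(\p x)_\la y]=[\p T(x)_\la y]=-\la[T(x)_\la y]$ and $[x_\la T(\p y)]=(\p+\la)[x_\la T(y)]$. For skew-symmetry, \eqref{Lc2} gives
\[
[T(x)_\la y]=-[y_{-\la-\p}T(x)],\qquad [x_\la T(y)]=-[T(y)_{-\la-\p}x],
\]
and $\a[x_\la y]=-\a[y_{-\la-\p}x]$. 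Summing these three shows $[x_\la y]^T=-[y_{-\la-\p}x]^T$, because the first two terms swap roles.

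The main step is the Jacobi identity. Expand
\[
[x_\la[y_\mu z]^T]^T=[T(x)_\la[y_\mu z]^T]+[x_\la T([y_\mu z]^T)]+\a[x_\la[y_\mu z]^T].
\]
In the middle term, replace $T([y_\mu z]^T)$ by $[T(y)_\mu T(z)]$ using the homomorphism property. Then expand $[y_\mu z]^T$ fully, so every term has the form $[u_\la[v_\mu w]]$ with each of $u,v,w\in\{x,y,z\}$ either carrying $T$ or not, and with a coefficient that is a power of $\a$. Do the same for $[[x_\la y]^T_{\la+\mu}z]^T$ and $[y_\mu[x_\la z]^T]^T$. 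The terms are then grouped by the pattern of $T$'s: $(T,T,\cdot)$, $(T,\cdot,T)$, $(\cdot,T,T)$, the single-$T$ patterns with an $\a$, and the $\a^2$ pattern. Within each group, the original Jacobi identity \eqref{Lc3} applied to the corresponding decorated triple, for instance $(T(x),T(y),z)$, makes the terms cancel.

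The hardest part is checking that the groups match exactly. The terms containing $[T(x)_\la T(y)]$ or $T(T(\cdot))$-type contributions arise only after using the homomorphism identity, and they must be distributed correctly among the groups. To organize this, I would expand each of the three sides into the sum over patterns, with coefficients $1$, $\a$ or $\a^2$, and check pattern by pattern. The triple-$T$ pattern $(T,T,T)$ never appears on either side. Bookkeeping of the parameters $\la+\mu$ is handled by using \eqref{Lc3} in the form stated, which never requires moving $\p$.
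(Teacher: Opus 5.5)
Your proposal is correct. The paper itself gives no proof of this proposition: it is recalled in the preliminaries, and the only argument offered is the remark that $T([x_\la y]^T)=[T(x)_\la T(y)]$ follows from the Rota--Baxter identity. That remark is exactly your first step.

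Your direct verification supplies the rest soundly. Replace $T([y_\mu z]^T)$, $T([x_\la y]^T)$ and $T([x_\la z]^T)$ by $[T(y)_\mu T(z)]$, $[T(x)_\la T(y)]$ and $[T(x)_\la T(z)]$. The three sides of the Jacobi identity then split into seven groups, each of which is an instance of \eqref{Lc3}:
\begin{itemize}
\item for the triples $(T(x),T(y),z)$, $(T(x),y,T(z))$ and $(x,T(y),T(z))$, with coefficient $1$;
\item for $(T(x),y,z)$, $(x,T(y),z)$ and $(x,y,T(z))$, with coefficient $\a$;
\item for $(x,y,z)$, with coefficient $\a^2$.
\end{itemize}
So the bookkeeping closes exactly as you predicted.

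One small wording point: no $T(T(\cdot))$ terms ever arise. Using the homomorphism identity is precisely what prevents them, so that phrase in your last paragraph can be dropped.
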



\section{Compatible Lie conformal algebras}

In this section, we introduce compatible Lie conformal algebras and their representations, extending the classical compatibility condition to the conformal setting. We then define and study matched pairs of compatible Lie conformal algebras, which serve as a key structure connecting compatible Lie conformal bialgebras and comformal Manin triples.

\subsection{Definitions of compatible Lie conformal algebras  and representations}

By analogy with the case of Lie algebras, the following conclusions hold for Lie conformal algebras:
\begin{prop}\label{compatible prop}
	Let $(\A,[\cdot _\la \cdot]^1)$ and $(\A,[\cdot _\la \cdot]^2)$ be two Lie conformal algebras over the same $\C[\p]$-module $\A$. Then the following conditions are equivalent:
	\begin{itemize}[(c)]
		\item [(i)] $(\A,[\cdot_\la\cdot]^1+[\cdot_\la\cdot]^2)$ is a Lie conformal algebra;
		\item [(ii)] For all $k_1,k_2 \in \C$, $(\A,k_1[\cdot_\la\cdot]^1+k_2[\cdot_\la\cdot]^2)$ is a Lie conformal algebra;
		\item [(iii)] The compatible Jacobi identity holds for all $x,y,z\in \A$:
		\begin{equation}\label{compatible Jacobi}
			\begin{aligned}
				\relax [x_{\lambda} [y_\mu z]^1]^2
				+ [x_\lambda {\left [y_\mu z \right ]}^2]^1
				- [{\left [x_\lambda y \right ]}^{1}_{\lambda+\mu}z]^2
				- [y_\mu{ [x_\lambda z]}^1]^2
				- [{\left [x_\lambda y \right ]}^{2}_{\lambda+\mu}z]^1
				- [y_\mu{\left [x_\lambda z \right ]}^2]^1=0.
			\end{aligned}
		\end{equation}
	\end{itemize}
\end{prop}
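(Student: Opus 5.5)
The proof is a direct expansion. Sesquilinearity and skew-symmetry are linear in the bracket, so for any $k_1,k_2\in\C$ the bracket $[\cdot_\la\cdot]_{k}:=k_1[\cdot_\la\cdot]^1+k_2[\cdot_\la\cdot]^2$ satisfies \eqref{Lc1} and \eqref{Lc2} automatically, because each $[\cdot_\la\cdot]^i$ does. The only axiom that needs attention is the Jacobi identity \eqref{Lc3}.

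Define the Jacobiator
$$J^{ij}(x,y,z):=[x_\la[y_\mu z]^i]^j-[[x_\la y]^i_{\la+\mu}z]^j-[y_\mu[x_\la z]^i]^j .$$
Substituting $[\cdot_\la\cdot]_k$ into \eqref{Lc3} and using bilinearity gives
$$J_k(x,y,z)=k_1^2J^{11}+k_2^2J^{22}+k_1k_2\big(J^{12}+J^{21}\big).$$
Here one must check that the substitution $\la+\mu$ in the middle term commutes with taking linear combinations. It does, since the scalars $k_i$ do not involve $\la$ or $\mu$. The terms $J^{11}$ and $J^{22}$ vanish because $(\A,[\cdot_\la\cdot]^1)$ and $(\A,[\cdot_\la\cdot]^2)$ are Lie conformal algebras. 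Hence
$$J_k=k_1k_2\big(J^{12}+J^{21}\big).$$
By direct comparison, $J^{12}+J^{21}$ is exactly the left-hand side of \eqref{compatible Jacobi}: the outer bracket index is $2$ in the terms built from $[\cdot]^1$ inside and vice versa.

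The equivalences then follow. For (iii)$\Rightarrow$(ii), if \eqref{compatible Jacobi} holds then $J_k=0$ for all $k_1,k_2$, so every combination is a Lie conformal algebra. For (ii)$\Rightarrow$(i), take $k_1=k_2=1$. For (i)$\Rightarrow$(iii), take $k_1=k_2=1$ in the displayed formula; then $0=J_{(1,1)}=J^{12}+J^{21}$, which is \eqref{compatible Jacobi}.

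There is no real obstacle here. The only point requiring care is the bookkeeping: matching the six terms of \eqref{compatible Jacobi} with the cross terms $J^{12}+J^{21}$ and keeping track of which index belongs to the inner bracket and which to the outer one.
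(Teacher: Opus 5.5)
Your proof is correct: the expansion $J_k=k_1^2J^{11}+k_2^2J^{22}+k_1k_2\,(J^{12}+J^{21})$, with $J^{12}+J^{21}$ equal to the left-hand side of the compatible Jacobi identity, is the standard Lie-algebra argument, and the linearity of sesquilinearity and skew-symmetry in the bracket disposes of the other axioms. The paper gives no written proof and only invokes the analogy with the Lie algebra case, so your argument supplies exactly the reasoning that analogy points to.
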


With this we have the definition of compatible Lie conformal algebras.
\begin{defi}\label{compatible def}{\rm
		A compatible Lie conformal algebra is a triple $(\A,[\cdot_\la \cdot]^1,[\cdot _\la \cdot]^2)$, where $(\A,[\cdot_\la \cdot]^1)$ and $(\A,[\cdot _\la \cdot]^2)$ are Lie conformal algebras satisfying any of the three equivalent conditions in Proposition \ref{compatible prop}. }
\end{defi}
If $(\A,[\cdot _\la \cdot]^1,[\cdot_\la \cdot]^2)$ is a compatible LCA, then we say  $(\A,[\cdot_\la \cdot]^1)$ and $(\A,[\cdot _\la \cdot]^2)$ are compatible, and we also say  $(\A,[\cdot_\la \cdot]^1,[\cdot _\la \cdot]^2)$ is the compatible LCA of  $(\A,[\cdot_\la \cdot]^1)$ and $(\A,[\cdot _\la \cdot]^2)$. Sometimes, we will simply denote the compatible LCA of $(\A,[\cdot _\la \cdot]^1, [\cdot _\la \cdot]^2)$ by $\tilde{\A}$.

\begin{ex}(\cite{WY})\label{exW1}
	For $b\in \C$, the $W(b)$ Lie conformal algebra is a free $\C[\p]$-module generated by $L$ and $H$, satisfying the following $\la$-brackets:
	\begin{align*}
		[L_\la L] = (\p + 2\la)L,\quad [L_\la H] = (\p + (1-b)\la)H,\quad [H_\la L] = (-b\p + (1-b)\la)H,\quad [H_\la H] = 0.
	\end{align*}
		It is straightforward to verify that for any $b_1,b_2\in \C$, the Lie conformal algebras $W(b_1)$ and $W(b_2)$ are compatible.
		In particular, $W(0)$ is the Heisenberg--Virasoro conformal algebra, and it is compatible with $W(b)$ for every $b\in\C$.
\end{ex}

\begin{ex}\label{ex-Rp-compatible-LCA}
	Let $R=\C[\p]a\oplus \C[\p]b$.
	Define two $\lambda$-brackets on $R$ by
	\[
		[a_\lambda b]^1=b,\qquad [a_\lambda a]^1=[b_\lambda b]^1=0,
	\]
	and
	\[
		[a_\lambda b]^2=\lambda^2b,\qquad [a_\lambda a]^2=[b_\lambda b]^2=0,
	\]
	extended by skew-symmetry and conformal sesquilinearity. These are the rank two solvable
	Lie conformal algebras $R_p$ \cite[Example 2.18]{Li}, with $p(\lambda)=1$ and
	$p(\lambda)=\lambda^2$, respectively. It remains only to verify the compatible Jacobi identity
	\eqref{compatible Jacobi}. Up to skew-symmetry and conformal sesquilinearity, the only
	nonzero case is $(x,y,z)=(a,a,b)$, for which the left-hand side of \eqref{compatible Jacobi}
	is
	\[
		\lambda^2b+\mu^2b-\mu^2b-\lambda^2b=0.
	\]
	Therefore $(R,[\cdot_\lambda\cdot]^1,[\cdot_\lambda\cdot]^2)$ is a compatible Lie conformal algebra.
\end{ex}

The following example will be used as the algebraic datum for the coboundary construction in
Example \ref{ex-sv-nonzero-mixed-cybe}.

\begin{ex}\label{ex-sv-compatible-LCA}
	Let
	\[
		\A=\C[\p]L\oplus \C[\p]P\oplus \C[\p]Q\oplus \C[\p]M_1\oplus \C[\p]M_2 .
	\]
	The construction is motivated by the Schr\"odinger--Virasoro Lie conformal algebra introduced
	in \cite{SY}. For $s=1,2$, define two $\lambda$-brackets $[\cdot_\la\cdot]^1$ and
	$[\cdot_\la\cdot]^2$ on $\A$ by
	\[
		[L_\la L]^s=(\p+2\la)L,\qquad
		[L_\la P]^s=(\p+\frac{1}{2}\la)P,\qquad
		[L_\la Q]^s=(\p+\frac{1}{2}\la)Q,\qquad
		[L_\la M_i]^s=\p M_i,\quad i=1,2,
	\]
	and
	\[
		[P_\la Q]^1=M_1,\qquad [P_\la Q]^2=M_2.
	\]
	All other brackets between generators are zero, and the brackets are extended by conformal
	sesquilinearity and skew-symmetry. Thus $P$ and $Q$ have conformal weight $\frac12$, while
	$M_1$ and $M_2$ have conformal weight $0$.
	
	Set
	\[
		\mathfrak n=\C[\p]P\oplus\C[\p]Q\oplus\C[\p]M_1\oplus\C[\p]M_2 .
	\]
	Then $\mathfrak n$ is a conformal ideal with respect to each of the two brackets. More precisely,
	\[
		[\mathfrak n_\lambda\mathfrak n]^s=\C[\p]M_s,\qquad
		[\mathfrak n_\lambda\C[\p]M_s]^s=0,\qquad s=1,2.
	\]
	Thus $\mathfrak n$ is two-step nilpotent with respect to each bracket; equivalently,
	\[
		[\mathfrak n_\lambda[\mathfrak n_\mu\mathfrak n]^s]^s=0,\qquad s=1,2.
	\]
	Moreover,
	$\C[\p]M_1\oplus\C[\p]M_2$ is contained in the center of $\mathfrak n$ with respect to both
	brackets, but it is not central in the whole Lie conformal algebra, since the Virasoro
	generator acts nontrivially on $M_1$ and $M_2$.
	
	For each fixed $s$, the brackets involving $L$ define the usual Virasoro action on
	$P,Q,M_1$ and $M_2$, with conformal weights $\frac12,\frac12,0,0$, respectively. Therefore the Jacobi identities
	involving only $L$ and one of these generators hold. Since the brackets among $P,Q,M_1$ and $M_2$ are
	two-step nilpotent, the only remaining Jacobi identity is the one for $L,P,Q$.
	Since $[P_\mu Q]^s=M_s$, we have
	\[
		[L_\lambda [P_\mu Q]^s]^s=\p M_s.
	\]
	On the other hand,
	\begin{align*}
		[[L_\lambda P]^s{}_{\lambda+\mu}Q]^s+[P_\mu [L_\lambda Q]^s]^s
		=(-\frac12\lambda-\mu)M_s+(\p+\mu+\frac12\lambda)M_s
		=\p M_s.
	\end{align*}
	Thus each $[\cdot_\lambda\cdot]^s$ is a Lie conformal bracket.
	
	It remains to verify the compatible Jacobi identity \eqref{compatible Jacobi}. Since the two
	brackets have the same Virasoro action on $P,Q,M_1$ and $M_2$, the triples
	involving only $L$ and one of $P,Q,M_1,M_2$ reduce to the Jacobi identities for the individual
	brackets checked above. If no $L$ occurs, then the two-step nilpotence of the internal part makes every
	term in \eqref{compatible Jacobi} vanish. Thus the only case which is not automatic is the
	triple $(L,P,Q)$, up to skew-symmetry and conformal sesquilinearity.
	
	For $x=L$, $y=P$ and $z=Q$, the first two terms in \eqref{compatible Jacobi} are
	\[
		[L_\lambda [P_\mu Q]^1]^2+[L_\lambda [P_\mu Q]^2]^1
		=\p M_1+\p M_2.
	\]
	The remaining four terms are
	\begin{align*}
		&-[[L_\lambda P]^1{}_{\lambda+\mu}Q]^2-[P_\mu [L_\lambda Q]^1]^2
		-[[L_\lambda P]^2{}_{\lambda+\mu}Q]^1-[P_\mu [L_\lambda Q]^2]^1\\
		=&\left(\frac12\lambda+\mu\right)M_2-\left(\p+\mu+\frac12\lambda\right)M_2
		+\left(\frac12\lambda+\mu\right)M_1-\left(\p+\mu+\frac12\lambda\right)M_1\\
		=&-\p M_1-\p M_2.
	\end{align*}
	Hence the sum of the six terms in \eqref{compatible Jacobi} is zero. Therefore
	$(\A,[\cdot_\lambda\cdot]^1,[\cdot_\lambda\cdot]^2)$ is a compatible Lie conformal algebra.
\end{ex}

\begin{ex}
	Let $(\A,[\cdot_\la\cdot])$ be a Lie conformal algebra, and $T\colon \A\rightarrow\A$ a Rota-Baxter operator of weight $\a\in\C$. Then  $(\A,\{\cdot_\la\cdot\})$ is a Lie conformal algebra by the following $\la$-bracket (cf. Proposition \ref{descendent})
	\begin{align*}
		[x_\lambda y]^T = [Tx_\lambda y]+[x_\lambda Ty]+\a [x_\lambda y], ~~ \forall~x,y\in\A.
	\end{align*}  
	However, $(\A,[\cdot_\la\cdot])$ and $(\A,\{\cdot_\la \cdot\})$ need not be compatible. For example, let
	$\A=\mathrm{Cur}(\mathfrak{sl}_2)$ with basis $\{e,h,f\}$ and nonzero brackets
	\[
		[h_\lambda e]=2e,\qquad [h_\lambda f]=-2f,\qquad [e_\lambda f]=h.
	\]
	Let $T(e)=e$ and $T(h)=T(f)=0$, extended $\C[\p]$-linearly. Since
	$\C e$ and $\C h\oplus \C f$ are Lie subalgebras of $\mathfrak{sl}_2$, $T$ is a Rota-Baxter operator of weight $-1$. The descendant bracket is determined by
	\[
		[h_\lambda f]^T=2f,\qquad [h_\lambda e]^T=[e_\lambda f]^T=0.
	\]
	If the two brackets were compatible, then their sum would define a Lie conformal algebra. But for
	$[\cdot_\lambda\cdot]'=[\cdot_\lambda\cdot]+[\cdot_\lambda\cdot]^T$, one has
	\[
		[h_\lambda [e_\mu f]']'-[[h_\lambda e]'_{\lambda+\mu}f]'-[e_\mu [h_\lambda f]']'
		=[h_\lambda h]'-[2e_{\lambda+\mu}f]'-0=-2h\neq 0.
	\]
	Hence the Jacobi identity fails, and the original bracket and the descendant bracket are not compatible in this example.
\end{ex}

\begin{defi}\label{Compatible Re}\rm
	A representation of a compatible Lie conformal algebra $(\mathcal{A},[\cdot _\la \cdot ]^1,[\cdot _\la \cdot]^2) $ is a triple $(M,\rho,\sigma)$, such that $(M,\rho)$, $(M,\sigma)$ and  $(M,\rho +\sigma)$ are representations of $(\A,[\cdot_\la \cdot]^1)$, $(\A,[\cdot_\la \cdot]^2)$ and $(\A, [\cdot _\la \cdot]^1 + [\cdot_\la \cdot]^2)$, respectively. 
\end{defi}

\begin{theo}\label{prop3.7}
	Let $(\mathcal{A},\left [\cdot _{\lambda}\cdot\right ]^1, [\cdot _\la \cdot]^2) $ be a compatible LCA, let $M$ be a finite $ \mathbb{C}\left [ \partial\right ] $-module and let  $\rho,\sigma:\mathcal{A}\rightarrow \gc\left(M\right)$ be a pair of $\mathbb{C}$-linear maps. Then the following conditions are equivalent:
	\begin{itemize}[(iii)]
		\item[(i)] $\left(M,\rho,\sigma\right)$ is a representation of $ (\mathcal{A},[ \cdot _{\lambda}\cdot]^1,[\cdot _{\lambda}\cdot ]^2)$.
		\item[(ii)]  For any $x,y\in \mathcal{A}$, the following equations hold:
		\begin{align}
			\rho(\p(x))_\la &= -\la\rho(x)_\la,\label{representation1'}\\
			\sigma(\p(x))_\la &= -\la\sigma(x)_\la,\label{representation2'}\\
			\rho( {\left [ x_\lambda y \right ]}^1 )_{\lambda+\mu}&=\rho\left ( x \right )_\lambda\rho\left ( y \right )_\mu - \rho\left ( y \right )_\mu\rho\left ( x \right )_\lambda,\label{representation1}\\
			\sigma(  [x_\lambda y ]^2 )_{\lambda+\mu}&=\sigma\left ( x \right )_\lambda \sigma\left ( y \right )_\mu - \sigma\left ( y \right )_\mu\sigma\left ( x \right )_\lambda,\label{representation2}\\
			\rho( [x_\lambda y ]^2 )_{\lambda+\mu}+\sigma ( [  x_\lambda y ]^1)_{\lambda+\mu}&=\rho\left ( x \right )_\lambda \sigma\left ( y \right )_\mu - \sigma\left ( y \right )_\mu \rho\left ( x \right )_\lambda + \sigma\left ( x \right )_\lambda \rho\left ( y \right )_\mu 
			- \rho\left ( y \right )_\mu\sigma\left ( x \right )_\lambda.\label{representation12}	      
		\end{align}
		\item[(iii)] 
		There is a compatible LCA structure on $\mathcal{A}\oplus M$ defined by
		\begin{align}
			\left [\!\![  \left (x+u  \right )_\lambda\left (  y+v\right )\right]\!\!]^{1}&=\left [ x_\lambda y \right ]^1+\rho(x)_\lambda v-\rho\left ( y \right )_{-\partial-\lambda} u,\\
			\left [\!\![  \left (x+u  \right )_\lambda\left (  y+v\right )\right ]\!\!]^{2}&=[ x_\lambda y ]^2+\sigma(x)_\lambda v-\sigma\left ( y \right )_{-\partial-\lambda}u, 
		\end{align}
		for all $x,y\in \mathcal{A}$ and $u,v\in M$. 
	The resulting  compatible LCA is called the semidirect product of  $(\mathcal{A},[\cdot_\la \cdot]^1,[\cdot_\la \cdot]^2)$ and 
		$(M,\rho,\sigma)$,  and denoted by $\tilde{\A} \ltimes _{\rho,\sigma} M $. 
	\end{itemize}
\end{theo}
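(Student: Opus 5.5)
I will prove (i)$\Leftrightarrow$(ii) and (i)$\Leftrightarrow$(iii). For (i)$\Rightarrow$(ii), the definition gives that $\rho$ and $\sigma$ are representations of $(\A,[\cdot_\la\cdot]^1)$ and $(\A,[\cdot_\la\cdot]^2)$. This yields \eqref{representation1'}, \eqref{representation2'}, \eqref{representation1} and \eqref{representation2} directly. Next I write out that $\rho+\sigma$ is a representation of $[\cdot_\la\cdot]^1+[\cdot_\la\cdot]^2$:
\[
(\rho+\sigma)([x_\la y]^1+[x_\la y]^2)_{\la+\mu}=(\rho+\sigma)(x)_\la(\rho+\sigma)(y)_\mu-(\rho+\sigma)(y)_\mu(\rho+\sigma)(x)_\la .
\]
Expanding both sides bilinearly and subtracting \eqref{representation1} and \eqref{representation2}, the four remaining cross terms are exactly \eqref{representation12}. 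For (ii)$\Rightarrow$(i), I reverse this computation: adding \eqref{representation1}, \eqref{representation2} and \eqref{representation12} gives the bracket condition for $\rho+\sigma$. Summing \eqref{representation1'} and \eqref{representation2'} gives the sesquilinearity condition for $\rho+\sigma$. This step is purely formal.

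For (i)$\Leftrightarrow$(iii), I use the semi-direct product construction recalled in Section 2, together with its converse. I will also use Proposition \ref{compatible prop}, which says that compatibility is equivalent to the sum bracket being a Lie conformal bracket.

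First, $[\![\cdot_\la\cdot]\!]^1$ is exactly the semi-direct product bracket of $(\A,[\cdot_\la\cdot]^1)$ with $(M,\rho)$, and similarly $[\![\cdot_\la\cdot]\!]^2$ with $\sigma$. The sum $[\![\cdot_\la\cdot]\!]^1+[\![\cdot_\la\cdot]\!]^2$ is the semi-direct product bracket of $(\A,[\cdot_\la\cdot]^1+[\cdot_\la\cdot]^2)$ with $\rho+\sigma$. So (i)$\Rightarrow$(iii) follows: all three brackets on $\A\oplus M$ are Lie conformal brackets, and Proposition \ref{compatible prop} (i)$\Rightarrow$ compatibility gives a compatible LCA.

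For the converse I need the standard fact that the semi-direct product formula defines a Lie conformal algebra only if the map is a representation.
\begin{itemize}
\item I restrict the Jacobi identity for $[\![\cdot_\la\cdot]\!]^1$ to triples $(x,y,v)$ with $x,y\in\A$ and $v\in M$. This reads $\rho(x)_\la\rho(y)_\mu v-\rho(y)_\mu\rho(x)_\la v=\rho([x_\la y]^1)_{\la+\mu}v$.
\item The sesquilinearity $[\![\p x_\la v]\!]^1=-\la[\![x_\la v]\!]^1$ gives \eqref{representation1'}.
\end{itemize}
The same argument applied to $[\![\cdot_\la\cdot]\!]^2$ and to the sum bracket (a Lie conformal bracket by Proposition \ref{compatible prop}) shows that $\sigma$ and $\rho+\sigma$ are representations. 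This gives (i).

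The only point requiring care is that $\rho(x)_\la$ lies in $\gc(M)$, so that $[\![x_\la \p v]\!]=(\p+\la)[\![x_\la v]\!]$ holds automatically. I also need skew-symmetry for mixed terms; it holds by the very definition $-\rho(y)_{-\p-\la}u$. I expect no real obstacle, only bookkeeping of the substitution $-\p-\la$ when checking that the mixed Jacobi identities $(x,u,y)$ reduce to the $(x,y,u)$ one via skew-symmetry.
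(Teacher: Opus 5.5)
Your proof is correct. The equivalence (i)$\Leftrightarrow$(ii) is handled as in the paper; the second equivalence takes a different route.

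\textbf{The paper's route.} It proves (ii)$\Leftrightarrow$(iii) by a dictionary between identities:
\eqref{representation1'} and \eqref{representation1} correspond to $[\![\cdot_\la\cdot]\!]^1$ being a Lie conformal bracket on $\A\oplus M$;
\eqref{representation2'} and \eqref{representation2} correspond to $[\![\cdot_\la\cdot]\!]^2$;
\eqref{representation12} corresponds to the compatible Jacobi identity on $\A\oplus M$, essentially its restriction to triples $(x,y,v)$.
The paper leaves this last verification implicit.

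\textbf{Your route.} You prove (i)$\Leftrightarrow$(iii) instead. The key observation is that $[\![\cdot_\la\cdot]\!]^1+[\![\cdot_\la\cdot]\!]^2$ is itself the semidirect-product bracket of $(\A,[\cdot_\la\cdot]^1+[\cdot_\la\cdot]^2)$ with $\rho+\sigma$. So the single-bracket semidirect-product fact, applied three times, together with Proposition \ref{compatible prop}, does all the work. You never have to expand the six-term compatible Jacobi identity on $\A\oplus M$.

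\textbf{Comparison.} The paper's approach gives a finer dictionary: it pins \eqref{representation12} to the mixed compatible Jacobi identity specifically. Yours is more economical and more fully justified. Its only nontrivial input is the standard semidirect-product construction from Section 2, plus the easy converse you supply via the $(x,y,v)$ Jacobi identity and sesquilinearity in the first slot.

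The mixed-triple bookkeeping you mention at the end, with the substitution $-\p-\la$, is only needed in the forward direction. There it is already covered by the semidirect-product construction you cite, so your converse requires no such check.
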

\begin{proof}
	First, we show that (i) and (ii) are equivalent. \eqref{representation1'} and \eqref{representation1} hold if and only if $(M,\rho)$ is a representation of $(\A,[\cdot_\la \cdot]^1)$; \eqref{representation2'} and \eqref{representation2} hold if and only if $(M,\sigma)$ is a representation of $(\A,[\cdot_\la \cdot]^2)$.
		If $\left(M,\rho,\sigma\right)$ is a representation of $ (\mathcal{A},[ \cdot _{\lambda}\cdot]^1,[ \cdot _{\lambda}\cdot ]^2) $, then $(M,\rho + \sigma)$ is a representation of $(\A, [\cdot_\la \cdot]^1+[\cdot _\la \cdot]^2)$, which implies that \eqref{representation12} holds. Conversely, for any $x,y\in \mathcal{A} $, by \eqref{representation1'}-\eqref{representation12}, we have
	\begin{align*}
		&(\rho + \sigma )( [ x_{\lambda} y ]^1+ [ x_{\lambda} y ]^2)_{\lambda+\mu}\\
		&=\rho([x_{\lambda} y]^1)_{\lambda+\mu}+ \sigma([x_\lambda y]^1)_{\lambda+\mu}+\rho([x_\lambda y]^2)_{\lambda+\mu} + \sigma([x_\lambda y]^2)_{\lambda+\mu}\\
		&=\rho(x)_\lambda \rho(y)_\mu - \rho(y)_\mu \rho(x)_\lambda +\rho(x)_\lambda \sigma(y)_\mu - \sigma(y)_\mu\rho(x)_\lambda + \sigma(x)_\lambda\rho(y)_\mu \\
		&\quad - \rho(y)_\mu \sigma(x)_\lambda
		+ \sigma(x)_\lambda \sigma(y)_\mu - \sigma(y)_\mu \sigma(x)_\lambda\\
		&=(\rho + \sigma)(x)_\lambda (\rho+\sigma)(y)_\mu - (\rho + \sigma)(y)_\mu (\rho + \sigma)(x)_\lambda,
	\end{align*}
	and
	\begin{align*}
		(\rho+\sigma)(\p(x))_\la = \rho(\p(x))_\la + \sigma(\p(x))_\la = -\la \rho(x)_\la - \la \sigma(x)_\la= -\la(\rho+\sigma)(x)_\la.
	\end{align*}    
	Hence $(M,\rho + \sigma)$ is a representation of the Lie conformal algebra $ (\mathcal{A}, [ \cdot_{\lambda}\cdot]^1+ [\cdot_{\lambda}\cdot]^2)$. Therefore, $\left(M,\rho,\sigma\right)$ is a representation of $ (\mathcal{A},[ \cdot _{\lambda}\cdot]^1, [\cdot _{\lambda}\cdot ]^2) $.
	
	Now, we prove that (ii) and (iii) are equivalent. In fact, we have the following correspondences:
	\eqref{representation1} and \eqref{representation1'} hold if and only if $(\mathcal{A} \oplus M,\left[\!\![\cdot_\lambda\cdot]\!\!\right]^1)$ is a Lie conformal algebra;
	\eqref{representation2} and \eqref{representation2'} hold if and only if $(\mathcal{A} \oplus M,\left[\!\![\cdot_\lambda\cdot\right]\!\!]^2)$ is a Lie conformal algebra;
	\eqref{representation12} holds if and only if the compatible Jacobi identity of $(\mathcal{A} \oplus M,\left[\!\![\cdot_\lambda\cdot]\!\!\right]^1)$ and $(\mathcal{A} \oplus M,\left[\!\![\cdot_\lambda\cdot]\!\!\right]^2)$ holds.
	\end{proof}

\begin{prop}
	Let $M$ be a finite $\C[\partial]$-module. Let $(M,\rho,\sigma)$ be a representation of a compatible LCA $(\A,[\cdot_\la\cdot]^1,[\cdot_\la\cdot]^2)$, and let $(M^{*c},\rho^*)$ and $(M^{*c},\sigma^*)$ be the dual
	representation of $(M,\rho)$ and $(M,\sigma)$, respectively.  Then $(M^{*c},\rho^*,\sigma^*)$ is a representation of $(\A,[\cdot_\la\cdot]^1,[\cdot_\la\cdot]^2)$, which is called the dual representation of $(M,\rho,\sigma)$.  In particular, $(\A^{*c},\ad^*_{1},\ad^*_{2})$ is a dual  representation of the adjoint representation $(\A, \ad_{1},\ad_{2})$ of $(\A,[\cdot_\la\cdot]^1,[\cdot_\la\cdot]^2)$, where the adjoint actions are defined by $\ad_1(a)_\la b = [a_\la b]^1 $, and  $\ad_2(a)_\la b = [a_\la b]^2$, for all $a,b\in \A$.
\end{prop}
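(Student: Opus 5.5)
By Definition \ref{Compatible Re}, it suffices to show three things. First, $(M^{*c},\rho^*)$ is a representation of $(\A,[\cdot_\la\cdot]^1)$. Second, $(M^{*c},\sigma^*)$ is a representation of $(\A,[\cdot_\la\cdot]^2)$. Third, $(M^{*c},\rho^*+\sigma^*)$ is a representation of $(\A,[\cdot_\la\cdot]^1+[\cdot_\la\cdot]^2)$.

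The first two are the classical statement recalled in Section 2: the dual of a representation of a Lie conformal algebra, defined by \eqref{dual representation}, is again a representation. For the third, I will use the linearity of the dualization $\rho\mapsto\rho^*$ in \eqref{dual representation}. For $a\in\A$, $\phi\in M^{*c}$ and $v\in M$,
\[
((\rho^*+\sigma^*)(a)_\la\phi)_\mu v=-\phi_{\mu-\la}(\rho(a)_\la v)-\phi_{\mu-\la}(\sigma(a)_\la v)=-\phi_{\mu-\la}((\rho+\sigma)(a)_\la v)=((\rho+\sigma)^*(a)_\la\phi)_\mu v.
\]
Hence $\rho^*+\sigma^*=(\rho+\sigma)^*$. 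By hypothesis, $(M,\rho+\sigma)$ is a representation of the Lie conformal algebra $(\A,[\cdot_\la\cdot]^1+[\cdot_\la\cdot]^2)$; this algebra is Lie by Proposition \ref{compatible prop}. Applying the same Lie-conformal dual-representation fact to it shows that $(M^{*c},(\rho+\sigma)^*)$ is a representation. This completes the argument without touching \eqref{representation12} directly.

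The step I expect to need the most care is the general fact itself: if one insists on verifying it rather than citing it, one must check that $\rho^*(a)_\la\phi$ lies in $M^{*c}$ and that the map is conformal sesquilinear. One also needs that $M^{*c}$ is a finite $\C[\p]$-module, so that $\gc(M^{*c})$ makes sense; this follows because $M$ is finite. For the homomorphism identity I would evaluate both $\rho^*(x)_\la\rho^*(y)_\mu\phi-\rho^*(y)_\mu\rho^*(x)_\la\phi$ and $\rho^*([x_\la y])_{\la+\mu}\phi$ on $v\in M$ at parameter $\nu$. Using \eqref{dual representation} twice gives $\phi_{\nu-\la-\mu}\big(\rho(y)_\mu\rho(x)_\la v-\rho(x)_\la\rho(y)_\mu v\big)$ on the left and $-\phi_{\nu-\la-\mu}(\rho([x_\la y])_{\la+\mu}v)$ on the right. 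The two sides agree by the representation identity for $\rho$; the relation $\rho^*(\p x)_\la=-\la\rho^*(x)_\la$ is immediate.

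Finally, the particular case follows by taking $M=\A$ with $\rho=\ad_1$ and $\sigma=\ad_2$. The triple $(\A,\ad_1,\ad_2)$ is a representation because each $\ad_s$ is the adjoint representation of $[\cdot_\la\cdot]^s$ by the Jacobi identity \eqref{Lc3}. Moreover, $\ad_1+\ad_2$ is the adjoint representation of the sum bracket, which is Lie by Proposition \ref{compatible prop}. This uses that $\A$ is finite, as implicitly assumed.
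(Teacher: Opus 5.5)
Your proposal is correct and follows the paper's proof essentially step for step. You cite the Lie conformal dual-representation fact for $\rho$ and $\sigma$ separately, then use $(\rho+\sigma)^*=\rho^*+\sigma^*$ together with the same fact for the sum bracket, and deduce the adjoint case. Your explicit check of the homomorphism identity and of sesquilinearity for $\rho^*$, with correct signs, is a valid supplement that the paper only cites.
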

\begin{proof}
Assume that $(M,\rho,\sigma)$ is a representation of the compatible Lie conformal algebra
$(\A,[\cdot_{\la}\cdot]^1,[\cdot_{\la}\cdot]^2)$.  By Definition \ref{Compatible Re} this means that
$(M,\rho)$ is a representation of $(\A,[\cdot_{\la}\cdot]^1)$,
$(M,\sigma)$ is a representation of $(\A,[\cdot_{\la}\cdot]^2)$, and
$(M,\rho+\sigma)$ is a representation of $(\A,[\cdot_{\la}\cdot]^1+[\cdot_{\la}\cdot]^2)$.
Dual representation theory for Lie conformal algebras immediately yields that
$(M^{*c},\rho^{*})$ and $(M^{*c},\sigma^{*})$ are representations of
$(\A,[\cdot_{\la}\cdot]^1)$ and $(\A,[\cdot_{\la}\cdot]^2)$ respectively.

It remains to verify that $(M^{*c},\rho^{*}+\sigma^{*})$ is a representation of
$(\A,[\cdot_{\la}\cdot]^1+[\cdot_{\la}\cdot]^2)$.  We first show that duality is linear:
$(\rho+\sigma)^{*}=\rho^{*}+\sigma^{*}$.
Indeed, for any $a\in\A$, $\varphi\in M^{*c}$, $v\in M$,
\[
\begin{aligned}
\big((\rho+\sigma)^{*}(a)_{\la}\varphi\big)_{\mu}v
&=-\varphi_{\mu-\la}\big((\rho+\sigma)(a)_{\la}v\big)\\
&=-\varphi_{\mu-\la}\big(\rho(a)_{\la}v+\sigma(a)_{\la}v\big)\\
&=-\varphi_{\mu-\la}(\rho(a)_{\la}v)-\varphi_{\mu-\la}(\sigma(a)_{\la}v)\\
&=(\rho^{*}(a)_{\la}\varphi)_{\mu}v+(\sigma^{*}(a)_{\la}\varphi)_{\mu}v.
\end{aligned}
\]
Since $(M,\rho+\sigma)$ is a representation of $(\A,[\cdot_\la\cdot]^1+[\cdot_\la\cdot]^2)$, its dual
$(M^{*c},(\rho+\sigma)^{*})$ is again a representation of the same algebra. The equality
$(\rho+\sigma)^{*}=\rho^{*}+\sigma^{*}$ shows that $(M^{*c},\rho^{*}+\sigma^{*})$
 is a representation of $(\A,[\cdot_{\la}\cdot]^1+[\cdot_{\la}\cdot]^2)$.
Therefore,
$(M^{*c},\rho^{*},\sigma^{*})$ is a representation of the compatible
Lie conformal algebra $(\A,[\cdot_{\la}\cdot]^1,[\cdot_{\la}\cdot]^2)$.

Finally, taking the adjoint representation $\ad_1(a)_{\la}b=[a_{\la}b]^1$,
$\ad_2(a)_{\la}b=[a_{\la}b]^2$ on $\A$, one immediately checks that
$(\A,\ad_1,\ad_2)$ is a representation of the compatible Lie conformal
algebra.  Applying the statement just proved, its dual
$(\A^{*c},\ad_1^{*},\ad_2^{*})$ becomes a dual representation.
\end{proof}

\subsection{Matched pairs of compatible Lie conformal algebras}

In this subsection, all Lie conformal algebra are assumed to be finitely generated as $\C[\p]$-modules. 
Let $(\A,[\cdot_\la\cdot])$ and $(\mathcal{B},\{\cdot_\lambda\cdot\})$ be two Lie conformal algebras. Suppose that \( \rho : \A \rightarrow Cend(\B) \) and \( \sigma : \B \rightarrow Cend(\A) \) are two representations, satisfying the following conditions:
\begin{align}\label{matched pair 1}
	&\rho(x)_{\lambda}\{a_{\mu}b\} - \{(\rho(x)_{\lambda}a)_{\lambda+\mu}b\} - \{a_{\mu}(\rho(x)_{\lambda}b)\} 
	+ \rho(\sigma(a)_{-\lambda-\partial }x)_{\lambda+\mu}b - \rho(\sigma(b)_{-\lambda-\partial }x)_{-\mu-\partial }a = 0,\\
	\label{matched pair 2}
	& \sigma(a)_{-\lambda-\mu-\partial}[x_\lambda y] - [x_{\lambda}(\sigma(a)_{-\mu-\partial }y)] + [y_{\mu}(\sigma(a)_{-\lambda-\partial }x)]
	+ \sigma(\rho(x)_{\lambda}a)_{-\mu-\partial }y - \sigma(\rho(y)_{\mu}a)_{-\lambda-\partial }x = 0,
\end{align}
for any \( x, y \in \A \) and \( a, b \in \B \). Then there is a Lie conformal algebra structure on \( \A \oplus \B \) via
\begin{align}
	[\!\![(x+a)_{\lambda}(y+b)]\!\!] &= \left( [x_{\lambda}y] + \sigma(a)_{\lambda}y - \sigma(b)_{-\lambda-\partial }x \right) + \left( \{a_{\lambda}b\} + \rho(x)_{\lambda}b - \rho(y)_{-\lambda-\partial }a \right),\label{bowtie}
\end{align}
for any \( x, y \in \A \) and \( a, b \in \B \). We denote this Lie conformal algebra by \( \A \bowtie \B \). The quadruple \((\A, \B;\rho, \sigma)\) satisfying the above conditions is called a {\it matched pair of Lie conformal algebras} (see \cite{HL3}).

Now we define matched pairs of compatible Lie conformal algebras. 

\begin{defi}\label{CMPD}{\rm
		Let $(\mathcal{A},[\cdot_\lambda\cdot]^{1},[\cdot_\lambda\cdot]^{2})$ and  $(\mathcal{B},\{\cdot_\lambda\cdot\}^{1},\{\cdot_\lambda\cdot\}^{2})$ be two compatible LCAs. Let $\rho_{\mathcal{A}},\sigma_{\mathcal{A}}: \mathcal{A} \rightarrow Cend(\mathcal{B})$ and $\rho_{\mathcal{B}},\sigma_{\mathcal{B}}$ : $\mathcal{B} \rightarrow Cend(\mathcal{A})$ be four $\C$-linear maps. The 6-tuple 
		$(\mathcal{A},\mathcal{B};\rho_{\mathcal{A}},\sigma_{\mathcal{A}},\rho_{\mathcal{B}},\sigma_{\mathcal{B}})$ is called a  matched pair of compatible LCAs, if $\big((\A,[\cdot_\la\cdot]^1),(\B,\{\cdot_\lambda\cdot\}^{1});\rho_{\A},\rho_{\B}\big)$,   $\big((\A,[\cdot_\la\cdot]^2),(\B,\{\cdot_\lambda\cdot\}^{2});\sigma_{\A},\sigma_{\B}\big)$ and $\big((\A,[\cdot_\la \cdot]^1 + [\cdot_\la \cdot]^2),(\B,\{\cdot_\lambda\cdot\}^{1} + \{\cdot_\lambda\cdot\}^{2});\rho_{\A}+\sigma_{\A},\rho_{\B}+\sigma_{\B}\big)$
		are matched pairs of LCAs. }
\end{defi}

\begin{theo}\label{MP compatible}
	Let $\tilde{\A}=(\mathcal{A},[\cdot_\lambda\cdot]^{1},[\cdot_\lambda\cdot]^{2})$, $\tilde{\B}=(\mathcal{B},\{\cdot_\lambda\cdot\}^{1},\{\cdot_\lambda\cdot\}^{2})$ be two compatible LCAs. 
	Let $\rho_{\mathcal{A}},\sigma_{\mathcal{A}}$ : $\mathcal{A} \rightarrow gc(\mathcal{B})$ and $\rho_{\mathcal{B}},\sigma_{\mathcal{B}}$ : $\mathcal{B} \rightarrow gc(\mathcal{A})$ be four $\C$-linear maps. 
	Then the following conditions are equivalent:
	\begin{itemize}[(iii)]
		\item[(i)]$(\mathcal{A},\mathcal{B};\rho_{\mathcal{A}},\sigma_{\mathcal{A}},\rho_{\mathcal{B}},\sigma_{\mathcal{B}})$ is a matched pair of compatible LCAs.
		\item [(ii)]
				$(\B,\rho_{\mathcal{A}},\sigma_{\mathcal{A}})$
			 is a representation of $\tilde{\A}$ and $(\A,\rho_{\mathcal{B}},\sigma_{\mathcal{B}})$is a representation of $\tilde{\B}$, and the following equations hold for all $x,y\in \mathcal{A}$, $a,b\in \mathcal{B}$:
				\begin{align}
					&\rho_{\A}(x)_{\lambda}\{a_\mu b\}^{1}
					-\{(\rho_{\A}(x)_{\lambda}a)_{\lambda+\mu}b\}^{1}
					-\{a_\mu(\rho_{\A}(x)_{\lambda}b)\}^{1}\nonumber\\
					&\quad+\rho_{\A}(\rho_{\B}(a)_{-\lambda-\partial}x)_{\lambda+\mu}b
					-\rho_{\A}(\rho_{\B}(b)_{-\lambda-\partial}x)_{-\mu-\partial}a=0;\label{mp1}\\[0.25em]
					&\rho_{\B}(a)_{-\lambda-\mu-\partial}[x_\lambda y]^1
					-[x_{\lambda}(\rho_{\B}(a)_{-\mu-\partial}y)]^1
					+[y_{\mu}(\rho_{\B}(a)_{-\lambda-\partial}x)]^1\nonumber\\
					&\quad+\rho_{\B}(\rho_{\A}(x)_{\lambda}a)_{-\mu-\partial}y
					-\rho_{\B}(\rho_{\A}(y)_{\mu}a)_{-\lambda-\partial}x=0;\label{mp2}\\[0.25em]
					&\sigma_{\A}(x)_{\lambda}\{a_\mu b\}^{2}
					-\{(\sigma_{\A}(x)_{\lambda}a)_{\lambda+\mu}b\}^{2}
					-\{a_\mu(\sigma_{\A}(x)_{\lambda}b)\}^{2}\nonumber\\
					&\quad+\sigma_{\A}(\sigma_{\B}(a)_{-\lambda-\partial}x)_{\lambda+\mu}b
					-\sigma_{\A}(\sigma_{\B}(b)_{-\lambda-\partial}x)_{-\mu-\partial}a=0;\label{mp3}\\[0.25em]
					&\sigma_{\B}(a)_{-\lambda-\mu-\partial}[x_\lambda y]^2
					-[x_{\lambda}(\sigma_{\B}(a)_{-\mu-\partial}y)]^2
					+[y_{\mu}(\sigma_{\B}(a)_{-\lambda-\partial}x)]^2\nonumber\\
					&\quad+\sigma_{\B}(\sigma_{\A}(x)_{\lambda}a)_{-\mu-\partial}y
					-\sigma_{\B}(\sigma_{\A}(y)_{\mu}a)_{-\lambda-\partial}x=0;\label{mp4}\\[0.25em]
					&\sigma_{\A}(x)_{\lambda}\{a_\mu b\}^1
					-\{a_\mu(\sigma_{\A}(x)_{\lambda}b)\}^{1}
					-\{(\sigma_{\A}(x)_{\lambda}a)_{\lambda+\mu}b\}^{1}
					+\sigma_{\A}(\rho_{\B}(a)_{-\lambda-\partial}x)_{\lambda+\mu}b\nonumber\\
					&\quad-\sigma_{\A}(\rho_{\B}(b)_{-\lambda-\partial}x)_{-\mu-\partial}a
					+\rho_{\A}(x)_{\lambda}\{a_\mu b\}^{2}
					-\{a_\mu(\rho_{\A}(x)_{\lambda}b)\}^{2}
					-\{(\rho_{\A}(x)_{\lambda}a)_{\lambda+\mu}b\}^{2}\nonumber\\
					&\quad+\rho_{\A}(\sigma_{\B}(a)_{-\lambda-\partial}x)_{\lambda+\mu}b
					-\rho_{\A}(\sigma_{\B}(b)_{-\lambda-\partial}x)_{-\mu-\partial}a=0;\label{compatible matched pair1}\\[0.25em]
					&\sigma_{\B}(a)_{-\lambda-\mu-\partial}[x_\lambda y]^{1}
					-[x_\lambda(\sigma_{\B}(a)_{-\mu-\partial}y)]^{1}
					+[y_\mu(\sigma_{\B}(a)_{-\lambda-\partial}x)]^{1}
					+\sigma_{\B}(\rho_{\A}(x)_\lambda a)_{-\mu-\partial}y\nonumber\\
					&\quad-\sigma_{\B}(\rho_{\A}(y)_\mu a)_{-\lambda-\partial}x
					+\rho_{\B}(a)_{-\lambda-\mu-\partial}[x_\lambda y]^{2}
					-[x_\lambda(\rho_{\B}(a)_{-\mu-\partial}y)]^{2}
					+[y_\mu(\rho_{\B}(a)_{-\lambda-\partial}x)]^{2}\nonumber\\
					&\quad+\rho_{\B}(\sigma_{\A}(x)_\lambda a)_{-\mu-\partial}y
					-\rho_{\B}(\sigma_{\A}(y)_\mu a)_{-\lambda-\partial}x=0.\label{compatible matched pair2}
				\end{align}
		\item[(iii)]  There is a compatible Lie conformal algebra structure on $\mathcal{A} \oplus \mathcal{B}$ defined by
		\begin{equation}
			\begin{aligned}
				\relax {[\!\![{(x+a)}_{\lambda} {(y+b)} ]\!\!]}^1 &= [x_\lambda y]^{1} 
				+ \rho_{\mathcal{B}}(a)_\lambda y 
				- \rho_{\mathcal{B}}(b)_{-\lambda - \partial}x 
				+ \{a_\lambda b\}^{1} 
				+ \rho_{\mathcal{A}}(x)_\lambda b 
				- \rho_{\mathcal{A}}(y)_{-\lambda - \partial}a,\label{A bowtie B 1}
			\end{aligned}
		\end{equation}
		\begin{equation}
			\begin{aligned}
				\relax {[\!\![ {(x+a)}_{\lambda} {(y+b)} ]\!\!]}^2 &= [x_\lambda y]^{2}
				+ \sigma_{\mathcal{B}}(a)_\lambda y 
				- \sigma_{\mathcal{B}}(b)_{-\lambda - \partial}x 
				+ \{a_\lambda b\}^{2} 
				+ \sigma_{\mathcal{A}}(x)_\lambda b 
				- \sigma_{\mathcal{A}}(y)_{-\lambda - \partial}a,\label{A bowtie B 2}
			\end{aligned}
		\end{equation}
		for any $x,y\in \mathcal{A}$, $a,b\in \mathcal{B}$.  We denote this compatible Lie conformal algebra by $\tilde{\A}\bowtie_{\rho_{\mathcal{B}},\sigma_{\mathcal{B}}}^{\rho_{\mathcal{A}},\sigma_{\mathcal{A}}}\tilde{\B} $ or simply $\tilde{\A}\bowtie \tilde{\B}$.
	\end{itemize} 
\end{theo}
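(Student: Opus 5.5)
The plan is to follow the same two-stage pattern as the proof of Theorem \ref{prop3.7}: first prove (i)$\Leftrightarrow$(ii) by unpacking Definition \ref{CMPD}, then prove (ii)$\Leftrightarrow$(iii) (or (i)$\Leftrightarrow$(iii) directly) using the known correspondence between matched pairs of LCAs and the bicrossed product bracket \eqref{bowtie}. Throughout I use that this correspondence is an equivalence: a quadruple $(\A,\B;\rho,\sigma)$ of Lie conformal algebras with $\C$-linear maps is a matched pair if and only if \eqref{bowtie} defines a Lie conformal algebra on $\A\oplus\B$, as in \cite{HL3}.

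\textbf{Step 1: (i)$\Leftrightarrow$(ii).} By Definition \ref{CMPD}, (i) consists of three matched pair conditions. I will unpack each one.

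The first matched pair, $((\A,[\cdot_\la\cdot]^1),(\B,\{\cdot_\la\cdot\}^1);\rho_\A,\rho_\B)$, means that $\rho_\A$ and $\rho_\B$ are representations for the first brackets and that \eqref{mp1}, \eqref{mp2} hold.

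The second matched pair, for the second brackets, means that $\sigma_\A$ and $\sigma_\B$ are representations for those brackets and that \eqref{mp3}, \eqref{mp4} hold.

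The third matched pair is for the sum brackets with $\rho_\A+\sigma_\A$ and $\rho_\B+\sigma_\B$. First, its representation conditions say that $\rho_\A+\sigma_\A$ and $\rho_\B+\sigma_\B$ are representations of the sum algebras. Together with the first two matched pairs, this is exactly the statement that $(\B,\rho_\A,\sigma_\A)$ and $(\A,\rho_\B,\sigma_\B)$ are representations of $\tilde\A$ and $\tilde\B$, by Definition \ref{Compatible Re} and Theorem \ref{prop3.7}. Second, consider the identities \eqref{matched pair 1}--\eqref{matched pair 2} for the sum data. Each term is bilinear in the pair (bracket, action), so expanding gives a sum of four parts: the pure-$1$ part \eqref{mp1}/\eqref{mp2}, the pure-$2$ part \eqref{mp3}/\eqref{mp4}, and the mixed terms, which are precisely \eqref{compatible matched pair1}/\eqref{compatible matched pair2}.

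Given the first two matched pairs, the third one is therefore equivalent to the mixed identities. This proves both directions. The only care needed is to check that the substitutions $-\la-\p$ and $-\mu-\p$ are the same in every expanded term, which holds because they are identical in \eqref{matched pair 1}, \eqref{matched pair 2} for each bracket.

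\textbf{Step 2: (i)$\Leftrightarrow$(iii).} Note that the brackets \eqref{A bowtie B 1} and \eqref{A bowtie B 2} are the bicrossed product brackets \eqref{bowtie} for the first and second data respectively. Moreover, their sum is the bicrossed product bracket for the sum data, because \eqref{bowtie} is linear in $([\cdot_\la\cdot],\{\cdot_\la\cdot\},\rho,\sigma)$ jointly.

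By the matched pair correspondence, each of the three matched pairs in (i) holds if and only if the corresponding bracket on $\A\oplus\B$ is a Lie conformal algebra. These are $[\!\![\cdot_\la\cdot]\!\!]^1$, $[\!\![\cdot_\la\cdot]\!\!]^2$, and their sum. By condition (i) of Proposition \ref{compatible prop}, these three brackets being Lie conformal algebras is exactly the statement that $(\A\oplus\B,[\!\![\cdot_\la\cdot]\!\!]^1,[\!\![\cdot_\la\cdot]\!\!]^2)$ is a compatible LCA. Hence (i)$\Leftrightarrow$(iii).

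\textbf{Main obstacle.} The only real difficulty is the bookkeeping in Step 1: verifying that the bilinear expansion of the sum-data identities produces exactly the ten-term expressions \eqref{compatible matched pair1}--\eqref{compatible matched pair2}, with the right pairing of $\sigma$ against $\{\cdot\}^1$ and of $\rho$ against $\{\cdot\}^2$. I would organize this by labelling every term of \eqref{matched pair 1} by the index of its bracket or action and of its inner action, then collecting the terms by their multi-index.
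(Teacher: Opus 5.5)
Your proof is correct. Step 1 is essentially the paper's argument for (i)$\Leftrightarrow$(ii); Step 2 takes a different route from the paper's.

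The paper proves (ii)$\Leftrightarrow$(iii). It notes that $[\![\cdot_\la\cdot]\!]^1$ and $[\![\cdot_\la\cdot]\!]^2$ are Lie conformal brackets exactly when \eqref{mp1}--\eqref{mp4} hold. It then splits the compatible Jacobi expression $R(u,v,w)$ on $\A\oplus\B$ into the four cases $R(x,y,z)$, $R(a,b,c)$, $R(a,b,x)$, $R(x,y,a)$. The first two are matched with compatibility of $\tilde{\A}$ and $\tilde{\B}$. The last two are matched with the compatible representation conditions together with \eqref{compatible matched pair1} and \eqref{compatible matched pair2}.

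You instead prove (i)$\Leftrightarrow$(iii) directly. You observe that \eqref{bowtie} is jointly linear in the data, so $[\![\cdot_\la\cdot]\!]^1+[\![\cdot_\la\cdot]\!]^2$ is the bicrossed-product bracket of the third datum in Definition \ref{CMPD}. Three applications of the matched-pair correspondence, plus condition (i) of Proposition \ref{compatible prop}, then finish the argument. This avoids any computation with the compatible Jacobi identity: all the computational content sits in the bilinear expansion of Step 1.

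Your route does rely on the converse direction of the correspondence in Section 3.2. That converse says that if \eqref{bowtie} is a Lie conformal bracket, then $\rho,\sigma$ are representations and \eqref{matched pair 1}, \eqref{matched pair 2} hold. The paper states only the forward direction. The converse is standard: split the Jacobi identities for $(x,y,b)$ and $(x,a,b)$ into $\A$- and $\B$-components, and use sesquilinearity for the $\p$-conditions. The paper also uses it implicitly, but you should say so explicitly.

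In exchange, the paper's case analysis shows which component of the compatible Jacobi identity on $\A\oplus\B$ carries each mixed identity. That is the form invoked later in Proposition \ref{prop:structure-constant-cocycle} and Theorem \ref{thm:compatible-matched-pair-bialgebra}.
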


\begin{proof}
	First, we show (i) and (ii) are equivalent. 
	\eqref{mp1} and \eqref{mp2} hold if and only if $((\A,[\cdot_\la\cdot]^1),(\B,\{\cdot_\lambda\cdot\}^{1});\rho_{\A},\rho_{\B})$ is a matched pair of LCAs, while \eqref{mp3} and \eqref{mp4} hold if and only if $((\A,[\cdot_\la\cdot]^2),(\B,\{\cdot_\lambda\cdot\}^{2});\sigma_{\A},\sigma_{\B})$ is a matched pair of LCAs. 
	
	Let $(\mathcal{A},\mathcal{B};\rho_{\mathcal{A}},\sigma_{\mathcal{A}},\rho_{\B},\sigma_{\B})$ be a matched pair of compatible LCAs. By Definition \ref{CMPD},
	$(\B,\rho_{\mathcal{A}})$, $(\B,\sigma_{\mathcal{A}})$ and $(\B,\rho_{\mathcal{A}}+\sigma_{\mathcal{A}})$ are representations of
	$(\A,[\cdot_\lambda\cdot]^1)$, $(\A,[\cdot_\lambda\cdot]^2)$ and
	$(\A,[\cdot_\lambda\cdot]^1+[\cdot_\lambda\cdot]^2)$, respectively. Hence
	$(\B,\rho_{\mathcal{A}},\sigma_{\mathcal{A}})$ is a representation of $(\mathcal{A},[\cdot_\lambda\cdot]^{1},[\cdot_\lambda\cdot]^{2})$. The same argument gives that
	$(\A,\rho_{\mathcal{B}},\sigma_{\mathcal{B}})$ is a representation of $(\mathcal{B},\{\cdot_\lambda\cdot\}^{1},\{\cdot_\lambda\cdot\}^{2})$.  We also have  $((\A,[\cdot_\la \cdot]^1 + [\cdot_\la \cdot]^2),(\B,\{\cdot_\lambda\cdot\}^{1} + \{\cdot_\lambda\cdot\}^{2});\rho_{\A}+\sigma_{\A},\rho_{\B}+\sigma_{\B})$
	is a matched pair of LCAs. This implies that \eqref{compatible matched pair1} and \eqref{compatible matched pair2} hold.
	
	Conversely, suppose that \eqref{mp1}-\eqref{compatible matched pair2} holds. For all $x,y\in \mathcal{A}$, $a,b\in \mathcal{B}$, we have 
	\begin{align*}
		&(\rho_{\mathcal{A}}+\sigma_{\mathcal{A}})(x)_\lambda(\{a_\mu b\}^{1}+\{a_\mu b\}^{2})\\
		&= \rho_{\mathcal{A}}(x)_\lambda\{a_\mu b\}^{1} + \sigma_{\mathcal{A}}(x)_\lambda\{a_\mu b\}^{2} 
		+ \rho_{\mathcal{A}}(x)_\lambda\{a_\mu b\}^{2} + \sigma_{\mathcal{A}}(x)_\lambda\{a_\mu b\}^{1} \\
		&= \{(\rho_{\mathcal{A}}(x)_{\lambda}a)_{\lambda+\mu}b\}^{1} + \{a_{\mu}(\rho_{\mathcal{A}}(x)_{\lambda}b)\}^{1} 
		- \rho_{\mathcal{A}}(\rho_{\mathcal{B}}(a)_{-\lambda-\partial}x)_{\lambda+\mu}b  
		+ \rho_{\mathcal{A}}(\rho_{\mathcal{B}}(b)_{-\lambda-\partial}x)_{-\mu-\partial}a 
		+ \{(\sigma_{\mathcal{A}}(x)_{\lambda}a)_{\lambda+\mu}b\}^{2} \\
		&\quad+ \{a_{\mu}(\sigma_{\mathcal{A}}(x)_{\lambda}b)\}^{2} 
		- \sigma_{\mathcal{A}}(\sigma_{\mathcal{B}}(a)_{-\lambda-\partial}x)_{\lambda+\mu}b + \sigma_{\mathcal{A}}(\sigma_{\mathcal{B}}(b)_{-\lambda-\partial}x)_{-\mu-\partial}a + \{a_\mu (\sigma_{\mathcal{A}}(x)_\lambda b)\}^{1} 
		+\{(\sigma_{\mathcal{A}}(x)_\lambda a)_{\lambda + \mu}b\}^{1} \\
		&\quad - \sigma_{\mathcal{A}}(\rho_{\mathcal{B}}(a)_{-\lambda - \partial}x)_{\lambda + \mu}b + \sigma_{\mathcal{A}}(\rho_{\mathcal{B}}(b)_{-\lambda - \partial}x)_{-\mu - \partial}a
		+ \{a_\mu (\rho_{\mathcal{A}}(x)_\lambda b)\}^{2} 
		+ \{(\rho_{\mathcal{A}}(x)_\lambda a)_{\lambda + \mu}b\}^{2}\\ 
		&\quad - \rho_{\mathcal{A}}(\sigma_{\mathcal{B}}(a)_{-\lambda - \partial}x)_{\lambda + \mu}b
		+ \rho_{\mathcal{A}}(\sigma_{\mathcal{B}}(b)_{-\lambda - \partial}x)_{-\mu - \partial}a \\
		&=\{((\rho_{\mathcal{A}}+\sigma_{\mathcal{A}})(x)_{\lambda}a)_{\lambda+\mu}b\}^{1} 
		+ \{((\rho_{\mathcal{A}}+\sigma_{\mathcal{A}})(x)_{\lambda}a)_{\lambda+\mu}b\}^{2} 
		+ \{a_{\mu}((\rho_{\mathcal{A}}+\sigma_{\mathcal{A}})(x)_{\lambda}b)\}^{1} 
		+ \{a_{\mu}((\rho_{\mathcal{A}}+\sigma_{\mathcal{A}})(x)_{\lambda}b)\}^{2} \\
		&\quad - (\rho_{\mathcal{A}}+\sigma_{\mathcal{A}})((\rho_{\mathcal{B}}+\sigma_{\mathcal{B}})(a)_{-\lambda-\partial}x)_{\lambda+\mu}b
		+ (\rho_{\mathcal{A}}+\sigma_{\mathcal{A}})((\rho_{\mathcal{B}}+\sigma_{\mathcal{B}})(b)_{-\lambda-\partial}x)_{-\mu-\partial}a.
	\end{align*}
	This is exactly \eqref{matched pair 1} in the case of the matched pair of LCAs $(\A,[\cdot_\la \cdot]^1 + [\cdot_\la \cdot]^2)$ and $(\B,\{\cdot_\lambda\cdot\}^{1} + \{\cdot_\lambda\cdot\}^{2})$, along with the map  $\rho_{\A}+\sigma_{\A}$. By a similar calculation, one can show \eqref{matched pair 2} holds for 
	$[\cdot_\lambda\cdot]^1+[\cdot_\lambda\cdot]^2$, $\{\cdot_\lambda\cdot\}^1+\{\cdot_\lambda\cdot\}^2$ and
	$\rho_{\B}+\sigma_{\B}$.
	Therefore,  $((\A,[\cdot_\la \cdot]^1 + [\cdot_\la \cdot]^2),(\B,\{\cdot_\lambda\cdot\}^{1} + \{\cdot_\lambda\cdot\}^{2});\rho_{\A}+\sigma_{\A},\rho_{\B}+\sigma_{\B})$
	is a matched pair of LCAs, and so  $(\mathcal{A},\mathcal{B};\rho_{\mathcal{A}},\sigma_{\mathcal{A}},\rho_{\mathcal{B}},\sigma_{\mathcal{B}})$ is a matched pair of compatible LCAs.

	Next, we prove that (ii) and (iii) are equivalent. We see that
	$(\mathcal{A} \oplus \mathcal{B},[\!\![\cdot_\lambda\cdot]\!\!]^1)$ is a Lie conformal algebra if and only if \eqref{mp1} and \eqref{mp2} hold, and
	$(\mathcal{A} \oplus \mathcal{B},[\!\![\cdot_\lambda\cdot]\!\!]^2)$ is a Lie conformal algebra if and only if \eqref{mp3} and \eqref{mp4} hold. Thus
	$(\mathcal{A} \oplus \mathcal{B},[\!\![\cdot_\lambda\cdot]\!\!]^1,[\!\![\cdot_\lambda\cdot]\!\!]^2)$ is a compatible Lie conformal algebra if and only if the compatible Jacobi identity holds. Let
	\[
	\begin{aligned}
	R(u,v,w)=\ [\!\![ u_\lambda {[\!\![ v_\mu w ]\!\!]}^1 ]\!\!]^2
	+ [\!\![ u_\lambda {[\!\![ v_\mu w ]\!\!]}^2 ]\!\!]^1
	- [\!\![ {[\!\![ u_\lambda v ]\!\!]}^{1}_{\lambda+\mu} w ]\!\!]^2
	- [\!\![ v_\mu {[\!\![ u_\lambda w ]\!\!]}^1 ]\!\!]^2
	- [\!\![ {[\!\![ u_\lambda v ]\!\!]}^{2}_{\lambda+\mu} w ]\!\!]^1
	- [\!\![ v_\mu {[\!\![ u_\lambda w ]\!\!]}^2 ]\!\!]^1,
	\end{aligned}
	\]
	where $u,v,w\in \mathcal{A}\oplus\mathcal{B}$. We need to examine
	\[
	R(x,y,z)=0,\quad R(a,b,c)=0,\quad R(a,b,x)=0,\quad R(x,y,a)=0,
	\]
	where $x,y,z\in \mathcal A$ and $a,b,c\in \mathcal B$. In fact,
	$R(x,y,z)=0$ holds if and only if
	$(\mathcal{A},[\cdot_\lambda\cdot]^1,[\cdot_\lambda\cdot]^2)$ is a compatible Lie conformal algebra;
	$R(a,b,c)=0$ holds if and only if
	$(\mathcal{B},\{\cdot_\lambda\cdot\}^{1},\{\cdot_\lambda\cdot\}^{2})$ is a compatible Lie conformal algebra;
	$R(a,b,x)=0$ holds if and only if
	$(\mathcal A,\rho_{\mathcal B},\sigma_{\mathcal B})$ is a representation of
	$(\mathcal B,\{\cdot_\lambda\cdot\}^{1},\{\cdot_\lambda\cdot\}^{2})$ and \eqref{compatible matched pair1} holds;
	$R(x,y,a)=0$ holds if and only if
	$(\mathcal B,\rho_{\mathcal A},\sigma_{\mathcal A})$ is a representation of
	$(\mathcal A,[\cdot_\lambda\cdot]^1,[\cdot_\lambda\cdot]^2)$ and \eqref{compatible matched pair2} holds. The proof is finished.
\end{proof}


\section{Compatible Lie conformal coalgebras}
\vs{8pt}

In this section, we define compatible Lie conformal coalgebras and provide constructions of compatible Lie conformal algebras from finite compatible Lie conformal coalgebras and vice versa. These results extend the duality theory of Lie conformal algebras to the compatible case. 

\begin{defi}(\cite{Li}){\rm
A  Lie conformal coalgebra $\A$ is a $\mathbb{C}[\partial]$-module endowed with a $\mathbb{C}[\partial]$-homomorphism
\[
\delta:\A\to  \wedge^2 \A
\]
such that
\begin{align}
(\id\otimes\delta)\delta-\tau_{12}(\id\otimes\delta)\delta=(\delta\otimes \id)\delta,
\end{align}
where $\tau_{12}(a\otimes b\otimes c)=b\otimes a\otimes c$, for $a,b,c\in\A$.}
\end{defi}
\begin{defi}(\cite{Li}){\rm
A Lie conformal bialgebra is a triple $(\A,[\cdot_\la \cdot],\delta)$, where $(\A,[\cdot_\la \cdot])$ is a Lie conformal algebra, $(\A,\delta)$ is a Lie conformal coalgebra and they satisfy the cocycle condition:
\begin{align}
&\de ([x_\la y]) = \left(\ad(x)_\la \otimes \id + \id \otimes \ad(x)_\la\right)\de(y)
									-\left(\ad(y)_{-\la -\p}\otimes \id + \id \otimes \ad(y)_{-\la -\p}\right)\de(x), \; \forall x,y \in \A.
\end{align}}
\end{defi}

\begin{prop}\label{bialgebra prop1}(\cite{Li})
(a) Let $(\A, \delta)$ be a finite Lie conformal coalgebra, then $\A^{*c} = \mathrm{Chom}(\A, \mathbb{C})$ is a Lie conformal algebra with the following $\la$-bracket $(f, g \in \A^{*c})$:
\begin{align}
([f_{\mu}g])_{\lambda}(r) = \sum f_{\mu}(r_{(1)})g_{\lambda - \mu}(r_{(2)}) = (f \otimes g)_{\mu, \lambda - \mu}(\delta(r)),
\end{align}
where $\delta(r) = \sum r_{(1)} \otimes r_{(2)}$.

(b) Let $(\A, [\cdot_{\lambda}\cdot])$ be a Lie conformal algebra free of finite rank, that is $\A = \bigoplus_{i=1}^{n} \mathbb{C}[\partial]a^i$, then $\A^{*c} = \mathrm{Chom}(\A, \mathbb{C}) = \bigoplus_{i=1}^{n} \mathbb{C}[\partial]a_i$, where $\{a_i\}$ is a dual $\mathbb{C}[\partial]$-basis in the sense that $(a_i)_{\lambda}(a^j) = \delta_{ij}$, is a Lie conformal coalgebra with the following co-bracket:
\begin{align}
\delta(f) = \sum_{i,j}f_{\mu}([a^i_{\lambda}a^j])(a_i \otimes a_j)|_{\lambda = \partial \otimes 1, \mu = -\partial \otimes 1 - 1 \otimes \partial}.
\end{align}
More precisely, if
\[
[a^i_{\lambda}a^j] = \sum_k P_k^{ij}(\lambda, \partial)a^k,
\]
where $P_k^{ij}$ are some polynomials in $\lambda$ and $\partial$, then the co-bracket is
\[
\delta(a_k) = \sum_{i,j} Q_k^{ij}(\partial \otimes 1, 1 \otimes \partial)a_i \otimes a_j,
\]
where $Q_k^{ij}(x, y) = P_k^{ij}(x, -x - y)$.
\end{prop}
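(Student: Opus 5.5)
The plan is to prove (a) and (b) as two dual computations. In each, sesquilinearity, skew-symmetry and Jacobi on one side are read off from the $\C[\p]$-linearity, skew-symmetry and co-Jacobi identity on the other.

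For (a), first check that $[f_\mu g]$ lies in $\A^{*c}$ and that the bracket is well defined. Write $\delta(\p r)=(\p\otimes1+1\otimes\p)\delta(r)$. Then
\[
[f_\mu g]_\la(\p r)=\big(\mu+(\la-\mu)\big)[f_\mu g]_\la(r)=\la[f_\mu g]_\la(r),
\]
and the result is polynomial in $\mu$ because $\delta(r)$ is a finite sum. Next verify sesquilinearity. From $(\p f)_\mu=-\mu f_\mu$ we get $[\p f_\mu g]=-\mu[f_\mu g]$. For the other slot, $((\p+\mu)[f_\mu g])_\la=(\mu-\la)[f_\mu g]_\la$, which equals $[f_\mu \p g]_\la$ since $(\p g)_{\la-\mu}=-(\la-\mu)g_{\la-\mu}$.

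Skew-symmetry comes from $\delta(r)\in\wedge^2\A$, i.e. $\tau\delta(r)=-\delta(r)$. This gives
\[
[f_\mu g]_\la(r)=-\sum g_{\la-\mu}(r_{(1)})f_\mu(r_{(2)}),
\]
which we identify with $-[g_{-\mu-\p}f]_\la(r)$. Here $\p$ acts on $\A^{*c}$ by $-\la$, so $-\mu-\p$ becomes $\la-\mu$. The Jacobi identity is obtained by evaluating both sides on $r$ with spectral parameters. Concretely, $[f_\la[g_\mu h]]_\nu(r)=(f\otimes g\otimes h)_{\la,\mu,\nu-\la-\mu}\big((\id\otimes\delta)\delta(r)\big)$, and similarly for the other two terms with $(\delta\otimes\id)\delta$ and $\tau_{12}(\id\otimes\delta)\delta$. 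The co-Jacobi identity then gives Jacobi.

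For (b), finite rank freeness gives the dual basis, so $\A^{*c}=\bigoplus\C[\p]a_i$. Define $\delta$ by the formula and check the following in turn.
\begin{itemize}
\item $\C[\p]$-linearity follows from the substitution $\mu=-\p\otimes1-1\otimes\p$.
\item Skew-symmetry uses $P^{ij}_k(\la,\p)=-P^{ji}_k(-\la-\p,\p)$, which under $Q$ becomes $Q^{ij}_k(x,y)=-Q^{ji}_k(y,x)$.
\item Co-Jacobi is obtained by pairing with $a^i\otimes a^j\otimes a^l$ and reducing to the Jacobi identity for $[\cdot_\la\cdot]$.
\end{itemize}
Equivalently, one shows that the pairing $\langle\delta(f),a^i\otimes a^j\rangle$ recovers $f([a^i{}_\la a^j])$, so that the construction in (b) is inverse to that in (a). Co-Jacobi then becomes the transpose of Jacobi.

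The main obstacle is the bookkeeping of spectral parameters under iterated evaluations, and in particular matching $\tau_{12}$ and the $\la+\mu$ shift correctly. I would handle it by fixing the convention
\[
(f\otimes g\otimes h)_{\la_1,\la_2,\la_3}(a\otimes b\otimes c)=f_{\la_1}(a)\,g_{\la_2}(b)\,h_{\la_3}(c)
\]
and checking each of the three terms separately.
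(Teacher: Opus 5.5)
The paper does not prove this proposition itself (it quotes it from Liberati's work), but your argument is correct and is the same computation the paper runs for its compatible generalizations, Propositions~\ref{compatible dual} and~\ref{compatible dual coalgebra}. In (a) one evaluates on $r$ at the parameters $\la,\mu,\nu-\la-\mu$, so that the three Jacobi terms become $(\id\otimes\delta)\delta$, $\tau_{12}(\id\otimes\delta)\delta$ and $(\delta\otimes\id)\delta$; in (b) one compares coefficients of the structure polynomials after substituting $Q_k^{ij}(x,y)=P_k^{ij}(x,-x-y)$. One detail to tighten: $[f_\mu g]\in\A^{*c}[\mu]$ needs a bound on the $\mu$-degree that is uniform in $r$, and this comes from finite generation of $\A$ together with $[f_\mu g]_\la(p(\p)r)=p(\la)[f_\mu g]_\la(r)$, not merely from $\delta(r)$ being a finite sum.
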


We give the definition of compatible Lie conformal coalgebras.

\begin{defi}\label{com Co}{\rm Two Lie conformal coalgebras  $(\A,\de_1)$ and  $(\A,\de_2)$ over the same $\C[\p]$-module $\A$ are called compatible if for all $k_1,k_2\in\C$, $k_1\de_1+k_2\de_2$ defines a Lie conformal coalgebra structure on $\A$. 	}
\end{defi}

If  $(\A,\de_1)$ and  $(\A,\de_2)$ are compatible Lie conformal coalgebras, then we denote it by $(\A,\de_1,\de_2)$.

\begin{remark}
    The condition in the above definition is equivalent to the following compatible co-bracket identity
    \begin{align}
        (\id\otimes \de_1)\de_2 - \tau_{12}(\id\otimes \de_1)\de_2 - (\de_1\otimes \id)\de_2 + (\id\otimes\de_2)\de_1 - \tau_{12}(\id\otimes \de_2)\de_1 - (\de_2\otimes \id)\de_1 = 0.\label{Compatible Co}
    \end{align}
\end{remark}

\begin{ex}\label{ex-Rp-compatible-coalgebra}
	Let $R=\C[\p]a\oplus\C[\p]b$ be the $\C[\p]$-module in Example \ref{ex-Rp-compatible-LCA}. Define two $\C[\p]$-module homomorphisms $\de_1,\de_2:R\to\wedge^2R$ by
	$$
		\de_1(a)=0,\qquad \de_1(b)=\p(a\wedge b),
	$$
	and
	$$
		\de_2(a)=0,\qquad \de_2(b)=\p^3(a\wedge b),
	$$
	where $a\wedge b=a\otimes b-b\otimes a$, and $\p$ acts on $R\otimes R$ as
	$\p\otimes1+1\otimes\p$. The sum cobracket is
	$$
		(\de_1+\de_2)(a)=0,\qquad
		(\de_1+\de_2)(b)=(\p+\p^3)(a\wedge b).
	$$
	By the construction in Example 2.18 of \cite{Li}, the choices $q(\p)=\p$, $q(\p)=\p^3$ and
	$q(\p)=\p+\p^3$ give Lie conformal coalgebras corresponding respectively to $\de_1$, $\de_2$
	and $\de_1+\de_2$. Hence $(R,\de_1,\de_2)$ is a compatible Lie conformal coalgebra.
\end{ex}

The following result is a generalization of Proposition \ref{bialgebra prop1} (a).

\begin{prop}\label{compatible dual}
	Let $(\mathcal{A},\de_1 ,\de_2)$ be a finite compatible Lie conformal coalgebra. Then there is a compatible Lie conformal algebra structure on $\mathcal{A}^{*c}=Chom(\mathcal{A},\mathbb{C})$ with the following $\la$-bracket ($f,g\in \A^{*c},$ $r\in\A$):
	\begin{align}
		([f_{\mu}g]^{1})_{\lambda}(r) = \sum f_{\mu}(r_{(1)})g_{\lambda - \mu}(r_{(2)}) = (f \otimes g)_{\mu, \lambda - \mu}(\de_1(r)),\\
		([f_{\mu}g]^{2})_{\lambda}(r) = \sum f_{\mu}(r_{(3)})g_{\lambda - \mu}(r_{(4)}) = (f \otimes g)_{\mu, \lambda - \mu}(\de_2(r)),
	\end{align}
where $\de_1 (r) = \Sigma r_{(1)}\otimes r_{(2)}$ and $\de_2 (r) = \Sigma r_{(3)}\otimes r_{(4)}$. 
\end{prop}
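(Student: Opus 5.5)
The plan is to reduce everything to Proposition \ref{bialgebra prop1}(a), applied to suitable linear combinations of $\de_1$ and $\de_2$. By Definition \ref{com Co}, for every $k_1,k_2\in\C$ the map $k_1\de_1+k_2\de_2$ is a Lie conformal cobracket on the finite $\C[\p]$-module $\A$. Proposition \ref{bialgebra prop1}(a) then produces a Lie conformal algebra structure on $\A^{*c}$ for each such pair, with $\la$-bracket
\[
([f_\mu g]^{k_1,k_2})_\la(r)=(f\otimes g)_{\mu,\la-\mu}\big((k_1\de_1+k_2\de_2)(r)\big).
\]
The special cases $(k_1,k_2)=(1,0)$ and $(0,1)$ show that $[\cdot_\mu\cdot]^1$ and $[\cdot_\mu\cdot]^2$, as defined in the statement, are Lie conformal brackets on $\A^{*c}$.

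Next I would check that the assignment of a dual bracket to a cobracket is linear. The map $\de\mapsto[\cdot_\mu\cdot]_\de$, given by $([f_\mu g]_\de)_\la(r)=(f\otimes g)_{\mu,\la-\mu}(\de(r))$, is $\C$-linear in $\de$. This holds because the evaluation $(f\otimes g)_{\mu,\la-\mu}$ is a $\C$-linear map on $\A\otimes\A$. It follows that
\[
[\cdot_\mu\cdot]_{k_1\de_1+k_2\de_2}=k_1[\cdot_\mu\cdot]^1+k_2[\cdot_\mu\cdot]^2.
\]
Hence every linear combination $k_1[\cdot_\mu\cdot]^1+k_2[\cdot_\mu\cdot]^2$ is a Lie conformal bracket on $\A^{*c}$. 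This is condition (ii) of Proposition \ref{compatible prop}, so by Definition \ref{compatible def} the triple $(\A^{*c},[\cdot_\mu\cdot]^1,[\cdot_\mu\cdot]^2)$ is a compatible Lie conformal algebra.

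The main point needing care is that Proposition \ref{bialgebra prop1}(a) applies uniformly to every combination $k_1\de_1+k_2\de_2$. That is, I must confirm that each such combination is a $\C[\p]$-homomorphism $\A\to\wedge^2\A$ satisfying the co-Jacobi identity. Both properties are preserved under linear combination, and the co-Jacobi identity is exactly what the compatibility definition guarantees. Finiteness of $\A$ does not depend on the cobracket. Beyond this, I would only remark that sesquilinearity and the well-definedness of the $\la$-brackets as elements of $\A^{*c}[\mu]$ are inherited from Proposition \ref{bialgebra prop1}(a), so no new conformal-dual computation is needed.
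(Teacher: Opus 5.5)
Your proof is correct, but it takes a different route from the paper. The paper uses Proposition~\ref{bialgebra prop1}(a) only for the two individual brackets. It then checks the compatible Jacobi identity, condition (iii) of Proposition~\ref{compatible prop}, by hand: it expands the six-term expression $R(f,g,h)_\nu(r)$ and matches it term by term with $(f\otimes g\otimes h)_{\la,\mu,\nu-\la-\mu}$ applied to the compatible co-Jacobi tensor in \eqref{Compatible Co}, which vanishes.

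You instead take Definition~\ref{com Co} as stated and use only the $\C$-linearity of $\de\mapsto[\cdot_\mu\cdot]_\de$. Applying Proposition~\ref{bialgebra prop1}(a) to $k_1\de_1+k_2\de_2$ then gives condition (ii) directly, so your argument is essentially the polarized form of the paper's. It is shorter and needs no bookkeeping of the spectral parameters or of $\tau_{12}$. It also does not depend on the Remark translating Definition~\ref{com Co} into \eqref{Compatible Co}. The same linearity trick would replace the long structure-constant computation in Proposition~\ref{compatible dual coalgebra}, since the cobracket in Proposition~\ref{bialgebra prop1}(b) depends linearly on the bracket.

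The paper's computation, in return, records an explicit identity: the mixed Jacobi expression on $\A^{*c}$ equals the dual pairing with the mixed co-Jacobi tensor. That is the more informative statement if you want, for instance, a converse when the pairing on $\A^{\otimes 3}$ is non-degenerate.
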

\begin{proof}
	By Proposition \ref{bialgebra prop1},  $(\mathcal{A}^{*c},[\cdot_\la \cdot]^{1})$ and
	$(\mathcal{A}^{*c},[\cdot_\la \cdot]^{2})$ are Lie conformal algebras.
	Let 
    \begin{align*}
        R(f,g,h)=[f_\lambda [g_\mu h]^{1}]^{2} + [f_\lambda [g_\mu h]^{2}]^{1} 
	-[[f_\lambda g]^{1}_{\lambda+\mu} h]^{2} -[g_\mu [f_\lambda h]^{1}]^{2}
	-[[f_\lambda g]^{2}_{\lambda+\mu} h]^{1} -[g_\mu [f_\lambda h]^{2}]^{1}, ~~\forall~ f,g,h\in\A^{*c}.
    \end{align*}
 It suffices to show $R(f,g,h)=0$. For all $r\in\A$, we have 
	\begin{align}
		&R(f,g,h)_\nu (r)\nonumber\\
		&= [f_\lambda [g_\mu h]^{1}]^{2}{}_\nu(r) + [f_\lambda [g_\mu h]^{2}]^{1}{}_\nu(r) 
			-[[f_\lambda g]^{1}_{\lambda+\mu} h]^{2}{}_\nu(r) -[g_\mu [f_\lambda h]^{1}]^{2}{}_\nu(r)
		 -[[f_\lambda g]^{2}_{\lambda+\mu} h]^{1}{}_\nu(r) -[g_\mu [f_\lambda h]^{2}]^{1}{}_\nu(r)\nonumber\\
		&= \sum (f\otimes g\otimes h)_{\la, \mu, \nu -\la -\mu} \Big( r_{(3)}\otimes (r_{(4)})_{(1)}\otimes (r_{(4)})_{(2)}
		 -(r_{(3)})_{(1)}\otimes (r_{(3)})_{(2)}\otimes r_{(4)} - (r_{(4)})_{(1)}\otimes r_{(3)}\otimes (r_{(4)})_{(2)} \nonumber \\
		 &\quad + r_{(1)}\otimes (r_{(2)})_{(3)}\otimes (r_{(2)})_{(4)}
		 -(r_{(1)})_{(3)}\otimes (r_{(1)})_{(4)}\otimes r_{(2)} - (r_{(2)})_{(3)}\otimes r_{(1)}\otimes (r_{(2)})_{(4)} \Big) .\label{CLCA}
	\end{align}
The tensor in the parentheses is precisely
\[
\big((\id\otimes\de_1)\de_2-\tau_{12}(\id\otimes\de_1)\de_2-(\de_1\otimes\id)\de_2
     +(\id\otimes\de_2)\de_1-\tau_{12}(\id\otimes\de_2)\de_1-(\de_2\otimes\id)\de_1\big)(r).
\]
Indeed, for instance, $r_{(3)}\otimes (r_{(4)})_{(1)}\otimes (r_{(4)})_{(2)}$ is the contribution of
$(\id\otimes\de_1)\de_2(r)$, while
$(r_{(4)})_{(1)}\otimes r_{(3)}\otimes (r_{(4)})_{(2)}$ is the contribution of
$\tau_{12}(\id\otimes\de_1)\de_2(r)$; the remaining four terms are obtained in the same way from
$(\de_1\otimes\id)\de_2(r)$, $(\id\otimes\de_2)\de_1(r)$,
$\tau_{12}(\id\otimes\de_2)\de_1(r)$ and $(\de_2\otimes\id)\de_1(r)$, with the signs displayed in
\eqref{Compatible Co}. Hence the tensor vanishes by \eqref{Compatible Co}, and so
$R(f,g,h)=0$. This completes the proof. 
\end{proof}

The following result is a generalization of Proposition \ref{bialgebra prop1} (b). 
\begin{prop}\label{compatible dual coalgebra}
	Let $(\A,[\cdot_\la \cdot]^{1},[\cdot_\la \cdot]^{2})$ be a compatible Lie conformal algebra free  of  finite rank.
	Let $\A = \bigoplus_{i=1}^n \mathbb{C}[\p]a^i$, and let  $\A^{*c} = Chom(\A,\mathbb{C}) = \bigoplus_{i=1}^n \mathbb{C}[\p]a_i$
	, where $\{a_i\}$ is a dual $\mathbb{C}[\p]$-basis in the sense that $(a_i)_\la (a^j) = \delta_{ij}$. Then  there is a compatible Lie conformal coalgebra $(\A^{*c},\de_1,\de_2)$ with the following co-brackets ($f\in \A^{*c}$):
	\begin{align*}
		\de_1(f) = \sum_{i,j}f_\mu([a^i_\la a^j]^{1})(a_i\otimes a_j)|_{\la= \p\otimes 1, \mu = -\p \otimes 1- 1\otimes \p}, ~~
		\de_2(f) = \sum_{i,j}f_\mu([a^i_\la a^j]^{2})(a_i\otimes a_j)|_{\la= \p\otimes 1, \mu = -\p \otimes 1- 1\otimes \p}.
	\end{align*}
	More precisely, if 
	\begin{align}\label{*}
		[a^i _\la a^j]^{1} = \sum_k P_k^{ij}(\la,\p)a^k, \quad
		[a^i _\la a^j]^{2} = \sum_k \tilde{P}_k^{ij}(\la,\p)a^k,
	\end{align} 
	where $P_k^{ij}$ and $\tilde{P}_k^{ij}$ are some polynomials in $\la$ and $\p$, then the co-brackets are 
	\begin{align*}
		\de_1(a_k) = \sum_{i,j}Q_k^{ij}(\p\otimes 1,1\otimes \p)a_i\otimes a_j, \quad
		\de_2(a_k) = \sum_{i,j}\tilde{Q}_k^{ij}(\p\otimes 1,1\otimes \p)a_i\otimes a_j,
	\end{align*}
	where $Q_k^{ij}(x,y)=P_k^{ij}(x,-x-y)$, $\tilde{Q}_k^{ij}(x,y)=\tilde{P}_k^{ij}(x,-x-y)$.
\end{prop}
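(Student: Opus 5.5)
By Proposition \ref{bialgebra prop1} (b), applied separately to $(\A,[\cdot_\la\cdot]^1)$ and $(\A,[\cdot_\la\cdot]^2)$, the maps $\de_1$ and $\de_2$ are Lie conformal cobrackets on $\A^{*c}$, and they have the explicit forms $\de_1(a_k)=\sum Q_k^{ij}(\p\otimes1,1\otimes\p)a_i\otimes a_j$ and $\de_2(a_k)=\sum\tilde Q_k^{ij}(\p\otimes1,1\otimes\p)a_i\otimes a_j$. So by Definition \ref{com Co} it only remains to show that $k_1\de_1+k_2\de_2$ is a Lie conformal coalgebra for all $k_1,k_2\in\C$.

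I would reduce this to the non-compatible statement by linearity. By Proposition \ref{compatible prop}, the bracket $[\cdot_\la\cdot]_{k}:=k_1[\cdot_\la\cdot]^1+k_2[\cdot_\la\cdot]^2$ is a Lie conformal bracket on the same free module $\A$. Its structure polynomials are $k_1P_k^{ij}+k_2\tilde P_k^{ij}$. Applying Proposition \ref{bialgebra prop1} (b) to $(\A,[\cdot_\la\cdot]_k)$ produces a Lie conformal cobracket $\de_k$ on $\A^{*c}$ with
\[
\de_k(a_k)=\sum_{i,j}(k_1Q_k^{ij}+k_2\tilde Q_k^{ij})(\p\otimes1,1\otimes\p)\,a_i\otimes a_j .
\]
The assignment $P\mapsto Q$, $Q(x,y)=P(x,-x-y)$, is linear. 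Equivalently, the defining formula $f_\mu([a^i{}_\la a^j])(a_i\otimes a_j)|_{\dots}$ is linear in the bracket. Hence $\de_k=k_1\de_1+k_2\de_2$ on the basis $\{a_k\}$, and since both sides are $\C[\p]$-homomorphisms, the equality holds on all of $\A^{*c}$. Therefore $k_1\de_1+k_2\de_2$ is a Lie conformal cobracket, which is what we need.

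As a cross-check, one could instead verify \eqref{Compatible Co} directly. The plan would be to pair the six-term tensor against $a^p\otimes a^q\otimes a^s$ and dualize, mirroring the computation in Proposition \ref{compatible dual}, until it becomes the compatible Jacobi identity \eqref{compatible Jacobi}. I would not pursue this, because the linearity argument avoids it.

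The main point needing care is that Proposition \ref{bialgebra prop1} (b) indeed yields a cobracket for \emph{any} Lie conformal bracket on the free module, so that it applies to $[\cdot_\la\cdot]_k$ with the same dual basis $\{a_i\}$. This holds because the dual basis depends only on the $\C[\p]$-module structure and not on the bracket. With that in place, the remaining verification is only that the substitution $\la=\p\otimes1$, $\mu=-\p\otimes1-1\otimes\p$ commutes with taking linear combinations, which is immediate.
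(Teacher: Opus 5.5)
Your argument is correct, and after the first step it takes a different route from the paper. The paper also starts by applying Proposition~\ref{bialgebra prop1}(b) to each bracket separately. It then proves compatibility by checking the six-term identity \eqref{Compatible Co} by hand. Concretely, it writes the compatible Jacobi identity in terms of $P_k^{ij},\tilde P_k^{ij}$, converts these to $Q_k^{ij},\tilde Q_k^{ij}$, and substitutes $\la=\p\otimes1\otimes1$, $\mu=1\otimes\p\otimes1$, $\p=-\p^{\otimes 3}$. It then matches the result against the expansions of $(\id\otimes\de_1)\de_2$, $(\de_1\otimes\id)\de_2$, and so on.

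You instead work directly with Definition~\ref{com Co}, which requires every linear combination to be a Lie conformal coalgebra. You use Proposition~\ref{compatible prop}(ii) to know that $k_1[\cdot_\la\cdot]^1+k_2[\cdot_\la\cdot]^2$ is a Lie conformal bracket on the same free module. You then observe that the dualization of Proposition~\ref{bialgebra prop1}(b), sending a bracket to a cobracket, is linear in the bracket. So the cobracket obtained from the combined bracket is exactly $k_1\de_1+k_2\de_2$.

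Your route has several advantages:
\begin{itemize}
\item It makes the statement an immediate corollary of Liberati's result plus linearity.
\item It avoids an index-heavy computation in which sign and substitution errors are easy to make.
\item It does not need the equivalence asserted in the Remark after Definition~\ref{com Co}.
\item The same trick gives Proposition~\ref{compatible dual} from Proposition~\ref{bialgebra prop1}(a), since the bracket induced by $k_1\de_1+k_2\de_2$ is $k_1[\cdot_\mu\cdot]^1+k_2[\cdot_\mu\cdot]^2$.
\end{itemize}
What the paper's route buys is an explicit coefficientwise dictionary between the compatible Jacobi identity and \eqref{Compatible Co}. That kind of structure-constant matching is used again in Proposition~\ref{prop:structure-constant-cocycle}.

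One cosmetic point: you use $k$ both for the scalars $k_1,k_2$ and for the basis index, as in $\de_k(a_k)$. Rename one of them.
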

\begin{proof}
		By Proposition \ref{bialgebra prop1}, $(\A^{*c},\de_1)$ and $(\A^{*c},\de_2)$ are Lie conformal coalgebras. It remains to show \eqref{Compatible Co} holds.
	We have
	\begin{align*}
		\de_1(a_{k}) &= \sum_{i,j} (a_{k})_{\mu} ([a^{i}_{\lambda} a^{j}]^1)(a_{i} \otimes a_{j})|_{\lambda = \partial \otimes 1, \; \mu = -\partial \otimes 1 - 1 \otimes \partial} \\
		&= \sum_{i,j} (a_{k})_{\mu} \left( \sum_{l} P^{ij}_{l} (\lambda, \partial) a^{l} \right) (a_{i} \otimes a_{j})|_{\lambda = \partial \otimes 1, \; \mu = -\partial \otimes 1 - 1 \otimes \partial} \\
		&= \sum_{i,j} P^{ij}_{k} (\lambda, \mu) (a_{i} \otimes a_{j})|_{\lambda = \partial \otimes 1, \; \mu = -\partial \otimes 1 - 1 \otimes \partial} \\
		&= \sum_{i,j} Q^{ij}_{k} (\partial \otimes 1, 1 \otimes \partial) a_{i} \otimes a_{j}.
	\end{align*}
	Similarly, we have $\de_2(a_k) = \sum_{i,j} \tilde{Q}_k^{ij}(\p\otimes 1,1\otimes \p)a_i\otimes a_j$.
		Then
		\begin{align}
			(\id\otimes \de_1)\de_2(a_k)-\tau_{12}(\id\otimes \de_1)\de_2(a_k)
			&=\sum_{i,j,l,r}\tilde{Q}_k^{ij}(\p\otimes 1\otimes 1,1\otimes \p\otimes 1+1\otimes 1\otimes \p)\nonumber\\
			&\quad\times Q_j^{lr}(1\otimes \p\otimes 1,1\otimes 1\otimes \p)(a_i\otimes a_l\otimes a_r)
			-\tau_{12}~\text{(the same term)},\label{*1}\\
			(\de_1\otimes \id)\de_2(a_k)
			&=\sum_{i,j,l,r}\tilde{Q}_k^{ij}(\p\otimes 1\otimes 1+1\otimes \p\otimes 1,1\otimes 1\otimes \p)\nonumber\\
			&\quad\times Q_i^{lr}(\p\otimes 1\otimes 1,1\otimes \p\otimes 1)(a_l\otimes a_r\otimes a_j),\label{*2}\\
			(\id\otimes \de_2)\de_1(a_k)-\tau_{12}(\id\otimes \de_2)\de_1(a_k)
			&=\sum_{i,j,l,r}Q_k^{ij}(\p\otimes 1\otimes 1,1\otimes \p\otimes 1+1\otimes 1\otimes \p)\nonumber\\
			&\quad\times \tilde{Q}_j^{lr}(1\otimes \p\otimes 1,1\otimes 1\otimes \p)(a_i\otimes a_l\otimes a_r)
			-\tau_{12}~\text{(the same term)},\label{*3}\\
			(\de_2\otimes \id)\de_1(a_k)
			&=\sum_{i,j,l,r}{Q}_k^{ij}(\p\otimes 1\otimes 1+1\otimes \p\otimes 1,1\otimes 1\otimes \p)\nonumber\\
			&\quad\times \tilde{Q}_i^{lr}(\p\otimes 1\otimes 1,1\otimes \p\otimes 1)(a_l\otimes a_r\otimes a_j).\label{*4}
		\end{align}
	Since $(\A,[\cdot_\la \cdot]^1,[\cdot_\la\cdot]^2)$ is  compatible, the compatible Jacobi identity holds for all $a^i,a^l,a^r \in \A$:
	\begin{align*}
		[a^i_\lambda [a^l_\mu a^r]^{1}]^{2}
		+ [a^i_\lambda [a^l_\mu a^r]^{2}]^{1}
		- [[a^i_\lambda a^l]^{1}_{\lambda+\mu} a^r]^{2}
		- [a^l_\mu [a^i_\lambda a^r]^{1}]^{2}
		- [[a^i_\lambda a^l]^{2}_{\lambda+\mu} a^r]^{1}
		- [a^l_\mu [a^i_\lambda a^r]^{2}]^{1} = 0.
	\end{align*}
 Expanding this by \eqref{*} gives   
	\begin{align*}
		&\sum_{j}P_j^{lr}(\mu,\la +\p)\tilde{P}_k^{ij}(\la,\p)
		+ \sum_{j}\tilde{P}_j^{lr}(\mu,\la +\p)P_k^{ij}(\la,\p)\\
		 = &\sum_j P_j^{il}(\la, -\la - \mu)\tilde{P}_k^{jr}(\la + \mu,\p)
		 + \sum_j P_j^{ir}(\la, \mu+\p)\tilde{P}_k^{lj}(\mu,\p)\\
		&+ \sum_j \tilde{P}_j^{il}(\la, -\la - \mu)P_k^{jr}(\la + \mu,\p)
		+ \sum_j \tilde{P}_j^{ir}(\la, \mu+\p)P_k^{lj}(\mu,\p).
	\end{align*}
	Since $Q_k^{ij}(x,y) = P_k^{ij}(x,-x-y)$ and $\tilde{Q}_k^{ij}(x,y) = \tilde{P}_k^{ij}(x,-x-y)$, we have 
	\begin{align*}
		&\sum_j Q_j^{lr}(\mu,-\la-\mu-\p)\tilde{Q}_k^{ij}(\la,-\la-\p)
		+ \sum_j \tilde{Q}_j^{lr}(\mu,-\la-\mu-\p)Q_k^{ij}(\la,-\la-\p) \\
		= &\sum_j Q_j^{il}(\la,\mu)\tilde{Q}_k^{jr}(\la + \mu,-\la-\mu-\p)
		+ \sum_j Q_j^{ir}(\la,-\la-\mu-\p)\tilde{Q}_k^{lj}(\mu,-\mu-\p)\\
		&+ \sum_j \tilde{Q}_j^{il}(\la,\mu)Q_k^{jr}(\la + \mu,-\la-\mu-\p)
		+ \sum_j \tilde{Q}_j^{ir}(\la,-\la-\mu-\p)Q_k^{lj}(\mu,-\mu-\p).
	\end{align*}
	Setting
		$\la = \p \otimes 1 \otimes 1,$ $ \mu = 1 \otimes \p \otimes 1, $ and $ \p = -\p \otimes 1 \otimes 1 - 1 \otimes \p \otimes 1 - 1 \otimes 1 \otimes \p$ in the above equation gives   
	\begin{align*}
		\sum_j&\tilde{Q}_k^{ij}(\p \otimes 1 \otimes 1, 1 \otimes \p \otimes 1 + 1 \otimes 1 \otimes \p)Q_j^{lr}(1 \otimes \p \otimes 1 , 1 \otimes 1 \otimes \p)\\
		& +\sum_jQ_k^{ij}(\p \otimes 1 \otimes 1 , 1 \otimes \p \otimes 1 + 1 \otimes 1 \otimes \p)\tilde{Q}_j^{lr}(1 \otimes \p \otimes 1 , 1 \otimes 1 \otimes \p)\\
		&= \sum_j\tilde{Q}_k^{jr}(\p \otimes 1 \otimes 1 + 1 \otimes \p \otimes 1, 1 \otimes 1 \otimes \p)Q_j^{il}(\p \otimes 1 \otimes 1, 1 \otimes \p \otimes 1)\\
		& + \sum_j\tilde{Q}_k^{lj}(1 \otimes \p \otimes 1, \p \otimes 1 \otimes 1 + 1 \otimes 1 \otimes \p)Q_j^{ir}(\p \otimes 1 \otimes 1, 1 \otimes 1 \otimes \p)\\
		&  + \sum_jQ_k^{jr}(\p \otimes 1 \otimes 1 + 1 \otimes \p \otimes 1, 1 \otimes 1 \otimes \p)\tilde{Q}_j^{il}(\p \otimes 1 \otimes 1, 1 \otimes \p \otimes 1)\\
		&  + \sum_jQ_k^{lj}(1 \otimes \p \otimes 1, \p \otimes 1 \otimes 1 + 1 \otimes 1 \otimes \p)\tilde{Q}_j^{ir}(\p \otimes 1 \otimes 1, 1 \otimes 1 \otimes \p).
	\end{align*}
Combining this with Eq.s \eqref{*1}--\eqref{*4}, we obtain 
	\begin{align*}
			(\id \otimes \de_1)\de_2(a_k) - \tau_{12}(\id \otimes \de_1)\de_2(a_k) - (\de_1 \otimes \id)\de_2(a_k) + (\id \otimes \de_2)\de_1(a_k) - \tau_{12}(\id \otimes \de_2)\de_1(a_k) - (\de_2 \otimes \id)\de_1(a_k) = 0.
	\end{align*}
    Thus \eqref{Compatible Co} holds, finishing the proof.  
\end{proof}
\begin{remark}\label{2dual}
    Let $\A^{*c}$ be a (compatible) Lie conformal algebra free of finite rank. There is a (compatible) Lie conformal coalgebra structure on $\A$ since $\A^{**c}\simeq \A$ as $\C[\p]$-modules.
\end{remark}

\begin{ex}\label{ex-W-dual-coalgebra}
	Let $\A=\C[\p]L\oplus \C[\p]H$. For $s=1,2$, define a $\lambda$-bracket
	$[\cdot_\lambda\cdot]^s$ on $\A$ by
	\[
		[L_\lambda L]^s=(\partial+2\lambda)L,\quad
		[L_\lambda H]^s=(\partial+(1-b_s)\lambda)H,\quad
		[H_\lambda L]^s=(-b_s\partial+(1-b_s)\lambda)H,\quad
		[H_\lambda H]^s=0,
	\]
	By Example \ref{exW1}, $(\A,[\cdot_\lambda\cdot]^s)$ is the Lie conformal algebra $W(b_s)$,
	and the two $\la$-brackets $[\cdot_\lambda\cdot]^1$ and $[\cdot_\lambda\cdot]^2$ are compatible. Let
	\[
		\A^{*c}=\C[\p]L^*\oplus \C[\p]H^*
	\]
	be the conformal dual, where
	\[
		(L^*)_\la(L)=1,\quad (L^*)_\la(H)=0,\quad
		(H^*)_\la(L)=0,\quad (H^*)_\la(H)=1.
	\]

	By Proposition \ref{compatible dual coalgebra}, they induce a compatible Lie conformal coalgebra
	$(\A^{*c},\de_1,\de_2)$, where
	\begin{align}
		\de_1(L^*)&=(\p\otimes 1-1\otimes \p)L^*\otimes L^*,\label{dual-W-coalgebra-1}\\
		\de_1(H^*)&=(-b_1\p\otimes 1-1\otimes \p)L^*\otimes H^*
		+(\p\otimes 1+b_1\,1\otimes \p)H^*\otimes L^*,\label{dual-W-coalgebra-2}\\
		\de_2(L^*)&=(\p\otimes 1-1\otimes \p)L^*\otimes L^*,\label{dual-W-coalgebra-3}\\
		\de_2(H^*)&=(-b_2\p\otimes 1-1\otimes \p)L^*\otimes H^*
		+(\p\otimes 1+b_2\,1\otimes \p)H^*\otimes L^*.\label{dual-W-coalgebra-4}
	\end{align}
	\end{ex}


\section{Compatible  Lie conformal bialgebras and Manin triples}

In this section, we introduce  compatible conformal Manin triples and compatible  Lie conformal bialgebras. The main result proves the equivalence among compatible Lie conformal bialgebras, compatible conformal Manin triples, and matched pairs of compatible Lie conformal algebras for $\C[\partial]$-modules that are free of finite rank.

\begin{defi}(\cite{BKL})\label{CBF}{\rm 
	Let $V$ be a $\C[\p]$-module. A conformal bilinear form on $V$ is a $\mathbb{C}$-bilinear map $\langle\cdot,\cdot\rangle_\la$ : $V \times V \rightarrow \mathbb{C}[\la] $ such that 
	\begin{align}
		\langle\p v,w\rangle_\la = -\lambda \langle v,w\rangle_\la= -\langle v,\p w\rangle_\la,\quad \forall v,w \in V.\label{bilinear}
	\end{align} 
	The conformal bilinear form is symmetric if $\langle v,w\rangle_\la = \langle w,v\rangle_{-\la}$ for all $v,w \in V$. 
	The conformal bilinear form in a  Lie conformal algebra $R$ is called invariant if 
	\begin{align}
        \langle[a_\mu b],c\rangle_\la = \langle a,[b_{\la-\p}c]\rangle_\mu = -\langle a,[c_{-\la}b]\rangle_\mu ,\quad \forall a,b,c \in R.\label{invariant}
	\end{align}
    The conformal bilinear form is non-degenerate if $\langle v,w \rangle_\la = 0$ for all $w\in V$, implies $v=0$.
    }
\end{defi}

\begin{defi}(\cite{Li}){\rm
	A finite conformal Manin triple is a triple of finite Lie conformal algebras $(R,R_1,R_0)$, where $R$
	is equipped with a non-degenerate invariant symmetric bilinear form $\langle \cdot , \cdot\rangle _\la $
	such that\\
	1. $R_1$, $R_0$ are Lie conformal subalgebras of $R$ and $R=R_0\oplus R_1$ as $\mathbb{C}[\p]$-module.\\
	2. $R_0$ and $R_1$ are isotropic with respect to $\langle \cdot , \cdot\rangle _\la $, that is $\langle R_i , R_i\rangle _\la = 0 $ for $i=0,1$.}
\end{defi}

\begin{defi}{\rm 
	A subalgebra of a compatible Lie conformal algebra $(\A,[\cdot_\la\cdot]^1, [\cdot_\la\cdot]^2)$ is a $\C[\p]$-submodule of $\A$  which is closed for both $\la$-brackets. }
\end{defi}

In what follows, we will write $\tilde{\A}$ to denote a compatible Lie conformal algebra $(\A,[\cdot_\la\cdot]^1, [\cdot_\la\cdot]^2)$ for simplicity.

\begin{defi}{\rm
    A finite compatible conformal Manin triple is a triple of finite compatible Lie conformal algebras $(\tilde{\A},\tilde{\A_1},\tilde{\A_0})$, where $\tilde{\A}$ is equipped with a non-degenerate symmetric conformal bilinear form $\langle\cdot,\cdot\rangle_\la$ which is invariant with respect to both compatible $\lambda$-brackets, namely, for $i=1,2$,
    \begin{align*}
        \langle[a_\mu b]^i,c\rangle_\lambda
        =\langle a,[b_{\lambda-\partial}c]^i\rangle_\mu
        =-\langle a,[c_{-\lambda}b]^i\rangle_\mu,\quad a,b,c\in\A.
    \end{align*}
    Equivalently, by linearity, the form is invariant for every linear combination $k_1[\cdot_\lambda\cdot]^1+k_2[\cdot_\lambda\cdot]^2$. Moreover,\\
    1. $\tilde{\A_1}$, $\tilde{\A_0}$ are compatible Lie conformal subalgebras of $\tilde{\A}$ and $\tilde{\A}=\tilde{\A_0}\oplus \tilde{\A_1}$ as $\mathbb{C}[\p]$-module.\\
		2. $\tilde{\A_0}$ and $\tilde{\A_1}$ are isotropic with respect to $\langle \cdot , \cdot\rangle _\la $, that is $\langle \tilde{\A_i} , \tilde{\A_i}\rangle _\la = 0 $ for $i=0,1$.}
\end{defi}

\begin{defi}{\rm
	Let $\tilde{\A}=(\A,[\cdot_\lambda\cdot]^1,[\cdot_\lambda\cdot]^2)$ be a compatible Lie conformal algebra whose underlying $\C[\partial]$-module is free of finite rank, and suppose that its conformal dual $\A^{*c}$ carries a compatible Lie conformal algebra structure $\tilde{\A}^{*c}$. If there is a compatible Lie conformal algebra structure on the direct sum of
$\tilde{\A}$ and $\tilde{\A}^{*c}$ such that $(\tilde{\A} \oplus \tilde{\A}^{*c}, \tilde{\A}, \tilde{\A}^{*c})$ is a compatible conformal Manin triple with the non-degenerate symmetric conformal bilinear form on $\tilde{\A} \oplus \tilde{\A}^{*c}$ defined by
\begin{align}
		\langle a+f,b+g \rangle _\la = f_\la(b) + g_{-\la}(a) \quad \forall~ a,b\in \A,\ f,g\in \A^{*c},\label{standard Manin}
\end{align}
then $(\tilde{\A} \oplus \tilde{\A}^{*c}, \tilde{\A}, \tilde{\A}^{*c})$ is called a standard compatible conformal Manin triple.}
\end{defi}

Let $\tilde{\A}=(\A,[\cdot_\la \cdot]^{1},[\cdot_\la \cdot]^{2})$ be a compatible Lie conformal algebra whose underlying $\C[\partial]$-module is free of finite rank, and suppose that its conformal dual carries a compatible Lie conformal algebra structure $\tilde{\A}^{*c}=(\A^{*c},\{\cdot_\la \cdot\}^{1},\{\cdot_\la \cdot\}^{2})$. From now until the end of this section, we will use $\ad$ (resp. $\mathfrak{ad}$) to denote the adjoint representations of $\A$ (resp. $\A^{*c}$), and use $\ad^*$ (resp. $\mathfrak{ad}^*$) to denote the coadjoint actions of $\A$ on $\A^{*c}$ (resp. $\A^{*c}$ on $\A$).

\begin{theo}\label{MP-triple}
	Let $\tilde{\A}=(\A,[\cdot_\la \cdot]^{1},[\cdot_\la \cdot]^{2})$ be a compatible Lie conformal algebra whose underlying $\C[\partial]$-module is free of finite rank, and suppose that its conformal dual carries a compatible Lie conformal algebra structure $\tilde{\A}^{*c}=(\A^{*c},\{\cdot_\la \cdot\}^{1},\{\cdot_\la \cdot\}^{2})$.
	Then $(\tilde{\A} \oplus \tilde{\A}^{*c}, \tilde{\A}, \tilde{\A}^{*c})$ forms a standard compatible conformal Manin triple if and only if the 6-tuple 
	$(\A,\A^{*c};\ad^*_{1},\ad^*_{2},\mathfrak{ad}^*_{1},\mathfrak{ad}^*_{2})$ is a matched pair of compatible Lie conformal algebras, where 
	\[
	\ad_i(x)_\la y = [x_\la y]^i,  x,y\in\A;~~ \mathfrak{ad}_i(f)_\la g = \{f_\la g\}^i, f,g\in\A^{*c},~~ i = 1,2.
	\] 
\end{theo}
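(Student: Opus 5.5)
The plan is to reduce the statement to the known single-bracket correspondence of Liberati between conformal Manin triples and matched pairs, applied once for each of the three brackets in Definition \ref{CMPD}. The key observation is that the standard form \eqref{standard Manin} is fixed. Once it is invariant for $[\![\cdot_\la\cdot]\!]^1$ and $[\![\cdot_\la\cdot]\!]^2$, it is invariant for their sum, by linearity. Likewise, the coadjoint maps satisfy $(\ad_1+\ad_2)^*=\ad_1^*+\ad_2^*$ and $(\mathfrak{ad}_1+\mathfrak{ad}_2)^*=\mathfrak{ad}_1^*+\mathfrak{ad}_2^*$. This is the linearity computation already carried out for $(\rho+\sigma)^*$ in the proposition on dual representations.

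First I will show that if $(\tilde{\A}\oplus\tilde{\A}^{*c},\tilde{\A},\tilde{\A}^{*c})$ is a standard compatible conformal Manin triple, then invariance forces each bracket on the double to have the form \eqref{A bowtie B 1}--\eqref{A bowtie B 2} with the coadjoint actions.

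\begin{itemize}
\item Fix $i\in\{1,2\}$ and take $x\in\A$, $f\in\A^{*c}$. Write $[\![x_\la f]\!]^i=u+g$ with $u\in\A$, $g\in\A^{*c}$.
\item Pair with $y\in\A$. Invariance of $\langle\cdot,\cdot\rangle_\la$ and isotropy of $\A$ give $g=\ad_i^*(x)_\la f$.
\item Pair with $h\in\A^{*c}$. Invariance and isotropy of $\A^{*c}$ give $u=-\mathfrak{ad}_i^*(f)_{-\la-\p}x$. This step uses skew-symmetry and the sesquilinearity rules in \eqref{bilinear}.
\item Non-degeneracy of the form makes this determination unique.
\end{itemize}

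Hence the compatible LCA structure on $\A\oplus\A^{*c}$ is exactly the one in Theorem \ref{MP compatible}(iii) with $\rho_{\A}=\ad_1^*$, $\sigma_{\A}=\ad_2^*$, $\rho_{\B}=\mathfrak{ad}_1^*$, $\sigma_{\B}=\mathfrak{ad}_2^*$. By (iii)$\Rightarrow$(i) of that theorem, the 6-tuple is a matched pair of compatible LCAs.

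For the converse, assume the 6-tuple is a matched pair. Theorem \ref{MP compatible} then gives a compatible LCA $\tilde{\A}\bowtie\tilde{\A}^{*c}$ in which $\tilde{\A}$ and $\tilde{\A}^{*c}$ are subalgebras. The form \eqref{standard Manin} is symmetric and non-degenerate for a free module of finite rank with the dual basis $\{a_i\}$, and $\A$ and $\A^{*c}$ are isotropic. It remains to check invariance for each bracket $[\![\cdot_\la\cdot]\!]^i$ separately. Since each $[\![\cdot_\la\cdot]\!]^i$ is the single-bracket double built from the coadjoint representations, this is Liberati's invariance computation for $\A\bowtie\A^{*c}$.

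To be self-contained, I will verify $\langle[\![a_\mu b]\!]^i,c\rangle_\la=\langle a,[\![b_{\la-\p}c]\!]^i\rangle_\mu$ on homogeneous triples, split into the cases $(x,y,f)$, $(x,f,y)$, $(f,x,y)$ and their duals. Each case is the definition \eqref{dual representation} of $\ad_i^*$ or $\mathfrak{ad}_i^*$, read with the correct spectral substitutions.

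The main obstacle is the bookkeeping of the spectral parameters. The first delicate point is identifying the $\A$-component via $-\mathfrak{ad}_i^*(f)_{-\la-\p}x$, where $\p$ must be moved through the form using $\langle\p v,w\rangle_\la=-\la\langle v,w\rangle_\la$. The second is the mixed cases of invariance, where skew-symmetry $[a_\la b]=-[b_{-\la-\p}a]$ interacts with the substitution $\la-\p$. My first approach is to evaluate both sides on dual basis elements, where the form reduces to Kronecker deltas, and compare the resulting polynomials. Nothing genuinely new appears beyond the single-bracket case, because the compatible condition is handled entirely by linearity and by Theorem \ref{MP compatible}.
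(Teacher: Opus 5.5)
Your proposal is correct and follows essentially the same route as the paper. For the converse, you use invariance, isotropy and non-degeneracy to identify the two components of $[\![x_\la f]\!]^i$ with $\ad_i^*(x)_\la f$ and $-\mathfrak{ad}_i^*(f)_{-\la-\p}x$ and then invoke Theorem \ref{MP compatible}; for the other direction, you build $\tilde{\A}\bowtie\tilde{\A}^{*c}$ and check symmetry, non-degeneracy and invariance bracket by bracket, as the paper does. The only cosmetic difference is order: the paper first defines the projected actions $\rho_i^*,\rho_i$, applies Theorem \ref{MP compatible}, and only then identifies them with $\ad_i^*,\mathfrak{ad}_i^*$, whereas you identify them first.
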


\begin{proof}
	If $(\A,\A^{*c};\ad^*_{1},\ad^*_{2},\mathfrak{ad}^*_{1},\mathfrak{ad}^*_{2})$ is a matched pair of compatible Lie conformal algebras, then 
	there exists a compatible Lie conformal algebra $\tilde{\A} \bowtie_{\mathfrak{ad}^*_{1},\mathfrak{ad}^*_{2}}^{\ad^*_{1},\ad^*_{2}} \tilde{\A}^{*c}$  given by (cf. Theorem \ref{MP compatible})
	\begin{align}
		\relax [\!\![(a + f)_\lambda (b + g)]\!\!]^1 = [a_\lambda b]^{1} 
		+ \mathfrak{ad}^*_{1}(f)_\la b 
		- \mathfrak{ad}^*_{1}(g)_{-\lambda - \partial}a 
		 + \{f_\lambda g\}^{1} 
		+ \ad^*_{1}(a)_\lambda g 
		- \ad^*_{1}(b)_{-\lambda - \partial}f,\\
		[\!\![(a + f)_\lambda (b + g)]\!\!]^2 = [a_\lambda b]^{2} 
		+ \mathfrak{ad}^*_{2}(f)_\la b 
		- \mathfrak{ad}^*_{2}(g)_{-\lambda - \partial}a 
		 + \{f_\lambda g\}^{2} 
		+ \ad^*_{2}(a)_\lambda g 
		- \ad^*_{2}(b)_{-\lambda - \partial}f,
	\end{align}
	where $f,g\in \tilde{\A}^{*c}$ and $a,b\in \tilde{\A}$.
	The bilinear form given by Eq.\eqref{standard Manin} is symmetric because
	\[
	\langle a+f,b+g\rangle_\lambda
	=f_\lambda(b)+g_{-\lambda}(a)
	=\langle b+g,a+f\rangle_{-\lambda}.
	\]
	It is non-degenerate since, if $\langle a+f,b+g\rangle_\lambda=0$ for all $b+g$, then taking
	$g=0$ gives $f_\lambda(b)=0$ for all $b\in\A$, hence $f=0$, and taking $b=0$ gives
		$g_{-\lambda}(a)=0$ for all $g\in\A^{*c}$, hence $a=0$ by conformal duality in the finite free case. It remains to prove \eqref{invariant}.
	For all $a,b,c\in \A$ and $f,g,h\in \A^{*c}$, we have 
	\begin{align*}
		&\langle [\!\![(a+f)_\la(b+g)]\!\!]^1, c+h \rangle _\mu\\
		&= \langle [a_\lambda b]^{1} 
		+ \mathfrak{ad}^*_{1}(f)_\la b 
		- \mathfrak{ad}^*_{1}(g)_{-\lambda - \partial}a 
		 + \{f_\lambda g\}^{1} 
		+ \ad^*_{1}(a)_\lambda g 
		- \ad^*_{1}(b)_{-\lambda - \partial}f,c+h \rangle _\mu\\
		&= \{f_\lambda g\}^{1}_\mu c + (\ad^*_{1}(a)_\lambda g)_\mu c - (\ad^*_{1}(b)_{-\lambda - \partial}f)_\mu c	
		+ h_{-\mu}[a_\lambda b]^{1} + h_{-\mu}(\mathfrak{ad}^*_{1}(f)_\la b)
        - h_{-\mu}(\mathfrak{ad}^*_{1}(g)_{-\lambda - \partial}a)\\
		&= \{f_\lambda g\}^{1}_\mu c + g_{\mu- \la}[c_{-\la-\p}a]^{1} + f_\la[b_{\mu-\la}c]^{1} 
		+ h_{-\mu}[a_\la b]^{1} - \{f_\la h\}^{1}_{-\mu+\la}b + \{g_{\mu-\la}h\}^{1}_{-\la}a\\
		&= -\{g_{-\la-\p}f\}^{1}_\mu c + g_{\mu-\la}[c_{-\mu}a]^{1} + f_\la[b_{\mu-\la}c]^{1} 
		- h_{-\mu}[b_{-\la-\p}a]^{1} + \{h_{-\la-\p}f\}^{1}_{-\mu+\la}b 
        + \{g_{\mu-\la}h\}^{1}_{-\la}a\\
		&= -\{g_{\mu-\la}f\}^{1}_\mu c + g_{\mu-\la}[c_{-\mu}a]^{1} + f_\la[b_{\mu-\la}c]^{1} 
		- h_{-\mu}[b_{\mu-\la}a]^{1} + \{h_{-\mu}f\}^{1}_{-\mu+\la}b + \{g_{\mu-\la}h\}^{1}_{-\la}a,
	\end{align*}
	and
	\begin{align*}
		& \langle a+f,[\!\![(b+g)_{\mu-\p}(c+h)]\!\!]^1\rangle _\la\\
		&= \langle a+f, [b_{\mu-\p} c]^{1} + \mathfrak{ad}^*_{1}(g)_{\mu-\p} c - \mathfrak{ad}^*_{1}(h)_{-\mu}b
		+ \{g_{\mu-\p} h\}^{1} + \ad^*_{1}(b)_{\mu-\p} h - \ad^*_{1}(c)_{-\mu}g\rangle _\la \\
		&= f_\la[b_{\mu-\p} c]^{1} + f_\la(\mathfrak{ad}^*_{1}(g)_{\mu-\p} c) - f_\la(\mathfrak{ad}^*_{1}(h)_{-\mu}b)
		+ \{g_{\mu-\p} h\}^{1}_\la a + (\ad^*_{1}(b)_{\mu-\p} h)_\la a 
        - (\ad^*_{1}(c)_{-\mu}g)_\la a \\
		&= -\{g_{\mu-\la}f\}^{1}_\mu c + g_{\mu-\la}[c_{-\mu}a]^{1} + f_\la[b_{\mu-\la}c]^{1} 
		- h_{-\mu}[b_{\mu-\la}a]^{1} + \{h_{-\mu}f\}^{1}_{-\mu+\la}b + \{g_{\mu-\la}h\}^{1}_{-\la}a.
	\end{align*}
Hence $\langle [\!\![(a+f)_\la(b+g)]\!\!]^1, c+h \rangle _\mu = \langle a+f,[\!\![(b+g)_{\mu-\p}(c+h)]\!\!]^1\rangle _\la$. Similarly, we have   $\langle [\!\![(a+f)_\la(b+g)]\!\!]^2, c+h \rangle _\mu = \langle a+f,[\!\![(b+g)_{\mu-\p}(c+h)]\!\!]^2\rangle _\la$.  
    
	Now we prove the converse part. Suppose that
	$(\tilde{\A}\oplus\tilde{\A}^{*c},\tilde{\A},\tilde{\A}^{*c})$
	is a standard compatible conformal Manin triple. Denote by
	$\pi_{\A}$ and $\pi_{\A^{*c}}$ the projections from
	$\A\oplus \A^{*c}$ onto $\A$ and $\A^{*c}$, respectively. Since
	$\A$ and $\A^{*c}$ are compatible Lie conformal subalgebras and
	$\A\oplus\A^{*c}$ is their direct sum as a $\C[\p]$-module, we define four conformal linear maps by
	\[
		\rho_i^*(a)_\lambda f
		=\pi_{\A^{*c}}\big([\!\![a_\lambda f]\!\!]^i\big),
		\qquad
		\rho_i(f)_\lambda a
		=\pi_{\A}\big([\!\![f_\lambda a]\!\!]^i\big),
		\qquad i=1,2.
	\]
	Thus, by skew-symmetry, for $a,b\in\A$ and $f,g\in\A^{*c}$,
	\begin{align*}
		[\!\![(a+f)_\lambda(b+g)]\!\!]^i
		=[a_\lambda b]^i+\{f_\lambda g\}^i
		+\rho_i(f)_\lambda b-\rho_i(g)_{-\lambda-\p}a 
		+\rho_i^*(a)_\lambda g-\rho_i^*(b)_{-\lambda-\p}f,
		\qquad i=1,2.
	\end{align*}
	Since $\A\oplus\A^{*c}$ is a compatible Lie conformal algebra, by Theorem
	\ref{MP compatible},
		$(\A,\A^{*c};\rho_1^*,\rho_2^*,\rho_1,\rho_2)$
	is a matched pair of compatible Lie conformal algebras. It remains to
	identify these four actions with the coadjoint actions.

	For $a,b\in\A$, $f\in\A^{*c}$ and $i=1,2$, using the invariance of the
	standard bilinear form and the isotropy of $\A$ and $\A^{*c}$, we obtain
	\begin{align*}
		\langle \rho_i^*(a)_\lambda f,b\rangle_\mu
		&=\langle [\!\![a_\lambda f]\!\!]^i,b\rangle_\mu       
		=\langle a,[\!\![f_{\mu-\p}b]\!\!]^i\rangle_\lambda   
		=\langle f,[b_{-\lambda-\p}a]^i\rangle_{\mu-\lambda}  
		=f_{\mu-\lambda}([b_{-\mu}a]^i)\\                       
		&=-f_{\mu-\lambda}([a_{\mu-\p}b]^i)                    
		=(\ad_i^*(a)_\lambda f)_\mu b
		=\langle \ad_i^*(a)_\lambda f,b\rangle_\mu .
	\end{align*}
	Hence, by the non-degeneracy of \eqref{standard Manin}, it follows that
		$\rho_i^*=\ad_i^*$, $i=1,2$.
	For the action of $\A^{*c}$ on $\A$, the same invariance identity gives a second explicit computation.
	For $f,g\in\A^{*c}$, $a\in\A$ and $i=1,2$,
	\begin{align*}
		\langle \rho_i(f)_\lambda a,g\rangle_\mu
		&=\langle [\!\![f_\lambda a]\!\!]^i,g\rangle_\mu        
		=\langle f,[\!\![a_{\mu-\p}g]\!\!]^i\rangle_\lambda   
		=\langle a,\{g_{-\lambda-\p}f\}^i\rangle_{\mu-\lambda}\\
		&=-\langle a,\{f_{\mu-\p}g\}^i\rangle_{\mu-\lambda}    
		=(\mathfrak{ad}_i^*(f)_\lambda a)_\mu g
		=\langle \mathfrak{ad}_i^*(f)_\lambda a,g\rangle_\mu .
	\end{align*}
	Again by non-degeneracy,
		$\rho_i=\mathfrak{ad}_i^*$, $i=1,2$.

	Therefore
		$(\A,\A^{*c};\ad_1^*,\ad_2^*,\mathfrak{ad}_1^*,\mathfrak{ad}_2^*)$
	is a matched pair of compatible Lie conformal algebras. This proves the
	converse implication.
\end{proof}

\begin{coro}
    Let $(\A,[\cdot_\la\cdot])$ be a Lie conformal algebra whose underlying $\C[\partial]$-module is free of finite rank, and suppose that its conformal dual carries a Lie conformal algebra structure $(\A^{*c},\{\cdot_\la\cdot\})$. Then $(\A\oplus\A^{*c},\A,\A^{*c})$ forms a standard conformal Manin triple if and only if $(\A,\A^{*c};\ad^*,\mathfrak{ad}^*)$ is a matched pair of LCAs.
\end{coro}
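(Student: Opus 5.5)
The corollary is the special case of Theorem \ref{MP-triple} in which the two brackets coincide. Given $(\A,[\cdot_\la\cdot])$ and $(\A^{*c},\{\cdot_\la\cdot\})$, I set $[\cdot_\la\cdot]^1=[\cdot_\la\cdot]^2=[\cdot_\la\cdot]$ and $\{\cdot_\la\cdot\}^1=\{\cdot_\la\cdot\}^2=\{\cdot_\la\cdot\}$. Each triple is then a compatible Lie conformal algebra. Indeed, the sum of the two brackets is $2[\cdot_\la\cdot]$, which is again a Lie conformal bracket, so condition (i) of Proposition \ref{compatible prop} holds. Likewise $\tilde{\A}=(\A,[\cdot_\la\cdot],[\cdot_\la\cdot])$ and $\tilde{\A}^{*c}=(\A^{*c},\{\cdot_\la\cdot\},\{\cdot_\la\cdot\})$ satisfy the hypotheses of Theorem \ref{MP-triple}, and $\ad_1=\ad_2=\ad$, $\mathfrak{ad}_1=\mathfrak{ad}_2=\mathfrak{ad}$.

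Next I translate both sides of the theorem into this diagonal situation. On the Manin triple side, a standard conformal Manin triple $(\A\oplus\A^{*c},\A,\A^{*c})$ with bracket $[\!\![\cdot_\la\cdot]\!\!]$ gives a standard compatible conformal Manin triple by taking both brackets on $\A\oplus\A^{*c}$ equal to $[\!\![\cdot_\la\cdot]\!\!]$. Conversely, the first bracket of a standard compatible Manin triple, restricted to the non-compatible data, is a standard conformal Manin triple. The one point to check is that in the converse direction the structure on $\A\oplus\A^{*c}$ is already forced. By the proof of Theorem \ref{MP-triple}, invariance and isotropy pin down the mixed terms as $\ad_i^*$ and $\mathfrak{ad}_i^*$. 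Since these coincide for $i=1,2$, the two brackets on the double coincide. This shows that "forms a standard Manin triple" means the same thing in both settings.

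On the matched pair side, by Definition \ref{CMPD} the 6-tuple $(\A,\A^{*c};\ad^*,\ad^*,\mathfrak{ad}^*,\mathfrak{ad}^*)$ is a matched pair of compatible LCAs exactly when three quadruples are matched pairs of LCAs:
\begin{itemize}
\item $((\A,[\cdot_\la\cdot]),(\A^{*c},\{\cdot_\la\cdot\});\ad^*,\mathfrak{ad}^*)$, which appears twice;
\item $((\A,2[\cdot_\la\cdot]),(\A^{*c},2\{\cdot_\la\cdot\});2\ad^*,2\mathfrak{ad}^*)$.
\end{itemize}
Conditions \eqref{matched pair 1} and \eqref{matched pair 2} are homogeneous of degree $2$ in the data (brackets and actions) taken together, and the representation conditions are homogeneous in the same way. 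So the doubled quadruple is a matched pair if and only if the original one is. Hence the compatible matched pair condition reduces to $(\A,\A^{*c};\ad^*,\mathfrak{ad}^*)$ being a matched pair of LCAs.

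Combining these two translations with Theorem \ref{MP-triple} gives the equivalence. The only real obstacle is the bookkeeping in the Manin triple direction, namely that a standard Manin triple structure on the double yields a compatible one with identical brackets. I will handle it as above, by reusing the identification of the mixed actions from the proof of Theorem \ref{MP-triple}. Alternatively, I could rerun that proof verbatim with a single bracket, since it never used compatibility beyond working with each bracket separately.
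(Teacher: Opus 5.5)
Your proposal is correct and follows the paper's proof, which simply specializes Theorem~\ref{MP-triple} to the diagonal case $[\cdot_\la\cdot]^1=[\cdot_\la\cdot]^2$, $\{\cdot_\la\cdot\}^1=\{\cdot_\la\cdot\}^2$. Your translation of the Manin triple condition and your homogeneity argument for the doubled quadruple supply bookkeeping the paper leaves implicit; the ``forced structure'' remark is harmless but unnecessary, since from a compatible Manin triple you can simply keep the first bracket on the double.
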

\begin{proof}
    It is a special case of Theorem \ref{MP-triple}.
\end{proof}

We combine the corollary above with \cite[Theorem 3.9]{Li} into a single theorem, and then proceed to prove it in the case of compatible Lie conformal algebras.

\begin{theo}\label{conformal Matched pair and Manin and bialgebra}(\cite{Li})
		Let $(\A,[\cdot_\la \cdot])$ be a Lie conformal algebra whose underlying $\C[\partial]$-module is free of finite rank. Suppose that its conformal dual carries a Lie conformal algebra structure $(\A^{*c},\{\cdot_\la\cdot\})$. Denote by $\de$ the Lie conformal coalgebra structure induced on $\A$ by Proposition \ref{bialgebra prop1} (cf. Remark \ref{2dual}).
	Then the following conditions are equivalent:
	\begin{itemize}[(c)]
	\item[(a)] $(\A,\A^{*c};\ad^*,\mathfrak{ad}^*)$ is a matched pair of Lie conformal algebras, where $\ad$ and $\mathfrak{ad}$ are the adjoint representations of $\A$ and $\A^{*c}$, respectively;
	\item[(b)] $(\A \oplus \A^{*c},\A,\A^{*c})$ is a standard Manin triple of Lie conformal algebras $\A$ and $\A^{*c}$;
	\item[(c)] $(\A,[\cdot _\la \cdot], \delta)$ is a Lie conformal bialgebra.
	\end{itemize}
\end{theo}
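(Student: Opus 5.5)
The theorem is the non-compatible case of the correspondence. The equivalence (a)$\Leftrightarrow$(b) is exactly the Corollary just proved, as the special case of Theorem \ref{MP-triple} with $[\cdot_\la\cdot]^1=[\cdot_\la\cdot]^2$. So the only new content is (b)$\Leftrightarrow$(c), or equivalently (a)$\Leftrightarrow$(c); this is \cite[Theorem 3.9]{Li}. I would nevertheless sketch the route through (a), because it is the one that transfers to the compatible setting.

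The first step is to identify $\delta$ on $\A$. Since $\A^{*c}$ is free of finite rank, Proposition \ref{bialgebra prop1}(b) together with Remark \ref{2dual} gives $\delta$ on $\A\simeq\A^{**c}$. It is characterized by the duality formula
\[
(f\otimes g)_{\mu,\la-\mu}(\delta(a))=\{f_\mu g\}_\la(a), \qquad f,g\in\A^{*c},\ a\in\A .
\]
From this, the coadjoint action of $\A^{*c}$ on $\A$ can be written through $\delta$:
\[
(\mathfrak{ad}^*(f)_\la a)_\mu g=-f_{\la}\otimes g_{\mu-\la}\,(\delta(a))
\]
up to the appropriate substitution of $\p$. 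Thus the data $(\A^{*c},\{\cdot_\la\cdot\})$ and $(\A,\delta)$ determine each other, and the Lie conformal coalgebra axiom for $\delta$ holds automatically.

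The second step is to examine the two matched-pair conditions \eqref{matched pair 1} and \eqref{matched pair 2} with $\rho=\ad^*$ and $\sigma=\mathfrak{ad}^*$. I would pair the left-hand side of \eqref{matched pair 2}, evaluated on $x,y\in\A$ and $a\in\A^{*c}$, against an arbitrary $g\in\A^{*c}$. Then I would rewrite each of the five terms via the duality formula. The terms $\mathfrak{ad}^*(a)[x_\la y]$ should produce $(a\otimes g)(\delta([x_\la y]))$. The terms $[x_\la \mathfrak{ad}^*(a)y]$ and $\mathfrak{ad}^*(\ad^*(x)_\la a)y$ should combine into $(a\otimes g)\big((\ad(x)_\la\otimes\id+\id\otimes\ad(x)_\la)\delta(y)\big)$, and symmetrically for $y$. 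Non-degeneracy of the pairing between $\A$ and $\A^{*c}$ in the finite free case then shows that \eqref{matched pair 2} is equivalent to the cocycle condition.

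The third step handles \eqref{matched pair 1}. By the symmetric roles of $\A$ and $\A^{*c}$ under the standard form \eqref{standard Manin}, pairing it with $b\in\A$ turns it into the same cocycle identity read from the dual side. Alternatively, one can invoke the invariance computation in the proof of Theorem \ref{MP-triple}, which shows the two conditions are exchanged by the form. So under the standing hypotheses, (a) holds if and only if the cocycle condition holds, which is (c).

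The main obstacle is the bookkeeping of the formal parameters in the second step. Each substitution $-\la-\p$ must be matched with the correct tensor factor $\p\otimes1$ or $1\otimes\p$ when passing from brackets on $\A^{*c}$ to $\delta$ on $\A$. I would first verify the identity on basis elements $a^i,a_j$ using the structure polynomials $P_k^{ij}$ and $Q_k^{ij}(x,y)=P_k^{ij}(x,-x-y)$, where the substitutions become explicit polynomial identities, and then extend by sesquilinearity.
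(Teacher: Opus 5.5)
Your proposal is correct and follows the paper's own route: the paper gives no independent proof, and simply combines the Corollary (the special case of Theorem~\ref{MP-triple}) for (a)$\Leftrightarrow$(b) with \cite[Theorem 3.9]{Li} for (b)$\Leftrightarrow$(c), exactly as you do. Your extra sketch of (a)$\Leftrightarrow$(c), which pairs each matched-pair identity against $\A^{*c}$ or $\A$ and uses non-degeneracy in the finite free case, is a reasonable direct check that the paper does not carry out; for \eqref{matched pair 1}, use the direct pairing with an element of $\A$, because the invariance computation in Theorem~\ref{MP-triple} does not show that the two conditions are exchanged by the form.
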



Now we define compatible Lie conformal bialgebras.

\begin{defi}\label{com bia}{\rm Two Lie conformal bialgebras  $(\A,[\cdot_\la\cdot]^1,\de_1)$ and $(\A,[\cdot_\la\cdot]^2,\de_2)$ 
			are called compatible, if
	\begin{itemize}[(3)]
		\item[(i)] 		
	 			$(\A,[\cdot_\la\cdot]^1,[\cdot_\la\cdot]^2)$ is a compatible Lie conformal algebra,
		\item[(ii)] $(\A,\de_1,\de_2)$ is a compatible Lie conformal coalgebra,
		\item[(iii)] $(\A,[\cdot_\la\cdot]^1+[\cdot_\la\cdot]^2,\de_1+\de_2)$ is a Lie conformal bialgebra.		
		\end{itemize} }
\end{defi}

If $(\A,[\cdot_\la\cdot]^1,\de_1)$ and $(\A,[\cdot_\la\cdot]^2,\de_2)$  are compatible Lie conformal bialgebras, then we denote it by the five-tuple $(\A,[\cdot_\la \cdot]^1,[\cdot_\la \cdot]^2,\de_1,\de_2)$.
\begin{remark}
    Let $(\A,[\cdot_\la\cdot]^1,\de_1)$ and $(\A,[\cdot_\la\cdot]^2,\de_2)$ be two Lie conformal bialgebras. The following statements are equivalent:
    \begin{itemize}[(iii)]
        \item[(i)] $(\A,[\cdot_\la \cdot]^1,[\cdot_\la \cdot]^2,\de_1,\de_2)$ is a compatible Lie conformal bialgebra.
        \item[(ii)] The compatible Jacobi identity \eqref{compatible Jacobi}, the compatible co-Jacobi identity \eqref{Compatible Co}, and the following compatible cocycle identity hold for all $x,y\in \A$:
        \begin{align}\label{compatible cocycle}
            &\de_1 ([x_\la y]^{2}) + \de_2 ([x_\la y]^{1}) \nonumber\\
            &=(\ad_{2}(x)_\la \otimes \id + \id \otimes \ad_{2}(x)_\la)\de_1(y)
            -(\ad_{2}(y)_{-\la -\p}\otimes \id + \id \otimes \ad_{2}(y)_{-\la -\p})\de_1(x) \nonumber\\
            &\quad+(\ad_{1}(x)_\la \otimes \id + \id \otimes \ad_{1}(x)_\la)\de_2(y)
            -(\ad_{1}(y)_{-\la -\p}\otimes \id + \id \otimes \ad_{1}(y)_{-\la -\p})\de_2(x).
        \end{align}
        \item[(iii)] For all $k_1,k_2\in\C$,
            $\big(\A,k_1[\cdot_\lambda\cdot]^1+k_2[\cdot_\lambda\cdot]^2,
            k_1\de_1+k_2\de_2\big)$
        is a Lie conformal bialgebra.
    \end{itemize}
\end{remark}

\begin{ex}\label{ex-Rp-compatible-bialgebra}
	Keep the notation in Examples \ref{ex-Rp-compatible-LCA} and \ref{ex-Rp-compatible-coalgebra}. We show that
	$(R,[\cdot_\lambda\cdot]^1,[\cdot_\lambda\cdot]^2,\de_1,\de_2)$
	is a compatible Lie conformal bialgebra.

	By \cite[Example 2.18]{Li}, for the rank two solvable Lie conformal algebra $R_p$ with
	$[a_\lambda b]=p(\lambda)b$ and the cobracket
	$\de_q(a)=0$, $\de_q(b)=q(\p)(a\wedge b)$, the triple
	$(R_p,[\cdot_\lambda\cdot],\de_q)$ is a Lie conformal bialgebra if and only if
	$p(x)q(x)$ is an odd polynomial. In the first structure we have
	$$
		p_1(x)=1,\qquad q_1(x)=x,
	$$
	and hence $p_1(x)q_1(x)=x$. In the second structure we have
	$$
		p_2(x)=x^2,\qquad q_2(x)=x^3,
	$$
	and hence $p_2(x)q_2(x)=x^5$. Therefore
	$(R,[\cdot_\lambda\cdot]^1,\de_1)$ and $(R,[\cdot_\lambda\cdot]^2,\de_2)$ are Lie conformal bialgebras.

	We now check the three conditions in Definition \ref{com bia}. First,
	$(R,[\cdot_\lambda\cdot]^1,[\cdot_\lambda\cdot]^2)$ is a compatible Lie conformal algebra by Example \ref{ex-Rp-compatible-LCA}. Second,
	$(R,\de_1,\de_2)$ is a compatible Lie conformal coalgebra by Example \ref{ex-Rp-compatible-coalgebra}. Finally, the sum bracket and the sum cobracket correspond to
	$$
		p(x)=1+x^2,\qquad q(x)=x+x^3.
	$$
	The product is
	\[
		p(x)q(x)=(1+x^2)(x+x^3)=x+2x^3+x^5,
    \]
		which is an odd polynomial. Thus the sum bracket together with the sum cobracket gives a
		Lie conformal bialgebra. Hence
		$(R,[\cdot_\lambda\cdot]^1,[\cdot_\lambda\cdot]^2,\de_1,\de_2)$ is a compatible Lie conformal bialgebra.
\end{ex}

Example \ref{ex-Rp-compatible-bialgebra} gives a compatible Lie conformal bialgebra obtained from the general bialgebra criterion, rather than from the coboundary construction considered in Section 6.

\begin{prop}\label{prop:structure-constant-cocycle}
	Let $\tilde{\A} = (\A,[\cdot_\la \cdot]^{1},[\cdot_\la \cdot]^{2})$ be a compatible Lie conformal algebra whose underlying $\C[\partial]$-module is free of finite rank.
	Suppose that its conformal dual carries a compatible Lie conformal algebra structure $\tilde{\A}^{*c} = (\A^{*c},\{\cdot_\la\cdot\}^1,\{\cdot_\la\cdot\}^2)$.
	Let $\de_1$ and $\de_2$ be the compatible Lie conformal coalgebra structure induced on $\A\simeq \A^{**c}$ by Proposition \ref{compatible dual coalgebra}.
	Then the compatible Jacobi identity for
	$\tilde{\A}\bowtie_{\mathfrak{ad}^*_1,\mathfrak{ad}^*_2}^{\ad^*_1,\ad^*_2}\tilde{\A}^{*c}$
	(cf. Theorem \ref{MP compatible})
	is equivalent to the compatible cocycle identity \eqref{compatible cocycle}.
\end{prop}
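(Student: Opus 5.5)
We reduce the compatible statement to the known single-bracket correspondence by polarization. Take the Lie conformal algebra $(\A,[\cdot_\la\cdot]^1+[\cdot_\la\cdot]^2)$ with dual $(\A^{*c},\{\cdot_\la\cdot\}^1+\{\cdot_\la\cdot\}^2)$. Because the dictionary of Proposition \ref{compatible dual coalgebra} is linear in the structure constants, the induced cobracket on $\A\simeq\A^{**c}$ is $\de_1+\de_2$. The coadjoint actions are also additive, e.g.\ $(\ad_1+\ad_2)^*=\ad_1^*+\ad_2^*$, as shown for dual representations in Section 3. Hence the bracket on the double $\tilde{\A}\bowtie\tilde{\A}^{*c}$ attached to the sum structure is $[\!\![\cdot_\la\cdot]\!\!]^1+[\!\![\cdot_\la\cdot]\!\!]^2$.

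We then use the standard fact underlying Theorem \ref{conformal Matched pair and Manin and bialgebra} (Liberati): for a single Lie conformal algebra free of finite rank with Lie dual, the Jacobi identity of $\A\bowtie\A^{*c}$ holds if and only if the cocycle condition for $\de$ holds. Concretely, the mixed Jacobiators $J(x,y,f)$ and $J(f,g,x)$, paired against elements of $\A^{*c}$ and $\A$ via \eqref{standard Manin}, are exactly the components of
\[
\de([x_\la y])-(\ad(x)_\la\otimes\id+\id\otimes\ad(x)_\la)\de(y)+(\ad(y)_{-\la-\p}\otimes\id+\id\otimes\ad(y)_{-\la-\p})\de(x)
\]
evaluated on $f\otimes g$. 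We denote this cocycle defect by $C(x,y)$, and write $C_{[\,],\de}$ when we need to record the bracket and cobracket it is built from.

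Let $J^{(k_1,k_2)}$ be the Jacobiator for $k_1[\!\![\cdot]\!\!]^1+k_2[\!\![\cdot]\!\!]^2$, and let $C^{(k_1,k_2)}$ be the cocycle defect for $(k_1[\,]^1+k_2[\,]^2,\,k_1\de_1+k_2\de_2)$. Both are quadratic forms in $(k_1,k_2)$. The $k_1k_2$ coefficient of $J$ is the compatible Jacobiator $R$ from the proof of Theorem \ref{MP compatible}. The $k_1k_2$ coefficient of $C$ is the difference of the two sides of \eqref{compatible cocycle}. Liberati's identity holds for each $(k_1,k_2)$ because each linear combination is again a Lie conformal algebra with Lie dual. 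We therefore get $\langle J^{(k_1,k_2)},\cdot\rangle=\Phi(C^{(k_1,k_2)})$, where $\Phi$ is a fixed linear pairing map independent of $(k_1,k_2)$. Comparing $k_1k_2$ coefficients shows that $R$ on mixed triples corresponds under $\Phi$ to the compatible cocycle defect.

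It remains to handle the pure triples. $R(x,y,z)$ and $R(f,g,h)$ vanish because $\tilde{\A}$ and $\tilde{\A}^{*c}$ are compatible. The representation parts hidden in $R(x,y,f)$ are automatic because coadjoint representations of a compatible algebra are representations (the proposition on dual representations in Section 3). The remaining pieces are \eqref{compatible matched pair1} and \eqref{compatible matched pair2}, and non-degeneracy of \eqref{standard Manin} shows they vanish if and only if \eqref{compatible cocycle} holds.

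The main obstacle is making the pairing identity explicit enough to be sure that $\Phi$ does not depend on $(k_1,k_2)$. If polarization is not accepted as is, one has to verify \eqref{compatible matched pair2} directly. To do this we pair it with $f\otimes g$, substitute $\de_i$ via the duality $f_\mu(r_{(1)})g_{\la-\mu}(r_{(2)})=\{f_\mu g\}^i_\la(r)$, and track the shifts $\la\mapsto-\la-\p$ carefully. Equation \eqref{compatible matched pair1} then follows symmetrically, or from skew-symmetry together with the invariance computation in Theorem \ref{MP-triple}.
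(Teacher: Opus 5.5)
Your argument is correct, but it is organized differently from the paper's. The paper works in dual $\C[\p]$-bases. It writes the structure functions $A,\tilde A$ and $C,\tilde C$, computes $\ad_s^*$, $\mathfrak{ad}_s^*$ and the double brackets in coordinates, and expands the compatible Jacobi identity on $(e_p^*,e_k,e_l)$. It sets aside the $e_t^*$-coefficients as the already valid compatible semidirect-product identity. It then matches the $e_t$-coefficients with the coordinate form of \eqref{compatible cocycle} after renaming variables, and handles the $(\A^{*c},\A^{*c},\A)$ triples ``by interchanging roles''.

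You instead note that the double bracket, the coadjoint actions and the induced cobracket are all linear in the pair of brackets. Hence the Jacobiator and the cocycle defect are quadratic in $(k_1,k_2)$, and the compatible objects are their $k_1k_2$-coefficients. This explains conceptually why \eqref{compatible cocycle} is the right condition and avoids a second long expansion. Your use of the invariance of \eqref{standard Manin}, which Theorem \ref{MP-triple} verifies without any Jacobi identity, is also a cleaner way to relate the two kinds of mixed triples.

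The price is that the load-bearing input is the \emph{identity} form of Liberati's correspondence, not the equivalence stated in Theorem \ref{conformal Matched pair and Manin and bialgebra}. Concretely, you need that the $\A$-component of $J(x,y,f)$, paired with $g$, equals the cocycle defect evaluated on $f\otimes g$, up to sign and a fixed substitution of spectral parameters. An equivalence of vanishing conditions for each $(k_1,k_2)$ does not let you compare $k_1k_2$-coefficients unless you also know that each $(\A,[\cdot_\la\cdot]^s,\de_s)$ is a bialgebra, and the proposition does not assume this. Proving that identity is exactly the one-bracket version of the paper's computation, or of your pairing fallback.

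Once the identity is in hand, your worry about $\Phi$ disappears. $\Phi$ involves only the pairing between $\A$ and $\A^{*c}$. The identity uses only the definitions of $\ad^*$, $\mathfrak{ad}^*$, $\de$ together with skew-symmetry and sesquilinearity, never the Jacobi identity. So it applies directly to the cross pairs $([\cdot_\la\cdot]^1,\{\cdot_\la\cdot\}^2)$ and $([\cdot_\la\cdot]^2,\{\cdot_\la\cdot\}^1)$, which makes the polarization step optional.
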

\begin{proof}
	Let $\{e_i\}_{i=1}^n$ be a $\C[\p]$-basis of $\A$ and $\{e_i^*\}_{i=1}^n$ be a $\C[\p]$-basis of $\A^{*c}$. We suppose  
	\begin{align*}
		[{e_i}_{\la} e_j]^{1} = \sum_s A_{ij}^s(\la,\p)e_s, \quad \{{e_i^*}_{\la} e_j^*\}^{1} = \sum_s B_s^{ij}(\la,\p)e_s^{*},\\
		[{e_i}_{\la} e_j]^{2} = \sum_s \tilde{A}_{ij}^s(\la,\p)e_s, \quad \{{e_i^*} _\la e_j^*\}^{2} = \sum_s \tilde{B}_s^{ij}(\la,\p)e_s^*.
	\end{align*}
	Since 
	\begin{align*}
			(\ad_{1}^*(e_j)_{\la} {e_i^*})_{\mu} e_k = -{e_i^*}_{\mu-\la}(\ad_{1}(e_j)_\la e_k) = -{e_i^*}_{\mu-\la}\Big(\sum_s A_{jk}^s(\la, \p)e_s\Big) = -A_{jk}^i(\la, \mu-\la),
	\end{align*}
	 we have $\ad_{1}^*(e_j)_\la e_i^* = -\sum_{k=1}^n A_{jk}^i(\la,-\la-\p)e_k^*$. Similarly, 
	\begin{align*}
	\mathfrak{ad}_{1}^*(e_i^*)_\la(e_j) &= -\sum_{k=1}^n B_j^{ik}(\la,-\la - \p)e_k = -\sum_{k=1}^n C_j^{ik}(\la,\p)e_k;\\
	\ad_{2}^*(e_j)_\la e_i^* &= -\sum_{k=1}^n \tilde{A}_{jk}^i(\la,-\la-\p)e_k^*;\\
		\mathfrak{ad}_{2}^*(e_i^*)_\la(e_j) &= -\sum_{k=1}^n \tilde{B}_j^{ik}(\la,-\la - \p)e_k = -\sum_{k=1}^n \tilde{C}_j^{ik}(\la,\p)e_k,
	\end{align*}
		where $C(\la,\p) = B(\la, -\la-\p)$ and $\tilde{C}(\la,\p) = \tilde{B}(\la, -\la-\p)$.

		For $\tilde{\A}\bowtie_{\mathfrak{ad}^*_1,\mathfrak{ad}^*_2}^{\ad^*_1,\ad^*_2}\tilde{\A}^{*c}$, the compatible
		$\la$-brackets are
		\begin{align*}
			[\!\![{e_i^*}_\la e_j]\!\!]^1 = \mathfrak{ad}_{1}^*(e_i^*)_\la e_j - \ad^*_{1}(e_j)_{-\la -\p}e_i^*,\quad
			[\!\![{e_i^*} _\la e_j]\!\!]^2 = \mathfrak{ad}_{2}^*(e_i^*)_\la e_j - \ad^*_{2}(e_j)_{-\la -\p}e_i^*.
		\end{align*}
		Therefore, we get 
		\begin{align*}
			[\!\![{e_i^*} _\la e_j]\!\!]^1 = \sum_{k=1}^n \left(A_{jk}^i(-\la -\p,\la)e_k^* - C_j^{ik}(\la,\p)e_k \right),\quad
			[\!\![{e_i^*} _\la e_j]\!\!]^2 = \sum_{k=1}^n \left(\tilde{A}_{jk}^i(-\la -\p,\la)e_k^* - \tilde{C}_j^{ik}(\la,\p)e_k \right).
		\end{align*}
		The corresponding compatible Jacobi identity for $e_p^*,e_k,e_l$ reads
		\begin{align*}
			[\!\![{e_p^*}_{\lambda}[\!\![e_k {}_{\mu} e_l]\!\!]^1]\!\!]^2
			-[\!\![[\!\![{e_p^*} {}_{\lambda} e_k]\!\!]^1 {}_{\lambda + \mu} e_l]\!\!]^2
			-[\!\![e_k {}_{\mu} [\!\![{e_p^*} {}_{\lambda} e_l]\!\!]^1]\!\!]^2
			+[\!\![{e_p^*}_{\lambda}[\!\![e_k {}_{\mu} e_l]\!\!]^2]\!\!]^1
			-[\!\![[\!\![{e_p^*} {}_{\lambda} e_k]\!\!]^2 {}_{\lambda + \mu} e_l]\!\!]^1
			-[\!\![e_k {}_{\mu} [\!\![{e_p^*} {}_{\lambda} e_l]\!\!]^2]\!\!]^1 = 0.
		\end{align*}
	Expanding it, we get 
	\begin{align}\label{jacobi and cocycle}
		0 = &\sum_{t,i}A_{kl}^i(\mu,\la +\p)(\tilde{A}_{it}^p(-\la -\p,\la)e_t^* - \tilde{C}_i^{pt}(\la,\p)e_t) - \sum_{t,i}A_{ki}^p(\mu,\la)(\tilde{A}_{lt}^i(-\la -\mu -\p,\la + \mu)e_t^* - \tilde{C}_l^{it}(\la + \mu,\p)e_t)\nonumber \\
		&+\sum_{t,i}\tilde{A}_{il}^t(\la +\mu,\p)C_k^{pi}(\la,-\la-\mu)e_t+ \sum_{t,i}A_{li}^p(-\la-\mu-\p,\la)(\tilde{A}_{kt}^i(\mu, -\mu-\p)e_t^* - \tilde{C}_k^{it}(-\mu -\p,\p)e_t)   \nonumber\\
		&+ \sum_{t,i}\tilde{A}_{ki}^t(\mu,\p)C_l^{pi}(\la,\mu+\p)e_t
		+ \sum_{t,i}\tilde{A}_{kl}^i(\mu,\la +\p)(A_{it}^p(-\la -\p,\la)e_t^* - C_i^{pt}(\la,\p)e_t) \nonumber\\
		&+ \sum_{t,i}A_{ki}^t(\mu,\p)\tilde{C}_l^{pi}(\la,\mu+\p)e_t \nonumber\\
		&- \sum_{t,i}\tilde{A}_{ki}^p(\mu,\la)(A_{lt}^i(-\la -\mu -\p,\la + \mu)e_t^* - C_l^{it}(\la + \mu,\p)e_t) + \sum_{t,i}A_{il}^t(\la +\mu,\p)\tilde{C}_k^{pi}(\la,-\la-\mu)e_t \nonumber\\
		& + \sum_{t,i}\tilde{A}_{li}^p(-\la-\mu-\p,\la)(A_{kt}^i(\mu, -\mu-\p)e_t^* - C_k^{it}(-\mu -\p,\p)e_t).
	\end{align}
			The coefficients of $e_t^*$ in \eqref{jacobi and cocycle} give the already satisfied compatible Jacobi
			identity of $\tilde{\A}\ltimes_{\ad_1^*,\ad_2^*}\A^{*c}$, because they contain only the
			structure functions $A,\tilde A$ and the terms
			$A_{jk}^i(\lambda,-\lambda-\partial)$,
				$\tilde A_{jk}^i(\lambda,-\lambda-\partial)$ arising from the coadjoint actions
				$\ad_1^*$ and $\ad_2^*$ of $\A$ on $\A^{*c}$. The coefficients of $e_t$ become \eqref{coefficients} after the change of variables $\la = \p\otimes1$, $\p=\p\otimes1$ and $\mu=\la$. 
			The compatible Jacobi identity involving
			two elements of $\A^{*c}$ and one element of $\A$ is treated in the same way, with
			$(A,\tilde A)$ and $(C,\tilde C)$ interchanged.

			Now, we write \eqref{compatible cocycle} in the above structure functions. By Proposition \ref{compatible dual coalgebra}, we have
	\begin{align*}
		\de_1(e_i) = \sum_{k,l}C_i^{kl}(\p \otimes 1,1\otimes \p)e_k \otimes e_l, \quad
		\de_2(e_i) = \sum_{k,l}\tilde{C}_i^{kl}(\p \otimes 1,1\otimes \p)e_k \otimes e_l,
	\end{align*}
		where $C(x,y) = B(x,-x-y)$ and $\tilde{C}(x,y)=\tilde{B}(x,-x-y)$. Then we have
		\begin{align}
			&\de_1([{e_k}_\la {}{e_l}]^{2}) = \sum_i \tilde{A}_{kl}^i(\la, \p \otimes 1 + 1\otimes \p)\de_1(e_i) = \sum_{i,p,q}\tilde{A}_{kl}^i(\la, \p \otimes 1 + 1 \otimes \p)C_i^{pq}(\p \otimes 1, 1 \otimes \p)e_p \otimes e_q,\\
			&\de_2([{e_k} _\la {e_l}]^{1}) = \sum_i A_{kl}^i(\la, \p \otimes 1 + 1\otimes \p)\de_2(e_i)
									 = \sum_{i,p,q}A_{kl}^i(\la, \p \otimes 1 + 1 \otimes \p)\tilde{C}_i^{pq}(\p \otimes 1, 1 \otimes \p)e_p \otimes e_q,\\
			&(\ad_{2}(e_k)_\la\otimes \id + \id \otimes \ad_{2}(e_k)_\la)\de_1(e_l)\nonumber\\
			&= (\ad_{2}(e_k)_\la\otimes \id + \id \otimes \ad_{2}(e_k)_\la)(\sum_{p,q}C_l^{pq}(\p \otimes 1,1\otimes \p)e_p \otimes e_q)\nonumber\\
			 &= \sum_{p,q,i}\Big(C_l^{pq}(\la + \p \otimes 1, 1 \otimes \p)\tilde{A}_{kp}^i(\la ,\p \otimes 1)e_i \otimes e_q 
			 + C_l^{pq}(\p \otimes 1, \la + 1 \otimes \p)\tilde{A}_{kq}^i(\la, 1 \otimes \p)e_p \otimes e_i\Big),\\
			&(\ad_1(e_k)_\la\otimes \id + \id \otimes \ad_{1}(e_k)_\la)\de_2(e_l)\nonumber\\
			& = (\ad_{1}(e_k)_\la\otimes \id + \id \otimes \ad_{1}(e_k)_\la)(\sum_{p,q}\tilde{C}_l^{pq}(\p \otimes 1,1\otimes \p)e_p \otimes e_q)\nonumber\\
			& = \sum_{p,q,i}\Big(\tilde{C}_l^{pq}(\la + \p \otimes 1, 1 \otimes \p)A_{kp}^i(\la ,\p \otimes 1)e_i \otimes e_q + \tilde{C}_l^{pq}(\p \otimes 1, \la + 1 \otimes \p)A_{kq}^i(\la, 1 \otimes \p)e_p \otimes e_i\Big),\\
			&(\ad_{2}(e_l)_{-\la -\p}\otimes \id + \id \otimes \ad_{2}(e_l)_{-\la -\p})\de_1(e_k)\nonumber\\
			& = (\ad_{2}(e_l)_{-\la -\p}\otimes \id + \id \otimes \ad_{2}(e_l)_{-\la -\p})(\sum_{p,q}C_k^{pq}(\p \otimes 1, 1 \otimes \p)e_p \otimes e_q)\nonumber\\
			& = \sum_{p,q,i}\Big(C_k^{pq}(-\la -1 \otimes \p, 1 \otimes \p)\tilde{A}_{lp}^i(-\la -\p     \otimes 1 - 1 \otimes \p, \p \otimes 1)e_i \otimes e_q\nonumber\\
				 &~~~~~~~~~~~~~~~+ C_k^{pq}(\p \otimes 1, -\la - \p \otimes 1)\tilde{A}_{lq}^i(-\la -\p \otimes 1 - 1\otimes \p, 1 \otimes \p)e_p \otimes e_i\Big),\\
			&(\ad_{1}(e_l)_{-\la -\p}\otimes \id + \id \otimes \ad_{1}(e_l)_{-\la -\p})\de_2(e_k)\nonumber\\
			& = (\ad_{1}(e_l)_{-\la -\p}\otimes \id + \id \otimes \ad_{1}(e_l)_{-\la -\p})(\sum_{p,q}\tilde{C}_k^{pq}(\p \otimes 1, 1 \otimes \p)e_p \otimes e_q)\nonumber\\
			& = \sum_{p,q,i}\Big(\tilde{C}_k^{pq}(-\la -1 \otimes \p, 1 \otimes \p)A_{lp}^i(-\la -\p \otimes 1 - 1 \otimes \p, \p \otimes 1)e_i \otimes e_q\nonumber\\
			 &~~~~~~~~~~~~~~+ \tilde{C}_k^{pq}(\p \otimes 1, -\la - \p \otimes 1)A_{lq}^i(-\la -\p \otimes 1 - 1\otimes \p, 1 \otimes \p)e_p \otimes e_i\Big).
		\end{align}
	By taking the coefficients of $e_p \otimes e_q$, the cocycle condition becomes
		\begin{align}\label{coefficients}
			&\sum_i \tilde{A}_{kl}^i(\la, \p \otimes 1 + 1 \otimes \p)C_i^{pq}(\p \otimes 1, 1 \otimes \p) + \sum_{i}A_{kl}^i(\la, \p \otimes 1 + 1 \otimes \p)\tilde{C}_i^{pq}(\p \otimes 1, 1 \otimes \p)\nonumber\\
			& \quad =\sum_i \Big(C_l^{iq}(\la + \p \otimes 1, 1 \otimes \p)\tilde{A}_{ki}^p(\la ,\p \otimes 1) + C_l^{pi}(\p \otimes 1, \la + 1 \otimes \p)\tilde{A}_{ki}^q(\la, 1 \otimes \p)\Big)\nonumber\\
			& \quad \quad - \sum_i \Big(C_k^{iq}(-\la -1 \otimes \p, 1 \otimes \p)\tilde{A}_{li}^p(-\la -\p \otimes 1 - 1 \otimes \p, \p \otimes 1) \nonumber\\
			& \quad \quad + C_k^{pi}(\p \otimes 1, -\la - \p \otimes 1)\tilde{A}_{li}^q(-\la -\p \otimes 1 - 1\otimes \p, 1 \otimes \p)\Big)\nonumber\\
			& \quad \quad + \sum_i \Big(\tilde{C}_l^{iq}(\la + \p \otimes 1, 1 \otimes \p)A_{ki}^p(\la ,\p \otimes 1) + \tilde{C}_l^{pi}(\p \otimes 1, \la + 1 \otimes \p)A_{ki}^q(\la, 1 \otimes \p)\Big)\nonumber\\
			& \quad \quad - \sum_i \Big(\tilde{C}_k^{iq}(-\la -1 \otimes \p, 1 \otimes \p)A_{li}^p(-\la -\p \otimes 1 - 1 \otimes \p, \p \otimes 1) \nonumber\\
			& \quad \quad + \tilde{C}_k^{pi}(\p \otimes 1, -\la - \p \otimes 1)A_{li}^q(-\la -\p \otimes 1 - 1\otimes \p, 1 \otimes \p)\Big),
		\end{align}
			which is equivalent (after renaming the variables: $\p\otimes1=\la$, $\la=\mu$, $1\otimes\p=\p$) to the coefficients of 
			$e_t$ in \eqref{jacobi and cocycle}. This proves that the compatible Jacobi identity for
			$\tilde{\A}\bowtie_{\mathfrak{ad}^*_1,\mathfrak{ad}^*_2}^{\ad^*_1,\ad^*_2}\tilde{\A}^{*c}$
			and the compatible cocycle identity \eqref{compatible cocycle} are equivalent. 
\end{proof}

\begin{theo}\label{thm:compatible-matched-pair-bialgebra}
		Let $\tilde{\A} = (\A,[\cdot_\la \cdot]^{1},[\cdot_\la \cdot]^{2})$ be a compatible Lie conformal algebra whose underlying $\C[\partial]$-module is free of finite rank. 
    Suppose that its conformal dual carries a compatible Lie conformal algebra structure $\tilde{\A}^{*c} = (\A^{*c},\{\cdot_\la\cdot\}^1,\{\cdot_\la\cdot\}^2)$. Denote by $\de_1$ and $\de_2$ the compatible Lie conformal coalgebra structure induced on $\A\simeq \A^{**c}$ by Proposition \ref{compatible dual coalgebra} (cf. Remark \ref{2dual}). Then $(\A,\A^{*c};\ad^*_{1},\ad^*_{2},\mathfrak{ad}^*_{1},\mathfrak{ad}^*_{2})$
	is a matched pair of compatible Lie conformal algebras if and only if $(\A,[\cdot _\la \cdot]^{1},[\cdot _\la \cdot]^{2},\de_1,\de_2)$ is a compatible Lie conformal bialgebra.
\end{theo}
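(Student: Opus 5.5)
The plan is to reduce the statement to the three constituent conditions of Definition \ref{com bia} and match each with a piece of the matched-pair data, using Theorem \ref{MP compatible} and Proposition \ref{prop:structure-constant-cocycle}.

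First, the conditions that hold automatically. By hypothesis $\tilde{\A}$ is a compatible Lie conformal algebra, so condition (i) of Definition \ref{com bia} holds. Since $\tilde{\A}^{*c}$ is a compatible Lie conformal algebra free of finite rank, Proposition \ref{compatible dual coalgebra} and Remark \ref{2dual} show that $(\A,\de_1,\de_2)$ is a compatible Lie conformal coalgebra, so (ii) holds. In particular each $(\A,\de_s)$ is a Lie conformal coalgebra, and so is $(\A,\de_1+\de_2)$.

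Because of this, by the Remark following Definition \ref{com bia}, the bialgebra side of the claim is equivalent to the following three statements:
\begin{itemize}
\item[(a)] $(\A,[\cdot_\la\cdot]^1,\de_1)$ is a Lie conformal bialgebra;
\item[(b)] $(\A,[\cdot_\la\cdot]^2,\de_2)$ is a Lie conformal bialgebra;
\item[(c)] the compatible cocycle identity \eqref{compatible cocycle} holds.
\end{itemize}

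On the matched-pair side, I will use the equivalence (i)$\Leftrightarrow$(iii) of Theorem \ref{MP compatible}. The 6-tuple is a matched pair of compatible LCAs if and only if $\tilde{\A}\bowtie\tilde{\A}^{*c}$, with brackets \eqref{A bowtie B 1}--\eqref{A bowtie B 2} built from the coadjoint actions, is a compatible Lie conformal algebra. That is, the following three things must hold:
\begin{itemize}
\item[(a$'$)] $[\!\![\cdot_\la\cdot]\!\!]^1$ is a Lie conformal bracket;
\item[(b$'$)] $[\!\![\cdot_\la\cdot]\!\!]^2$ is a Lie conformal bracket;
\item[(c$'$)] the compatible Jacobi identity holds for the pair.
\end{itemize}
Equivalently, by Definition \ref{CMPD}, $(\A,\A^{*c};\ad^*_s,\mathfrak{ad}^*_s)$ must be a matched pair for $s=1,2$, together with the mixed condition.

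For (a$'$)$\Leftrightarrow$(a) and (b$'$)$\Leftrightarrow$(b), I apply Theorem \ref{conformal Matched pair and Manin and bialgebra} (Liberati's theorem) to the single Lie conformal algebras $(\A,[\cdot_\la\cdot]^s)$ and $(\A^{*c},\{\cdot_\la\cdot\}^s)$. The coalgebra induced by Proposition \ref{bialgebra prop1} from $\{\cdot_\la\cdot\}^s$ is exactly $\de_s$, by the explicit formula in Proposition \ref{compatible dual coalgebra}. Liberati's theorem then gives that $(\A,\A^{*c};\ad^*_s,\mathfrak{ad}^*_s)$ is a matched pair if and only if $(\A,[\cdot_\la\cdot]^s,\de_s)$ is a Lie conformal bialgebra.

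For (c$'$)$\Leftrightarrow$(c), I invoke Proposition \ref{prop:structure-constant-cocycle}: the compatible Jacobi identity for $\tilde{\A}\bowtie\tilde{\A}^{*c}$ is equivalent to \eqref{compatible cocycle}. Combining the three equivalences gives the theorem in both directions.

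The main obstacle is the bookkeeping in the last step. The compatible Jacobi identity on $\A\oplus\A^{*c}$ splits into cases $R(x,y,z)$, $R(f,g,h)$, $R(x,y,f)$ and $R(f,g,x)$.
\begin{itemize}
\item $R(x,y,z)$ and $R(f,g,h)$ hold by hypothesis.
\item In each mixed case, the components in the ``same-type'' summand reduce to the compatible Jacobi identity of a semidirect product with the dual representations. These hold by the Proposition on dual representations together with Theorem \ref{prop3.7}.
\item The remaining cross components are exactly what Proposition \ref{prop:structure-constant-cocycle} identifies with \eqref{compatible cocycle}.
\end{itemize}
I will check explicitly that the case with two elements of $\A^{*c}$ and one element of $\A$ yields the same identity after pairing with the nondegenerate form \eqref{standard Manin}, using the invariance computations from the proof of Theorem \ref{MP-triple}. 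The point is to confirm that it adds no condition beyond \eqref{compatible cocycle}.

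As an alternative for this step, I could pass through Theorem \ref{MP-triple}: use invariance to show that $\langle R(f,g,x),y\rangle$ equals a pairing of $R(x,y,\cdot)$-type components with $f\otimes g$. This would reduce everything to a single case.
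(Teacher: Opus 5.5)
Your proposal is correct and follows the paper's proof essentially step for step. The paper likewise applies Theorem \ref{conformal Matched pair and Manin and bialgebra} to each bracket $s=1,2$ separately, notes that compatibility of the brackets and of $(\A,\de_1,\de_2)$ is automatic, and uses Theorem \ref{MP compatible} to reduce what remains to Proposition \ref{prop:structure-constant-cocycle}. Your extra check of the mixed case with two elements of $\A^{*c}$, which the paper handles inside that proposition by interchanging $(A,\tilde A)$ and $(C,\tilde C)$, adds care but does not change the route.
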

\begin{proof}
	By Theorem \ref{conformal Matched pair and Manin and bialgebra}, for $s=1,2$,
	$(\A,\A^{*c};\ad_s^*,\mathfrak{ad}_s^*)$ is a matched pair of LCAs if and only if
	$(\A,[\cdot_\la\cdot]^s,\de_s)$ is a Lie conformal bialgebra.
	The compatibility of the two brackets on $\A$ is assumed, and the compatibility of
	$(\A,\de_1,\de_2)$ follows from Proposition \ref{compatible dual coalgebra}.
	Thus, by Theorem \ref{MP compatible} and Definition \ref{com bia}, the only
	remaining comparison is between the compatible Jacobi identity for
	$\tilde{\A}\bowtie_{\mathfrak{ad}^*_1,\mathfrak{ad}^*_2}^{\ad^*_1,\ad^*_2}\tilde{\A}^{*c}$ and
	the compatible cocycle identity \eqref{compatible cocycle}. This is exactly Proposition
	\ref{prop:structure-constant-cocycle}. The theorem follows.
\end{proof}
\begin{coro}\label{compatible conformal Matched pair and Manin and bialgebra}
		Let $\tilde{\A} = (\A,[\cdot_\la \cdot]^{1},[\cdot_\la \cdot]^{2})$ be a compatible Lie conformal algebra whose underlying $\C[\partial]$-module is free of finite rank. 
    Suppose that its conformal dual carries a compatible Lie conformal algebra structure $\tilde{\A}^{*c} = (\A^{*c},\{\cdot_\la\cdot\}^1,\{\cdot_\la\cdot\}^2)$. Denote by $\de_1$ and $\de_2$ the compatible Lie conformal coalgebra structure induced on $\A\simeq \A^{**c}$ by Proposition \ref{compatible dual coalgebra} (cf. Remark \ref{2dual}). 
	Then the following conditions are equivalent:
    \begin{itemize}[(c)]
	\item[(a)] $(\tilde{\A},\tilde{\A}^{*c};\ad^*_{1},\ad^*_{2},\mathfrak{ad}^*_{1},\mathfrak{ad}^*_{2})$ is a matched pair of compatible Lie conformal algebras;
    \item[(b)] $(\tilde{\A} \oplus \tilde{\A}^{*c},\tilde{\A},\tilde{\A}^{*c})$ is a standard Manin triple of compatible Lie conformal algebras $\A$ and $\A^{*c}$;
	\item[(c)] $(\A,[\cdot_\la\cdot]^1,[\cdot_\la\cdot]^2,\de_1,\de_2)$ is a compatible Lie conformal bialgebra.
    \end{itemize}
\end{coro}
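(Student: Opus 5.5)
The corollary is obtained by chaining the two equivalences already proved in this section; essentially no new computation is needed. The plan is to prove (a)$\Leftrightarrow$(b) and (a)$\Leftrightarrow$(c) separately, both under the standing hypotheses that $\A$ is free of finite rank, that $\tilde{\A}^{*c}$ is a compatible Lie conformal algebra, and that $\de_1,\de_2$ are the cobrackets induced on $\A\simeq\A^{**c}$ by Proposition \ref{compatible dual coalgebra} and Remark \ref{2dual}.

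For (a)$\Leftrightarrow$(b), I will invoke Theorem \ref{MP-triple} directly. Its hypotheses are exactly those of the corollary, and its conclusion is that the standard compatible conformal Manin triple $(\tilde{\A}\oplus\tilde{\A}^{*c},\tilde{\A},\tilde{\A}^{*c})$, with the form \eqref{standard Manin}, exists if and only if the 6-tuple $(\A,\A^{*c};\ad^*_1,\ad^*_2,\mathfrak{ad}^*_1,\mathfrak{ad}^*_2)$ is a matched pair of compatible LCAs. One point needs a remark. In (b), the compatible Lie conformal algebra structure on $\A\oplus\A^{*c}$ is not specified in advance. The converse direction of the proof of Theorem \ref{MP-triple} shows that any such structure is forced to be $\tilde{\A}\bowtie\tilde{\A}^{*c}$ with coadjoint actions: invariance together with isotropy pins down the mixed brackets. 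So (b) is unambiguous.

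For (a)$\Leftrightarrow$(c), I will invoke Theorem \ref{thm:compatible-matched-pair-bialgebra}, which is precisely this equivalence. For the reader it is worth recalling how that theorem is built:
\begin{itemize}
\item Theorem \ref{conformal Matched pair and Manin and bialgebra}, applied for $s=1,2$, handles each bracket separately.
\item Proposition \ref{compatible dual coalgebra} supplies the compatible co-Jacobi identity.
\item Proposition \ref{prop:structure-constant-cocycle} identifies the remaining mixed compatible Jacobi identity on $\tilde{\A}\bowtie\tilde{\A}^{*c}$ with the compatible cocycle identity \eqref{compatible cocycle}.
\end{itemize}
Combining the two equivalences gives (a)$\Leftrightarrow$(b)$\Leftrightarrow$(c).

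The only point I expect to need care is a consistency check between the cobracket in (c) and the bracket on $\A^{*c}$ in (a) and (b). The cobracket $\de_s$ in (c) must be the one dual to $\{\cdot_\la\cdot\}^s$ under the identification $\A\simeq\A^{**c}$, so that the matched pair in (a) uses exactly the coadjoint actions $\mathfrak{ad}^*_s$ of that same bracket. Since both cited theorems are stated with this same identification, this check is notational. No step should present a real obstacle.
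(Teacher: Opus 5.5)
Your proposal is correct and matches how the paper obtains this corollary, which it states without a separate proof: (a)$\Leftrightarrow$(b) is Theorem~\ref{MP-triple} and (a)$\Leftrightarrow$(c) is Theorem~\ref{thm:compatible-matched-pair-bialgebra}. Your side remark is also consistent with the paper: in (b), invariance and isotropy force the mixed brackets to be the coadjoint actions, exactly as in the converse half of the proof of Theorem~\ref{MP-triple}.
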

\begin{coro}\label{dual compatible bialgebra}
	If $(\A,[\cdot_\la \cdot]^{1},[\cdot_\la \cdot]^{2},\de_1,\de_2)$ is a compatible Lie conformal bialgebra and $\A$ is free of finite rank, then there is a compatible Lie conformal bialgebra structure on $\A^{*c}$. 
\end{coro}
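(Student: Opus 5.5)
The plan is to obtain the structure on $\A^{*c}$ through the symmetric roles of $\A$ and $\A^{*c}$ in Corollary \ref{compatible conformal Matched pair and Manin and bialgebra}. First, since $(\A,\de_1,\de_2)$ is a finite compatible Lie conformal coalgebra, Proposition \ref{compatible dual} gives a compatible Lie conformal algebra structure $\tilde{\A}^{*c}=(\A^{*c},\{\cdot_\la\cdot\}^1,\{\cdot_\la\cdot\}^2)$. Second, because $\A$ is free of finite rank, Proposition \ref{compatible dual coalgebra} applied to $(\A,[\cdot_\la\cdot]^1,[\cdot_\la\cdot]^2)$ gives compatible cobrackets $\de_1^*,\de_2^*$ on $\A^{*c}$. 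I would then check, using the explicit formulas in Propositions \ref{bialgebra prop1} and \ref{compatible dual coalgebra} together with $\A^{**c}\simeq\A$, that the cobrackets induced on $\A$ from the brackets $\{\cdot_\la\cdot\}^i$ are exactly the original $\de_i$. This is the structure-constant identity $Q(x,y)=P(x,-x-y)$ read in both directions. Hence the hypotheses of Corollary \ref{compatible conformal Matched pair and Manin and bialgebra} hold for the pair $(\tilde{\A},\tilde{\A}^{*c})$, and condition (c) holds by assumption. Therefore condition (b) holds: $(\tilde{\A}\oplus\tilde{\A}^{*c},\tilde{\A},\tilde{\A}^{*c})$ is a standard compatible conformal Manin triple.

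Next I would swap the roles of the two summands. The form \eqref{standard Manin} satisfies $\langle a+f,b+g\rangle_\la=\langle f+a,g+b\rangle_\la$ when we regard $\A=(\A^{*c})^{*c}$ and use the symmetric pairing $f_\la(b)=b_{-\la}(f)$. So $(\tilde{\A}^{*c}\oplus\tilde{\A},\tilde{\A}^{*c},\tilde{\A})$ is again a standard compatible conformal Manin triple, now for the free finite rank compatible Lie conformal algebra $\tilde{\A}^{*c}$, whose conformal dual $\A$ carries the compatible structure $\tilde{\A}$. Invariance, isotropy and nondegeneracy are statements about the same bilinear form and the same brackets, so they transfer directly. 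The only thing to verify is that the identification $\A\simeq\A^{**c}$ turns the form into the standard one. I would do this with a dual basis: if $(a_i)_\la(a^j)=\delta_{ij}$, then viewing $a^j$ as a functional on $\A^{*c}$ via $(a^j)_\la(f)=f_{-\la}(a^j)$ gives a dual basis.

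Finally I would apply Corollary \ref{compatible conformal Matched pair and Manin and bialgebra} (b)$\Rightarrow$(c) to $\tilde{\A}^{*c}$. This yields that $(\A^{*c},\{\cdot_\la\cdot\}^1,\{\cdot_\la\cdot\}^2,\de_1^*,\de_2^*)$ is a compatible Lie conformal bialgebra, where $\de_i^*$ are the cobrackets induced from $[\cdot_\la\cdot]^i$ by Proposition \ref{compatible dual coalgebra}.

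I expect the main obstacle to be the bookkeeping in the double-dual identification. The signs $\la\leftrightarrow-\la$ in the pairing $(a^j)_\la(f)=f_{-\la}(a^j)$ must be tracked, and one must confirm that both the conformal sesquilinearity of the form and the induced cobracket formulas match the conventions of \eqref{standard Manin} and Proposition \ref{bialgebra prop1}. I would handle this by comparing structure polynomials in dual bases, rather than arguing abstractly.
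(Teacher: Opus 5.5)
Your proposal is correct and follows essentially the paper's argument: the paper also dualizes via Propositions \ref{compatible dual} and \ref{compatible dual coalgebra}, applies Corollary \ref{compatible conformal Matched pair and Manin and bialgebra}, swaps the roles of $\A$ and $\A^{*c}$ under $\A\simeq\A^{**c}$, and applies the corollary again. The only differences are that the paper does the swap on the matched pair, i.e.\ it regards it as $(\tilde{\A}^{*c},\tilde{\A};\mathfrak{ad}^*_1,\mathfrak{ad}^*_2,\ad^*_1,\ad^*_2)$ via condition (a), instead of on the Manin triple, and that it leaves implicit the double-dual check, which you make explicit, that the cobrackets recovered on $\A\simeq\A^{**c}$ from the dual brackets are the original $\de_1,\de_2$.
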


\begin{proof}
	By Proposition \ref{compatible dual}, the cobrackets $\de_1$ and $\de_2$ induce compatible Lie conformal brackets on $\A^{*c}$. Since $\A$ is free of finite rank, the brackets $[\cdot_\la\cdot]^1$ and $[\cdot_\la\cdot]^2$ induce compatible Lie conformal coalgebra structures on $\A^{*c}$ by Proposition \ref{compatible dual coalgebra}. 
	By Corollary \ref{compatible conformal Matched pair and Manin and bialgebra}, the compatible bialgebra structure on $\A$ is equivalent to the matched pair
	$(\tilde{\A},\tilde{\A}^{*c};\ad^*_{1},\ad^*_{2},\mathfrak{ad}^*_{1},\mathfrak{ad}^*_{2})$, which can be regarded as the matched pair
	$(\tilde{\A}^{*c},\tilde{\A};\mathfrak{ad}^*_{1},\mathfrak{ad}^*_{2},\ad^*_{1},\ad^*_{2})$ under the identification $\A\simeq\A^{**c}$.
	Applying the Corollary \ref{compatible conformal Matched pair and Manin and bialgebra} again yields a compatible Lie conformal bialgebra structure on $\A^{*c}$.
\end{proof}

\section{Coboundary compatible Lie conformal bialgebras}

In this section, we characterize coboundary compatible Lie conformal bialgebras defined by a tensor $r$. The characterization consists of the invariant symmetric-part condition for both brackets and the three conformal Yang--Baxter conditions in Theorem \ref{com YB}. We introduce the compatible conformal CYBE for comparison. The last two examples show, respectively, that the first two conformal Yang--Baxter conditions do not imply the third and that all three conditions do not imply the compatible conformal CYBE.
\begin{defi}(\cite{Li}){\rm
	A coboundary Lie  conformal  bialgebra is a triple $(\A,[\cdot_\la \cdot],r)$, with $r\in \A\otimes \A$, such that
	$(\A,[\cdot_\la \cdot],dr)$ is a Lie  conformal bialgebra, where $dr:= (dr)_{-\p^{\otimes 2}}$ such that
	\[
	dr(a)=(dr)_\la (a)\mid _{\la = -\p^{\otimes 2}}=a_\la r\mid _{\la = -\p^{\otimes 2}},
	\]
	for all $a\in\A$ and with $\p^{\otimes 2}=\p\otimes 1+1\otimes\p.$
	In this case, the element $r\in \A \otimes \A$ is said to be a coboundary structure.}
\end{defi}

Let $r\in \A\otimes \A$, with $r = \sum_i a_i\otimes b_i$. Define
\begin{align*}
	[\![r,r]\!] = \sum_{ij}\left([a_i {}_{\mu} a_j]\otimes b_i \otimes b_j\mid_{\mu = 1\otimes \partial \otimes 1} - a_i \otimes [a_j {}_{\mu} b_i]\otimes b_j\mid_{\mu = 1\otimes 1 \otimes \partial} - a_i \otimes a_j \otimes [b_j {}_{\mu} b_i]\mid_{\mu = 1\otimes \partial \otimes 1}\right).
\end{align*}

\begin{theo}(\cite{Li})\label{thm:Li-coboundary}
	Let $\A$ be a Lie conformal algebra and let $r\in \A \otimes \A$. The map
	\[
	\delta(a)=(dr)_\la (a)=a_\la r\mid_{\la = -\p^{\otimes 2}}, \forall a\in \A,
	\]
	is the cocommutator of a Lie conformal bialgebra structure on $\A$ if and only if the following conditions are satisfied:
    \begin{itemize}[(b)]
	\item[(a)] the symmetric part of $r$ is $\A$-invariant, that is:
	\begin{align*}
		a_\la(r+r^{21})\mid_{\la = -\p^{\otimes 2}} = 0,
	\end{align*}
		where $r^{21}=\sum_i b_i \otimes a_i$, if $r = \sum_i a_i \otimes b_i$.
	\item[(b)] $a_\la[\![r,r]\!]\mid_{\la=-\p^{\otimes 3}} = 0$, where $\p^{\otimes 3} = \p \otimes 1 \otimes 1 + 1 \otimes \p \otimes 1 + 1 \otimes 1 \otimes \p$.
    \end{itemize}
\end{theo}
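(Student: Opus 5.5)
This is Liberati's theorem, so I would follow the classical Drinfeld coboundary argument, adapted to the conformal setting, and organize it around the three defining requirements of a Lie conformal bialgebra. I would check them for $\delta = dr$ in the following order: the image lies in $\wedge^2\A$, the cocycle condition holds, and the co-Jacobi identity holds.

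\textbf{Cocycle condition.} This holds automatically. Since $\delta(a)=a_\lambda r|_{\lambda=-\p^{\otimes 2}}$ is a conformal $1$-coboundary of $\A$ with values in $\A\otimes\A$, the Jacobi identity gives
\[
[x_\lambda y]_\nu r=x_\lambda(y_{\nu-\lambda}r)-y_{\nu-\lambda}(x_\lambda r).
\]
Substituting $\nu=-\p^{\otimes 2}$ and using sesquilinearity to move the $\p$'s yields $\delta([x_\lambda y])=x_\lambda\delta(y)-y_{-\lambda-\p}\delta(x)$, which is exactly the cocycle identity. The same computation also shows that $\delta$ is a $\C[\p]$-module map: $(\p a)_\lambda r=-\lambda a_\lambda r$, and at $\lambda=-\p^{\otimes2}$ this becomes $\p^{\otimes 2}\delta(a)$.

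\textbf{Skew-symmetry.} Write $\tau$ for the flip. We have
\[
\delta(a)+\tau\delta(a)=a_\lambda(r+r^{21})|_{\lambda=-\p^{\otimes 2}},
\]
because $\tau$ commutes with the diagonal action. So $\delta(\A)\subset\wedge^2\A$ if and only if condition (a) holds.

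\textbf{Co-Jacobi identity.} This is the main step. Assuming (a), I would compute
\[
J(a):=(\id\otimes\delta)\delta(a)-\tau_{12}(\id\otimes\delta)\delta(a)-(\delta\otimes\id)\delta(a)
\]
for $r=\sum_i a_i\otimes b_i$. Expanding $\delta(a)=\sum_i\big([a_\lambda a_i]\otimes b_i+a_i\otimes[a_\lambda b_i]\big)|_{\lambda=-\p^{\otimes2}}$ and applying $\delta$ again produces double brackets. I would then use the Jacobi identity to rewrite every term of the form $[[a_\lambda a_i]_\mu a_j]$ as $a$ acting diagonally on a three-tensor, plus terms already carrying $a$ on the outside. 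The goal is to prove an identity of the form
\[
J(a)=a_\lambda[\![r,r]\!]|_{\lambda=-\p^{\otimes3}}+(\text{terms built from } a_\lambda(r+r^{21})|_{\lambda=-\p^{\otimes2}}),
\]
where the correction terms are tensored with $r$ or its components and placed by cyclic permutations. Under (a) the correction terms vanish, so co-Jacobi is equivalent to (b).

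\textbf{Main obstacle.} The hard part is the bookkeeping of the substitutions $\mu=1\otimes\p\otimes1$ or $1\otimes1\otimes\p$ and $\lambda=-\p^{\otimes3}$. At each step one must track which tensor factor the translation acts on, using $[x_\mu\p y]=(\p+\mu)[x_\mu y]$ and skew-symmetry $[b_{j\,\mu}b_i]=-[b_{i\,-\mu-\p}b_j]$ to match the three summands of $[\![r,r]\!]$.

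I would first verify the identity for a single $a$ against a simple tensor $r$ to fix the signs and substitutions, and only then sum over $i,j$. The converse direction — that co-Jacobi forces (b) — then follows directly from this identity once (a) is known.
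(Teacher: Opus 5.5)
Your outline is correct and is the standard argument. The paper only quotes this theorem from Liberati without proof, but your route is exactly the scheme it carries out for its compatible analogue in Lemma \ref{lem:mixed-coboundary-expansion} and Proposition \ref{prop:mixed-YB-identity}: the cocycle condition is automatic since $dr$ is a coboundary, skew-symmetry of $\delta$ is equivalent to (a), and co-Jacobi reduces via the Jacobi identity and skew-symmetry to $a_\lambda[\![r,r]\!]|_{\lambda=-\p^{\otimes 3}}$ modulo invariance terms. The one refinement is that the correction terms also use invariance of $r+r^{21}$ for elements other than $a$, for example the coefficients of $[a_\lambda a_i]$, or the tensor components of $\delta(a)$ when you trade $(\id\otimes\delta)\delta$ and $\tau_{12}(\id\otimes\delta)\delta$ for cyclic permutations of $(\delta\otimes\id)\delta$; this is harmless because (a) is assumed for every element of $\A$.
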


Now we define coboundary compatible Lie conformal bialgebras:
\begin{defi}\label{com cob}{\rm
	A coboundary compatible Lie conformal bialgebra is a quadruple $(\A,[\cdot_\la \cdot]^{1},[\cdot_\la \cdot]^{2},r)$, with $r = \sum_i a_i \otimes b_i \in \A \otimes \A$,
	such that $(\A,[\cdot_\la\cdot]^1,[\cdot_\la\cdot]^2,d_1 r,d_2 r)$ is a compatible Lie conformal bialgebra, with 
    \begin{align}  
    (d_1 r)_\la (a)\mid _{\la = -\p^{\otimes 2}} = (a_\la r)^{1}\mid_{\la = -\p^{\otimes 2}} = \sum_i ([a_\la a_i]^{1}\otimes b_i + a_i \otimes [a_\la b_i]^{1})\mid_{\la = -\p^{\otimes 2}},\label{coboundary1}\\
	(d_2 r)_\la (a)\mid _{\la = -\p^{\otimes 2}} = (a_\la r)^{2}\mid_{\la = -\p^{\otimes 2}} = \sum_i ([a_\la a_i]^{2}\otimes b_i + a_i \otimes [a_\la b_i]^{2})\mid_{\la = -\p^{\otimes 2}}. \label{coboundary2} 
    \end{align}}
   \end{defi}

Let $(\A,[\cdot_\la \cdot]^{1},[\cdot_\la \cdot]^{2})$ be a compatible LCA and 
let $r\in \A\otimes \A$, with $r = \sum_i a_i\otimes b_i$. Define
    \begin{align}
        [\![r,r]\!]^{1} &= \sum_{i,j}\Big([a_i {}_{\mu} a_j]^{1}\otimes b_i \otimes b_j\mid_{\mu = 1\otimes \partial \otimes 1}
        - a_i \otimes [a_j {}_{\mu} b_i]^{1}\otimes b_j\mid_{\mu = 1\otimes 1 \otimes \partial} - a_i \otimes a_j \otimes [b_j {}_{\mu} b_i]^{1}\mid_{\mu = 1\otimes \partial \otimes 1}\Big),\label{YB1}\\
	 [\![r,r]\!]^{2} &= \sum_{i,j}\Big([a_i {}_{\mu} a_j]^{2}\otimes b_i \otimes b_j\mid_{\mu = 1\otimes \partial \otimes 1}- a_i \otimes [a_j {}_{\mu} b_i]^{2}\otimes b_j\mid_{\mu = 1\otimes 1 \otimes \partial} - a_i \otimes a_j \otimes [b_j {}_{\mu} b_i]^{2}\mid_{\mu = 1\otimes \partial \otimes 1}\Big).\label{YB2}
\end{align}

\begin{lemm}\label{lem:mixed-coboundary-expansion}
		Let $(\A,[\cdot_\la \cdot]^{1},[\cdot_\la \cdot]^{2})$ be a compatible Lie conformal algebra and let $r = \sum_i a_i \otimes b_i\in \A \otimes \A$. Define $\de_1$ and $\de_2$  as in  \eqref{coboundary1} and \eqref{coboundary2}, respectively.   
		Then
		\begin{align}
		(\de_1 \otimes 1)(\de_2 (x)) &= \sum_{i,j}\Big([[x {}_{\lambda} a_i]^{2}{}_{\mu} a_j]^{1}\otimes b_j \otimes b_i + a_j \otimes [[x {}_{\lambda} a_i]^{2}{}_{\mu} b_j]^{1} \otimes b_i\nonumber\\
		&\quad+ [a_i {}_{\mu} a_j]^{1}\otimes b_j \otimes [x {}_{\lambda} b_i]^{2} + a_j \otimes [a_i {}_{\mu} b_j]^{1} \otimes [x {}_{\lambda} b_i]^{2}\Big)\Big|_{\lambda=-\partial^{\otimes 3},\:\:\mu=-(\partial^{\otimes 2} \otimes 1)},\label{mixed-coboundary-12}\\
		(\de_2 \otimes 1)(\de_1 (x)) &= \sum_{i,j}\Big([[x {}_{\lambda} a_i]^{1}{}_{\mu} a_j]^{2}\otimes b_j \otimes b_i + a_j \otimes [[x {}_{\lambda} a_i]^{1}{}_{\mu} b_j]^{2} \otimes b_i\nonumber\\
		&\quad+ [a_i {}_{\mu} a_j]^{2}\otimes b_j \otimes [x {}_{\lambda} b_i]^{1} + a_j \otimes [a_i {}_{\mu} b_j]^{2} \otimes [x {}_{\lambda} b_i]^{1}\Big)\Big|_{\lambda=-\partial^{\otimes 3},\:\:\mu=-(\partial^{\otimes 2} \otimes 1)}.\label{mixed-coboundary-21}
	\end{align}
\end{lemm}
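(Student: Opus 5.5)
This is a direct expansion, and I will carry it out only for \eqref{mixed-coboundary-12}; \eqref{mixed-coboundary-21} then follows by exchanging the labels $1\leftrightarrow 2$ throughout, since nothing in the argument uses a property that distinguishes the two brackets.

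\emph{Step 1: the first cobracket.} By \eqref{coboundary2},
\[
\de_2(x)=\sum_i\big([x_\lambda a_i]^2\otimes b_i+a_i\otimes[x_\lambda b_i]^2\big)\big|_{\lambda=-\p^{\otimes 2}} .
\]
Here the substitution is understood in the usual way: writing $[x_\lambda a_i]^2\otimes b_i=\sum_k p_k(\lambda)\,c_k\otimes b_i$, we replace $\lambda$ by $-(\p\otimes1+1\otimes\p)$ acting on the whole tensor.

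\emph{Step 2: apply $\de_1\otimes 1$.} Since $\de_1$ is a $\C[\p]$-module homomorphism, $\de_1\otimes 1$ commutes with the action of $\p\otimes 1$ and $1\otimes\p$. After applying it, the operator $\p\otimes 1$ becomes $\p\otimes1\otimes1+1\otimes\p\otimes1$ on the three-fold tensor, and $1\otimes\p$ becomes $1\otimes1\otimes\p$. Hence $\lambda=-\p^{\otimes2}$ turns into $\lambda=-\p^{\otimes 3}$.

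Next apply \eqref{coboundary1} to each first tensor factor, calling the new variable $\mu$ and taking the summation index $j$ for $r$:
\begin{itemize}
\item For $y=[x_\lambda a_i]^2$, the factor $\de_1(y)$ gives $\sum_j\big([y_\mu a_j]^1\otimes b_j+a_j\otimes[y_\mu b_j]^1\big)$, with $\mu$ replaced by minus the total derivative of the first two factors, i.e.\ $\mu=-(\p^{\otimes2}\otimes1)$. Tensoring with $b_i$ in the third slot produces the first two terms of \eqref{mixed-coboundary-12}.
\item For $y=a_i$, the factor $\de_1(a_i)$ gives $\sum_j\big([a_i{}_\mu a_j]^1\otimes b_j+a_j\otimes[a_i{}_\mu b_j]^1\big)$. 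Tensoring with $[x_\lambda b_i]^2$ produces the last two terms.
\end{itemize}

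\emph{Step 3: the main obstacle.} The only delicate point is making sure the two substitutions do not interfere. In the first two terms, $\lambda$ sits inside the argument of the $\mu$-bracket. I would handle this by writing $[x_\lambda a_i]^2=\sum_k p_k(\lambda)c_k$ and using bilinearity: $\de_1\otimes1$ acts on $c_k$ while $p_k(\lambda)$ is a polynomial coefficient, which is substituted only at the end as $\lambda=-\p^{\otimes3}$. This is legitimate because $\de_1\otimes1$ intertwines the $\p$-actions as described in Step 2.

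The order of the two substitutions also matters. The $\mu$-substitution is performed first, using only the derivatives on factors one and two of the element being acted on. The $\lambda$-substitution is performed afterwards, using the total derivative $\p^{\otimes3}$. This matches the stated ordering $\lambda=-\p^{\otimes3}$, $\mu=-(\p^{\otimes2}\otimes1)$.

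Finally, I would double-check conformal sesquilinearity: a $\p$ acting on the argument $y$ of $[y_\mu\cdot]$ becomes $-\mu$ inside the bracket. This is consistent with $\mu$ being evaluated at the derivative of the first two factors of the output, so no extra correction terms appear.
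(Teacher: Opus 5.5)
Your proposal is correct and takes essentially the same route as the paper. The paper writes $\de_2(x)=\sum_{i}\sum_{k\ge 0}\frac{(-1)^k}{k!}(\p^{\otimes 2})^k\big((x_{(k)}a_i)^2\otimes b_i+a_i\otimes(x_{(k)}b_i)^2\big)$, uses $(\de_1\otimes 1)\p^{\otimes 2}=\p^{\otimes 3}(\de_1\otimes 1)$, applies $\de_1$ termwise and resums; this is your Steps 1--3 with $p_k(\la)=\la^k/k!$.

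One small correction: the order of the two substitutions does not actually matter. Since $\p^{\otimes 3}$ and $\p^{\otimes 2}\otimes 1$ are commuting operators on $\A^{\otimes 3}$, the evaluation $\la=-\p^{\otimes 3}$, $\mu=-(\p^{\otimes 2}\otimes 1)$ is unambiguous.
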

\begin{proof}
		We have 
	\begin{align*}
		(\de_1 \otimes 1)(\de_2(x)) &= (\de_1 \otimes 1)\Big(\sum_i ([x_\la a_i]^{2}\otimes b_i + a_i \otimes [x_\la b_i]^{2})\big|_{\la=-\p^{\otimes 2}}\Big)\\
	                          &= (\de_1 \otimes 1)\Big(\sum_i \sum_{k\ge 0}\frac{(-1)^k}{k!}(\p \otimes 1 + 1 \otimes \p)^k((x_{(k)} a_i)^{2} \otimes b_i + a_i \otimes (x_{(k)} b_i)^{2}\Big).
	\end{align*}
	Using that $(\de_1 \otimes 1)(\p^{\otimes 2}(a\otimes b))$ = $\p^{\otimes 3}((\de_1 \otimes 1)(a\otimes b))$, we obtain 
	\begin{align*}
	(\de_1 \otimes 1)(\de_2(x)) &= \sum_i \sum_{k\ge 0}\frac{(-1)^k}{k!}(\partial^{\otimes 3})^k\left(\de_1((x_{(k)} a_i)^{2}) \otimes b_i + \de_1(a_i) \otimes (x_{(k)} b_i)^{2}\right)\\
		&= \sum_{i,j} \sum_{k\ge 0}\frac{(-1)^k}{k!}(\partial^{\otimes 3})^k\Big([(x_{(k)}a_i)^{2}{}_{\mu} a_j]^{1} \otimes b_j \otimes b_i + a_j \otimes [(x_{(k)}a_i)^{2}{}_{\mu} b_j]^{1} \otimes b_i\\
	&\quad+ [a_i {}_{\mu} a_j]^{1}\otimes b_j \otimes (x_{(k)}b_i)^{2} + a_j \otimes [a_i {}_{\mu} b_j]^{1} \otimes (x_{(k)}b_i)^{2}\Big)\Big|_{\mu = -(\partial^{\otimes 2}\otimes 1)}\\
		&= \sum_{i,j}\Big([[x {}_{\lambda} a_i]^{2}{}_{\mu} a_j]^{1}\otimes b_j \otimes b_i + a_j \otimes [[x {}_{\lambda} a_i]^{2}{}_{\mu} b_j]^{1} \otimes b_i\\
		&\quad+ [a_i {}_{\mu} a_j]^{1}\otimes b_j \otimes [x {}_{\lambda} b_i]^{2} + a_j \otimes [a_i {}_{\mu} b_j]^{1} \otimes [x {}_{\lambda} b_i]^{2}\Big)\Big|_{\lambda=-\partial^{\otimes 3},\:\:\mu=-(\partial^{\otimes 2} \otimes 1)}.
	\end{align*}
		This proves \eqref{mixed-coboundary-12}. Similarly, \eqref{mixed-coboundary-21} also holds.
\end{proof}

\begin{prop}\label{prop:mixed-YB-identity}
		Let $(\A,[\cdot_\la \cdot]^{1},[\cdot_\la \cdot]^{2})$ be a compatible Lie conformal algebra and let $r = \sum_i a_i \otimes b_i\in \A \otimes \A$. Define $\de_1$ and $\de_2$ as in \eqref{coboundary1} and \eqref{coboundary2}, respectively. Suppose that the symmetric part of $r$ is $\A$-invariant with respect to both brackets, that is, for every $u\in\A$,
	\begin{align}
		\Big(u_\la (r+r^{21})\Big)^{1}\mid_{\la = -\p^{\otimes 2}} = 0,\quad \Big(u_\la (r+r^{21})\Big)^{2}\mid_{\la = -\p^{\otimes 2}} = 0,\label{sym invariance}
	\end{align}
	where $r^{21} = \sum_i b_i \otimes a_i$. Then for every $x\in\A$,
	\begin{align}
		\sum_{c.p.}\Big((\de_1 \otimes 1)(\de_2(x)) + (\de_2 \otimes 1)(\de_1(x))\Big)
		+ (x_\la[\![r,r]\!]^{1})^{2}\mid_{\la = -\p^{\otimes 3}}
		+ (x_\la[\![r,r]\!]^{2})^{1}\mid_{\la = -\p^{\otimes 3}} = 0,\label{mixed YB identity}
	\end{align}
	where $\sum_{c.p.}$ applies to the whole expression inside the parentheses and means that we add the two nontrivial cyclic permutations of the tensor factors in $\A \otimes \A \otimes \A$.
\end{prop}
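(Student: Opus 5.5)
Our approach is polarization, so that almost nothing has to be computed by hand. For a single Lie conformal algebra $(\A,[\cdot_\la\cdot])$ with coboundary cobracket $\de=dr$, Liberati's computation behind Theorem \ref{thm:Li-coboundary} gives an identity of the form
\[
\sum_{c.p.}(\de\otimes 1)\de(x) + 2\,\big(x_\la[\![r,r]\!]\big)\big|_{\la=-\p^{\otimes 3}} = 0,
\]
valid whenever the symmetric part of $r$ is invariant. The factor $2$ must be read off from Liberati's normalization; the plan does not depend on its value. All terms in this identity are quadratic in the bracket: $\de$ is linear in the bracket, $[\![r,r]\!]$ is linear in it, and the outer $x_\la$ action is linear in it as well.

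Apply this identity to the bracket $[\cdot_\la\cdot]^{t}=[\cdot_\la\cdot]^1+t[\cdot_\la\cdot]^2$, for $t\in\C$. This is a Lie conformal bracket by Proposition \ref{compatible prop}(ii). Its coboundary cobracket is $\de_1+t\de_2$, and $[\![r,r]\!]^{t}=[\![r,r]\!]^1+t[\![r,r]\!]^2$. The hypothesis \eqref{sym invariance} is linear in the bracket, so the symmetric part of $r$ is invariant for $[\cdot_\la\cdot]^t$ as well.

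The left-hand side is then a polynomial of degree two in $t$ that vanishes for every $t$. Its $t^1$ coefficient is
\[
\sum_{c.p.}\big((\de_1\otimes1)\de_2(x)+(\de_2\otimes1)\de_1(x)\big)+2\big((x_\la[\![r,r]\!]^1)^2+(x_\la[\![r,r]\!]^2)^1\big)\big|_{\la=-\p^{\otimes3}}.
\]
If the normalization in the single-bracket identity is $1$ rather than $2$, this is exactly \eqref{mixed YB identity}.

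The main obstacle is that Liberati's theorem is stated as an equivalence of conditions, not as this explicit identity. We would therefore re-derive the identity directly, which also fixes the constant. Start from Lemma \ref{lem:mixed-coboundary-expansion} for the two mixed terms. Expand $(x_\la[\![r,r]\!]^1)^2$ by letting $x$ act with bracket $2$ on each of the three tensor factors of each of the three summands of \eqref{YB1}, and do the same for $(x_\la[\![r,r]\!]^2)^1$.

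Next, use the compatible Jacobi identity \eqref{compatible Jacobi} to rewrite $[x_\la[a_{i\,\mu}a_j]^1]^2+[x_\la[a_{i\,\mu}a_j]^2]^1$, together with the analogous combinations involving $b$'s, as sums of terms in which $x$ first hits a single factor. After substituting $\la=-\p^{\otimes3}$, $\mu=-(\p^{\otimes2}\otimes1)$ and using sesquilinearity and skew-symmetry to move $\p$'s, these terms cancel against the cyclic sums of \eqref{mixed-coboundary-12} and \eqref{mixed-coboundary-21}. The leftover terms are those in which $x$ acts on the pair $(a_i,b_i)$ in the form $([x_\la a_i]\otimes b_i+a_i\otimes[x_\la b_i])$, placed in two different slots. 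After a cyclic permutation they combine into $(x_\la(r+r^{21}))$ contracted with one more bracket, and they vanish by \eqref{sym invariance}, applied once for each bracket.

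The delicate bookkeeping is matching the $\p$-substitutions across the cyclic permutations. We would organize this by indexing terms by which tensor slot $x$ lands in and which bracket is applied first.
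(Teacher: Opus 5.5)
Your polarization step is correct, and it is a genuinely different route from the paper's. For every $t\in\C$, the bracket $[\cdot_\la\cdot]^1+t[\cdot_\la\cdot]^2$ is a Lie conformal bracket by Proposition \ref{compatible prop}. Its coboundary cobracket is $\de_1+t\de_2$, its Yang--Baxter tensor is $[\![r,r]\!]^1+t[\![r,r]\!]^2$, and \eqref{sym invariance} passes to it. Every term is quadratic in the bracket, so the $t^1$-coefficient of a one-bracket identity $\sum_{c.p.}(\de\otimes1)\de(x)+c\,(x_\la[\![r,r]\!])|_{\la=-\p^{\otimes3}}=0$ is exactly \eqref{mixed YB identity} with the same $c$.

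The paper never passes through a one-bracket identity. It expands the mixed terms via Lemma \ref{lem:mixed-coboundary-expansion} together with the mixed actions on $[\![r,r]\!]^{1}$ and $[\![r,r]\!]^{2}$. It then cancels them using skew-symmetry, the compatible Jacobi identity for the triples $(x,a_i,a_j)$, $(a_j,x,b_i)$, $(b_i,x,b_j)$, and repeated use of \eqref{sym invariance}. Each cancellation pairs a term with its tilde partner, so the paper's proof is the bilinearization of the one-bracket computation. Your route, once completed, halves that bookkeeping. It also uses compatibility only through the fact that $[\cdot_\la\cdot]^1+t[\cdot_\la\cdot]^2$ is a Lie conformal bracket.

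The weak point is that all the content now sits in the one-bracket identity, and you do not supply it. Theorem \ref{thm:Li-coboundary} is only an equivalence of conditions, so you cannot read an explicit identity, or its constant, off it.

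The constant is also not free. Setting $[\cdot_\la\cdot]^1=[\cdot_\la\cdot]^2$ in the proposition gives exactly the one-bracket identity with $c=1$. So your reduction is really an equivalence between the proposition and that identity with $c=1$. Since $x_\la[\![r,r]\!]|_{\la=-\p^{\otimes3}}$ does not vanish in general, any other value of $c$ gives a false identity. Your guess $c=2$ is therefore wrong, and so is the remark that the plan does not depend on $c$.

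A consistency check: take $\mathrm{Cur}\,\mathfrak g$ with constant $r$ and $x\in\mathfrak g$. In $U(\mathfrak g)^{\otimes3}$ one has $(\de\otimes1)\de(x)=[[X,r_{13}+r_{23}],r_{12}]$, where $X=x_1+x_2+x_3$ and $x_k$ is $x$ in the $k$-th factor. If $r+r^{21}$ is invariant, the cyclic sum of this equals $-[X,[r_{12},r_{13}]+[r_{12},r_{23}]+[r_{13},r_{23}]]$, which is $c=1$.

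Finally, the re-derivation you sketch in your last two paragraphs is the mixed computation with the compatible Jacobi identity, which is the paper's proof; once it is carried out, the polarization does no work. To make your approach stand on its own and actually be shorter, derive the unmixed identity with $c=1$ and then polarize. That derivation is the paper's computation with all tildes deleted, using only the ordinary Jacobi identity and invariance of $r+r^{21}$ for a single bracket.
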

\begin{proof}
	Using Lemma \ref{lem:mixed-coboundary-expansion}, we have 
		\begin{align}
			\sum_{\text{c.p.}}&(\de_1 \otimes 1)(\de_2(x))\nonumber\\
			&= \sum_{i,j}\Big([[x {}_{\lambda} a_i]^{2}{}_{\mu} a_j]^{1}\otimes b_j \otimes b_i + a_j \otimes [[x {}_{\lambda} a_i]^{2}{}_{\mu} b_j]^{1} \otimes b_i\nonumber\\
			&\quad+ [a_{i}{}_{\mu} a_{j}]^{1} \otimes b_{j} \otimes [x{}_{\lambda} b_{i}]^{2} + a_{j} \otimes [a_{i} {}_{\mu} b_{j}]^{1} \otimes [x{}_{\lambda} b_{i}]^{2}\Big)\Big|_{\lambda = -\partial^{\otimes 3},\:\:\mu = -(\partial^{\otimes 2}\otimes 1)}\nonumber\\
			&\quad + \sum_{i,j}\Big(b_j \otimes b_i \otimes [[x {}_{\lambda} a_i]^{2}{}_{\mu} a_j]^{1} + [[x {}_{\lambda} a_i]^{2}{}_{\mu} b_j]^{1} \otimes b_i \otimes a_j\nonumber\\
			&\quad+ b_j \otimes [x{}_{\lambda} b_i]^{2} \otimes [a_i {}_{\mu} a_j]^{1} +[a_i {}_{\mu} b_j]^{1} \otimes [x{}_{\lambda} b_i]^{2} \otimes a_j\Big)\Big|_{\lambda = -\partial^{\otimes 3},\:\:\mu = -(1 \otimes 1 \otimes \partial + \partial \otimes 1 \otimes 1)}\nonumber\\
			&\quad + \sum_{i,j}\Big(b_i \otimes [[x {}_{\lambda} a_i]^{2}{}_{\mu} a_j]^{1} \otimes b_j + b_i \otimes a_j \otimes [[x {}_{\lambda} a_i]^{2}{}_{\mu} b_j]^{1}\nonumber\\
			&\quad+ [x{}_{\lambda} b_i]^{2} \otimes [a_i {}_{\mu} a_j]^{1} \otimes b_j + [x{}_{\lambda} b_i]^{2} \otimes a_j \otimes [a_i {}_{\mu} b_j]^{1}\Big)\Big|_{\lambda = -\partial^{\otimes 3},\:\:\mu = -(1\otimes \partial^{\otimes 2})},\label{compatible coadjoint 1}
		\end{align}
		\begin{align}
			\sum_{\text{c.p.}}&(\de_2 \otimes 1)(\de_1(x))\nonumber\\
			&= \sum_{i,j}\Big([[x {}_{\lambda} a_i]^{1}{}_{\mu} a_j]^{2} \otimes b_j \otimes b_i + a_j \otimes [[x {}_{\lambda} a_i]^{1}{}_{\mu} b_j]^{2} \otimes b_i \nonumber\\
			&\quad  + [a_i{}_{\mu} a_j]^{2} \otimes b_j \otimes [x{}_{\lambda} b_i]^{1} + a_j \otimes [a_i {}_{\mu} b_j]^{2} \otimes [x{}_{\lambda} b_i]^{1}\Big) \Big|_{\lambda = -\partial^{\otimes 3},\:\: \mu = -(\partial^{\otimes 2} \otimes 1)} \nonumber\\
			&\quad + \sum_{i,j}\Big(b_j \otimes b_i \otimes [[x {}_{\lambda} a_i]^{1}{}_{\mu} a_j]^{2} + [[x {}_{\lambda} a_i]^{1}{}_{\mu} b_j]^{2} \otimes b_i \otimes a_j \nonumber\\
			&\quad + b_j \otimes [x{}_{\lambda} b_i]^{1} \otimes [a_i {}_{\mu} a_j]^{2} + [a_i {}_{\mu} b_j]^{2} \otimes [x{}_{\lambda} b_i]^{1} \otimes a_j \Big) \Big|_{\lambda = -\partial^{\otimes 3}, \:\:\mu = -(1 \otimes 1 \otimes \partial + \partial \otimes 1 \otimes 1)} \nonumber\\
			&\quad + \sum_{i,j}\Big(b_i \otimes [[x {}_{\lambda} a_i]^{1}{}_{\mu} a_j]^{2} \otimes b_j + b_i \otimes a_j \otimes [[x {}_{\lambda} a_i]^{1}{}_{\mu} b_j]^{2} \nonumber\\
			&\quad + [x{}_{\lambda} b_i]^{1} \otimes [a_i {}_{\mu} a_j]^{2} \otimes b_j + [x{}_{\lambda} b_i]^{1} \otimes a_j \otimes [a_i {}_{\mu} b_j]^{2}\Big) \Big|_{\lambda = -\partial^{\otimes 3}, \:\:\mu = -(1 \otimes \partial^{\otimes 2})}\label{compatible coadjoint 2}
		\end{align}				
	We label the twelve summands in \eqref{compatible coadjoint 1} by
	$(1),\ldots,(12)$, reading from left to right and from the first cyclic block to the third cyclic block. Thus, for example, $(1)$ is
	$[[x_\la a_i]^{2}_\mu a_j]^{1}\otimes b_j \otimes b_i$,
	$(5)$ is $b_j \otimes b_i \otimes [[x {}_{\lambda} a_i]^{2}{}_{\mu} a_j]^{1}$, and
	$(9)$ is $b_i \otimes [[x {}_{\lambda} a_i]^{2}{}_{\mu} a_j]^{1} \otimes b_j$,
	each with the evaluation displayed in its cyclic block.
	The corresponding twelve summands in \eqref{compatible coadjoint 2} are denoted by
	$\tilde{(1)},\ldots,\tilde{(12)}$.

	On the other hand, we have 
		\begin{align}
			(x {}_{\lambda} [\![r,r]\!]^{2})^{1} \mid_{\lambda = -\partial^{\otimes 3}}
			&= \sum_{i,j}[ x {}_{\lambda}([a_i{}_{\mu} a_j]^{2}\otimes b_i \otimes b_j\mid_{\mu = 1 \otimes \partial \otimes 1})\nonumber\\
			& \quad - x {}_{\lambda} (a_i \otimes [a_j {}_{\mu} b_i]^{2} \otimes b_j\mid_{\mu = 1 \otimes 1\otimes \partial}) - x {}_{\lambda}(a_i \otimes a_j \otimes [b_j {}_{\mu} b_i]^{2}\mid_{\mu = 1\otimes \partial \otimes 1})]^{1}\mid_{\lambda = -\partial^{\otimes 3}}\nonumber\\
			&= \sum_{i,j}\Big([x {}_{\lambda} [a_i{}_{\mu} a_j]^{2}]^{1} \otimes b_i \otimes b_j\mid_{\mu = 1 \otimes \partial \otimes 1} + [a_i {}_{\mu} a_j]^{2} \otimes [x {}_{\lambda} b_i]^{1}\otimes b_j\mid_{\mu = -(\partial \otimes 1 \otimes 1 + 1 \otimes 1 \otimes \partial)}\nonumber\\
			& \quad +[a_i {}_{\mu} a_j]^{2} \otimes b_i \otimes [x {}_{\lambda} b_j]^{1}\mid_{\mu = 1 \otimes \partial \otimes 1} + [x {}_{\lambda} a_i]^{1} \otimes [b_i{}_{\mu} a_j]^{2} \otimes b_j\mid_{\mu = -(1 \otimes \partial \otimes 1 + 1 \otimes 1 \otimes \partial)}\nonumber\\
			& \quad + a_i \otimes [x {}_{\lambda} [b_i {}_{\mu} a_j]^{2}]^{1} \otimes b_j\mid_{\mu = -(1 \otimes \partial \otimes 1 + 1 \otimes 1 \otimes \partial)} + a_i \otimes [b_i {}_{\mu} a_j]^{2} \otimes [x {}_{\lambda} b_j]^{1}\mid_{\mu = \partial \otimes 1 \otimes 1}\nonumber\\
			& \quad + [x {}_{\lambda} a_i]^{1} \otimes a_j \otimes [b_i {}_{\mu} b_j]^{2}\mid_{\mu = -(1 \otimes \partial \otimes 1 + 1 \otimes 1 \otimes \partial)} + a_i \otimes [x {}_{\lambda} a_j]^{1} \otimes [b_i {}_{\mu} b_j]^{2}\mid_{\mu = \partial \otimes 1 \otimes 1}\nonumber\\
			& \quad + a_i \otimes a_j \otimes [x {}_{\lambda} [b_i {}_{\mu} b_j]^{2}]^{1}\mid_{\mu = -(1 \otimes \partial \otimes 1 + 1 \otimes 1 \otimes \partial)}\Big)\Big|_{\lambda = -\partial^{\otimes 3}}\label{compatible cybe 1},\\
			(x _{\lambda} [\![r,r]\!]^{1})^{2} \mid_{\lambda = -\partial^{\otimes 3}}
            &= \sum_{i,j}[x {}_{\lambda}([a_i{}_{\mu} a_j]^{1}\otimes b_i \otimes b_j\mid_{\mu = 1 \otimes \partial \otimes 1})\nonumber\\
            & \quad - x {}_{\lambda} (a_i \otimes [a_j {}_{\mu} b_i]^{1} \otimes b_j\mid_{\mu = 1 \otimes 1\otimes \partial}) - x {}_{\lambda}(a_i \otimes a_j \otimes [b_j {}_{\mu} b_i]^{1}\mid_{\mu = 1\otimes \partial \otimes 1})]^{2}\mid_{\lambda = -\partial^{\otimes 3}}\nonumber\\
            &= \sum_{i,j}\Big([x {}_{\lambda} [a_i{}_{\mu} a_j]^{1}]^{2} \otimes b_i \otimes b_j\mid_{\mu = 1 \otimes \partial \otimes 1} + [a_i {}_{\mu} a_j]^{1} \otimes [x {}_{\lambda} b_i]^{2}\otimes b_j\mid_{\mu = -(\partial \otimes 1 \otimes 1 + 1 \otimes 1 \otimes \partial)}\nonumber\\
            & \quad +[a_i {}_{\mu} a_j]^{1} \otimes b_i \otimes [x {}_{\lambda} b_j]^{2}\mid_{\mu = 1 \otimes \partial \otimes 1} + [x {}_{\lambda} a_i]^{2} \otimes [b_i{}_{\mu} a_j]^{1} \otimes b_j\mid_{\mu = -(1 \otimes \partial \otimes 1 + 1 \otimes 1 \otimes \partial)}\nonumber\\
            & \quad + a_i \otimes [x {}_{\lambda} [b_i {}_{\mu} a_j]^{1}]^{2} \otimes b_j\mid_{\mu = -(1 \otimes \partial \otimes 1 + 1 \otimes 1 \otimes \partial)} + a_i \otimes [b_i {}_{\mu} a_j]^{1} \otimes [x {}_{\lambda} b_j]^{2}\mid_{\mu = \partial \otimes 1 \otimes 1}\nonumber\\
            & \quad + [x {}_{\lambda} a_i]^{2} \otimes a_j \otimes [b_i {}_{\mu} b_j]^{1}\mid_{\mu = -(1 \otimes \partial \otimes 1 + 1 \otimes 1 \otimes \partial)} + a_i \otimes [x {}_{\lambda} a_j]^{2} \otimes [b_i {}_{\mu} b_j]^{1}\mid_{\mu = \partial \otimes 1 \otimes 1}\nonumber\\
            & \quad + a_i \otimes a_j \otimes [x {}_{\lambda} [b_i {}_{\mu} b_j]^{1}]^{2}\mid_{\mu = -(1 \otimes \partial \otimes 1 + 1 \otimes 1 \otimes \partial)}\Big)\Big|_{\lambda = -\partial^{\otimes 3}}\label{compatible cybe 2}.
		\end{align}
	We label the nine summands in \eqref{compatible cybe 1} by
	\circlednum{1},\ldots,\circlednum{9}, reading from left to right. Thus, for example,
	\circlednum{1} is $[x{}_\la[{a_i}_\mu a_j]^{2}]^{1}\otimes b_i\otimes b_j$,
	with the corresponding evaluation in $\la$ and $\mu$. The corresponding nine summands in
	\eqref{compatible cybe 2} are denoted by
	$\tilde{\circlednum{1}},\ldots,\tilde{\circlednum{9}}$.
	First, we have $(\la = -\p^{\otimes 3})$
	\begin{align*}
		(3) &= [a_i{}_{\mu} a_j]^{1} \otimes b_j \otimes [x{}_{\lambda} b_i]^{2}\mid_{\mu = -\partial \otimes 1 \otimes 1 - 1 \otimes \partial \otimes 1}\\
		    &= -[a_j{}_{-\mu-\partial \otimes 1 \otimes 1} a_i]^{1} \otimes b_j \otimes [x{}_{\lambda} b_i]^{2}\mid_{\mu = -\partial \otimes 1 \otimes 1 - 1 \otimes \partial \otimes 1}\\
			&= -[a_j{}_{\mu} a_i]^{1} \otimes b_j \otimes [x{}_{\lambda} b_i]^{2}\mid_{\mu = 1 \otimes \partial \otimes 1} = -\tilde{\circlednum{3}}.
	\end{align*}
		The same skew-symmetry calculation, with the two brackets interchanged and with the tensor
		positions cyclically permuted, gives
		$\tilde{(3)} + {\circlednum{3}} = 0$, $(4)+\tilde{\circlednum{6}} = 0$, and
		$\tilde{(4)} + \circlednum{6} = 0$.

		Interchanging the indices $i$ and $j$, using skew-symmetry, and then applying the compatible Jacobi identity to the triple $(x,a_i,a_j)$, we get
		\begin{align}
			&(1) + \circlednum{1} + \tilde{(1)} + \tilde{\circlednum{1}}\nonumber\\
			&= [[x {}_{\lambda} a_j]^{2}{}_{\mu} a_i]^{1}\otimes b_i \otimes b_j\mid_{\mu = -\partial^{\otimes 2}\otimes 1} + [x {}_{\lambda} [a_i{}_{\mu} a_j]^{2}]^{1} \otimes b_i \otimes b_j\mid_{\mu = 1\otimes \partial \otimes 1}\nonumber\\
			& \quad + [[x {}_{\lambda} a_j]^{1}{}_{\mu} a_i]^{2} \otimes b_i \otimes b_j\mid_{\mu = -\partial^{\otimes 2}\otimes 1} + [x {}_{\lambda} [a_i{}_{\mu} a_j]^{1}]^{2} \otimes b_i \otimes b_j\mid_{\mu = 1 \otimes \partial \otimes 1}\nonumber\\
			&= [[x {}_{\lambda} a_i]^{2}{}_{\mu} a_j]^{1} \otimes b_i \otimes b_j + [[x {}_{\lambda} a_i]^{1}{}_{\mu} a_j]^{2} \otimes b_i \otimes b_j\mid_{\mu = -(\partial \otimes 1 \otimes 1 + 1\otimes 1 \otimes \partial)}.\label{compatible cybe pf eq1}
		\end{align}
	Now, the invariance assumption \eqref{sym invariance} says explicitly that, for every $u\in\A$,
	\begin{align}
		\sum_i\Big(u_\la (a_i\otimes b_i + b_i \otimes a_i)\Big)^1\Big|_{\la = -\p^{\otimes 2}} = 0,\\
		\sum_i\Big(u_\la (a_i\otimes b_i + b_i \otimes a_i)\Big)^2\Big|_{\la = -\p^{\otimes 2}} = 0,
	\end{align}
	Together with \eqref{compatible cybe pf eq1}, this gives $(\la = -\p^{\otimes 3})$
	\begin{align*}
		&(6) + \tilde{(6)} + (1) + \circlednum{1} + \tilde{(1)} + \tilde{\circlednum{1}}\\
		&= ([[x_\la a_i]^{2}{}_\mu b_j]^{1} \otimes b_i \otimes a_j + [[x_\la a_i]^{1}{}_\mu b_j]^{2} \otimes b_i \otimes a_j + [[x_\la a_i]^{2}{}_\mu a_j]^{1} \otimes b_i \otimes b_j \\
		& \quad + [[x_\la a_i]^{1}{}_\mu a_j]^{2} \otimes b_i \otimes b_j)\mid_{\mu = -(\p \otimes 1 \otimes 1 + 1\otimes 1 \otimes \p)}\\
		&= (-a_j \otimes b_i \otimes [[x_\la a_i]^{2}{}_\mu b_j]^{1} - a_j \otimes b_i \otimes [[x_\la a_i]^{1}{}_\mu b_j]^{2} - b_j \otimes b_i \otimes [[x_\la a_i]^{2}{}_\mu a_j]^{1}\\
		& \quad - b_j \otimes b_i \otimes [[x_\la a_i]^{1}{}_\mu a_j]^{2} )\mid_{\mu = -(\p \otimes 1 \otimes 1 + 1\otimes 1 \otimes \p)}\\
		&:= (A) + \tilde{(A)} + (B) +\tilde{(B)}.
	\end{align*}
		With the same evaluation of $\lambda$ and $\mu$, we have $(B)+(5)=0$ and $\tilde{(B)}+\tilde{(5)}=0$. Thus it remains to cancel
		$(A)+\tilde{(A)}$.

	Now, using skew-symmetry and the invariance assumption \eqref{sym invariance}, we have 
	\begin{align*}
		&(11) + \tilde{\circlednum{4}}\\
		& = [x {}_{\lambda} b_i]^{2} \otimes [a_i {}_{\mu} a_j]^{1} \otimes b_j + 	[x {}_{\lambda} a_i]^{2} \otimes [b_i{}_{\mu} a_j]^{1} \otimes b_j\mid_{\mu = -(1 \otimes \partial \otimes 1 + 1 \otimes 1 \otimes \partial)}\\
		& = - [x {}_{\lambda} b_i]^{2} \otimes [a_j {}_{-\mu-1\otimes \partial \otimes 1}a_i]^{1} \otimes b_j - [x {}_{\lambda} a_i]^{2} \otimes [a_j {}_{-\mu-1\otimes \partial \otimes 1}b_i]^{1} \otimes b_j\mid_{\mu = -(1 \otimes \partial \otimes 1 + 1 \otimes 1 \otimes \partial)}\\
		& = -[(1 \otimes \ad_{1}(a_j)_{-\mu- 1\otimes \partial})([x {}_{\lambda} b_i]^{2}\otimes a_i + [x {}_{\lambda} a_i]^{2}\otimes b_i)\mid_{\lambda = -\partial^{\otimes 2}}]\otimes b_j\mid_{\mu = -1 \otimes \partial^{\otimes 2}}\\
		& = a_i \otimes [a_j {}_{-\mu-1 \otimes \partial \otimes 1}[x {}_{\lambda} b_i]^{2}]^{1} \otimes b_j + b_i \otimes [a_j {}_{-\mu-1\otimes \partial \otimes 1}[x {}_{\lambda} a_i]^{2}]^{1}\mid_{\lambda = -\partial^{\otimes 3}, \mu = -1\otimes \partial^{\otimes 2}}\\
		& = a_i \otimes [a_j {}_{\mu} [x {}_{\lambda} b_i]^{2}]^{1} \otimes b_j + b_i \otimes [a_j {}_{\mu} [x {}_{\lambda} a_i]^{2}]^{1} \otimes b_j\mid_{\lambda = -\partial^{\otimes 3},\; \mu = 1 \otimes 1 \otimes \partial}:= (C)+(D).
	\end{align*}
	Similarly, we have
	\begin{align*}
		\tilde{(11)} + \circlednum{4} &= a_i \otimes [a_j {}_{\mu} [x {}_{\lambda} b_i]^{1}]^{2} \otimes b_j + b_i \otimes [a_j {}_{\mu} [x {}_{\lambda} a_i]^{1}]^{2} \otimes b_j\mid_{\lambda = -\partial^{\otimes 3},\;\mu = 1\otimes 1\otimes \partial}\\
									  &:= \tilde{(C)} + \tilde{(D)},\\
		(12) + \tilde{\circlednum{7}} &= a_i \otimes a_j \otimes [b_j {}_{\mu} [x {}_{\lambda} b_i]^{2}]^{1} + b_i \otimes a_j \otimes [b_j {}_{\mu} [x {}_{\lambda} a_i]^{2}]^{1}\mid_{\lambda = -\partial^{\otimes 3},\;\mu = 1\otimes \partial \otimes 1}\\
									  &:= (E) + (F),\\
		\tilde{(12)} + \circlednum{7} &= a_i \otimes a_j \otimes [b_j {}_{\mu} [x {}_{\lambda} b_i]^{1}]^{2} + b_i \otimes a_j \otimes [b_j {}_{\mu} [x {}_{\lambda} a_i]^{1}]^{2}\mid_{\lambda = -\partial^{\otimes 3},\;\mu = 1\otimes \partial \otimes 1}\\
									  &:= \tilde{(E)} + \tilde{(F)},\\
		(8) + \tilde{\circlednum{2}} &= -b_j \otimes [x {}_{\lambda} b_i]^{2} \otimes [a_i {}_{\mu} a_j]^{1} - a_j \otimes [x {}_{\lambda} b_i]^{2} \otimes [a_i {}_{\mu} b_j]^{1}\mid_{\lambda = -\partial^{\otimes 3},\;\mu = -(\partial \otimes 1 \otimes 1 + 1 \otimes 1 \otimes \partial)}\\
									 &:= (G) + (H),\\
		\tilde{(8)} + \circlednum{2} &= -b_j \otimes [x {}_{\lambda} b_i]^{1} \otimes [a_i {}_{\mu} a_j]^{2} - a_j \otimes [x {}_{\lambda} b_i]^{1} \otimes [a_i {}_{\mu} b_j]^{2}\mid_{\lambda = -\partial^{\otimes 3},\;\mu = -(\partial \otimes 1 \otimes 1 + 1 \otimes 1 \otimes \partial)}\\
									 &:= \tilde{(G)} + \tilde{(H)},							 						  
	\end{align*}
		By termwise cancellation under the same evaluation of $\lambda$ and $\mu$, we have
		$(D) + (9) = 0$, $\tilde{(D)} + \tilde{(9)} = 0$, $(F) + (10) = 0$,
		$\tilde{(F)} + \tilde{(10)} = 0$, $(G) + (7) = 0$, and
		$\tilde{(G)} + \tilde{(7)} = 0$. Hence it remains only
		$(C)$, $\tilde{(C)}$, $(E)$, $\tilde{(E)}$, $(H)$ and $\tilde{(H)}$.

		After rewriting all brackets with the common evaluation
		$\lambda=-\partial^{\otimes 3}$ and
		$\mu=1\otimes1\otimes\partial$, and after using skew-symmetry to put the entries in the order
			$(a_j,x,b_i)$, the sum
				$(2) + \tilde{(2)} + \circlednum{5} + \tilde{\circlednum{5}} + (C) + \tilde{(C)}$
				is the compatible Jacobi identity for the triple $(a_j,x,b_i)$, with the
				outside tensor factors $a_i$ and $b_j$. Hence this sum is zero. Now, using skew-symmetry and
			the invariance assumption \eqref{sym invariance}, we get 
		\begin{align}\label{compatible cybe pf eq2}
			\circlednum{8} &+ \tilde{\circlednum{8}} + (A) + \tilde{(A)} + (H) + \tilde{(H)}\nonumber\\
			&= \Big(a_i \otimes [x {}_{\lambda} a_j]^{1} \otimes [b_i {}_{\mu} b_j]^{2} + a_i \otimes [x {}_{\lambda} a_j]^{2} \otimes [b_i {}_{\mu} b_j]^{1}\Big)\Big|_{\lambda = -\partial^{\otimes 3},\;\mu = \partial \otimes 1 \otimes 1}\nonumber\\
			& \quad +\Big(-a_j \otimes b_i \otimes [[x {}_{\lambda} a_i]^{2}{}_{\mu} b_j]^{1} - a_j \otimes b_i \otimes [[x {}_{\lambda} a_i]^{1}{}_{\mu} b_j]^{2}\Big)\Big|_{\lambda = -\partial^{\otimes 3},\;\mu = -(\partial \otimes 1 \otimes 1 + 1 \otimes 1 \otimes \partial)}\nonumber\\
			& \quad +\Big(-a_j \otimes [x {}_{\lambda} b_i]^{2} \otimes [a_i {}_{\mu} b_j]^{1} - a_j \otimes [x {}_{\lambda} b_i]^{1} \otimes [a_i {}_{\mu} b_j]^{2}\Big)\Big|_{\lambda = -\partial^{\otimes 3},\;\mu = -(\partial \otimes 1 \otimes 1 + 1 \otimes 1 \otimes \partial)}\nonumber\\
			&= a_i \otimes\Big([x {}_{\lambda} a_j]^{1} \otimes [b_i {}_{\mu-1\otimes \partial \otimes 1} b_j]^{2} + [x {}_{\lambda} a_j]^{2} \otimes [b_i {}_{\mu-1\otimes \partial \otimes 1} b_j]^{1} - b_j \otimes [[x {}_{\lambda} a_j]^{2} {}_{\mu} b_i]^{1}\nonumber\\
			& \quad - b_j \otimes [[x {}_{\lambda} a_j]^{1} {}_{\mu} b_i]^{2} - [x {}_{\lambda} b_j]^{2} \otimes [a_j {}_{\mu} b_i]^{1} - [x {}_{\lambda} b_j]^{1} \otimes [a_j {}_{\mu} b_i]^{2}\Big)\Big|_{\substack{\lambda = -\partial^{\otimes 3},\\ \mu = -(\partial \otimes 1 \otimes 1 + 1 \otimes 1 \otimes \partial)}}\nonumber\\
			&= a_i \otimes (1 \otimes \ad_{1}(b_i) {}_{\mu})\Big([x {}_{\lambda} a_j]^{2}\otimes b_j + b_j \otimes [x {}_{\lambda} a_j]^{2} + [x {}_{\lambda} b_j]^{2} \otimes a_j\Big)\Big|_{\lambda = -\partial^{\otimes 2},\:\mu = \partial \otimes 1 \otimes 1}\nonumber\\
			& \quad + a_i \otimes (1 \otimes \ad_{2}(b_i) {}_{\mu})\Big([x {}_{\lambda} a_j]^{1}\otimes b_j + b_j \otimes [x {}_{\lambda} a_j]^{1} + [x {}_{\lambda} b_j]^{1} \otimes a_j\Big)\Big|_{\lambda = -\partial^{\otimes 2},\:\mu = \partial \otimes 1 \otimes 1}\nonumber\\
			&= -a_i \otimes (1 \otimes \ad_{1}(b_i) {}_{\mu})(a_j \otimes [x {}_{\lambda} b_j]^{2}) - a_i \otimes (1 \otimes \ad_{2}(b_i)
            {}_{\mu})(a_j \otimes [x {}_{\lambda} b_j]^{1})\Big|_{\substack{\lambda = -\partial^{\otimes 3},\\ \mu = \partial \otimes 1 \otimes 1}}.
		\end{align}
		Finally, rewriting all brackets with
		the common evaluation $\lambda=-\partial^{\otimes 3}$ and
        $\mu=\partial\otimes1\otimes1$, gives the compatible Jacobi identity for the triple
		$(b_i,x,b_j)$. Hence $\circlednum{9}+\tilde{\circlednum{9}}+(E)+\tilde{(E)}+\eqref{compatible cybe pf eq2} = 0$.
		The proof is complete.
\end{proof}

	\begin{theo}\label{com YB}
		Let $(\A,[\cdot _\la \cdot]^{1},[\cdot _\la \cdot]^{2})$ be a compatible Lie conformal algebra and let $r\in \A \otimes \A$. Let 
		\begin{align*}
			\de_1(a) &= (a_\la r)^{1}\mid_{\la = -\p^{\otimes 2}}
			= \sum_i ([a_\la a_i]^{1}\otimes b_i + a_i \otimes [a_\la b_i]^{1})
			\Big|_{\la = -\p^{\otimes 2}},\\
			\de_2(a) &= (a_\la r)^{2}\mid_{\la = -\p^{\otimes 2}}
			= \sum_i ([a_\la a_i]^{2}\otimes b_i + a_i \otimes [a_\la b_i]^{2})
			\Big|_{\la = -\p^{\otimes 2}}.
		\end{align*}
	Then $(\A,[\cdot_\la\cdot]^1,[\cdot_\la\cdot]^2,\de_1,\de_2)$ is a compatible Lie conformal bialgebra if and only if the following conditions are satisfied:
    \begin{itemize}[(ii)] 
	\item[(i)] the symmetric part of $r$ is $\A$-invariant with respect to both brackets, that is, for every $a\in\A$,
	\begin{align*}
		\Big(a_\la (r+r^{21})\Big)^{1}\mid_{\la = -\p^{\otimes 2}} = 0,\quad \Big(a_\la (r+r^{21})\Big)^{2}\mid_{\la = -\p^{\otimes 2}} = 0,
	\end{align*}
	where $r^{21} = \sum_i b_i \otimes a_i$, if $r = \sum_i a_i \otimes b_i$.
	\item[(ii)] for every $a\in\A$,
	\begin{align}
		(a_\la[\![r,r]\!]^{1})^{1}\mid_{\la = -\p^{\otimes 3}} &= 0,\label{comYB first}\\
		(a_\la[\![r,r]\!]^{2})^{2}\mid_{\la = -\p^{\otimes 3}} &= 0,\label{comYB second}\\
		(a_\la[\![r,r]\!]^{1})^{2}\mid_{\la = -\p^{\otimes 3}} &+ (a_\la[\![r,r]\!]^{2})^{1}\mid_{\la = -\p^{\otimes 3}} = 0.\label{comYB third}
	\end{align} 
	where $\p^{\otimes 3} = \p \otimes 1 \otimes 1 + 1 \otimes \p \otimes 1 + 1 \otimes 1 \otimes \p$.
    \end{itemize}
\end{theo}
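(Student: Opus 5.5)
We reduce the statement to Liberati's coboundary criterion (Theorem \ref{thm:Li-coboundary}) applied to each bracket separately, plus one genuinely mixed identity supplied by Proposition \ref{prop:mixed-YB-identity}. By Definition \ref{com bia} and the subsequent Remark, $(\A,[\cdot_\la\cdot]^1,[\cdot_\la\cdot]^2,\de_1,\de_2)$ is a compatible Lie conformal bialgebra if and only if four things hold. The first two are that $(\A,[\cdot_\la\cdot]^1,\de_1)$ and $(\A,[\cdot_\la\cdot]^2,\de_2)$ are Lie conformal bialgebras. The other two are the compatible co-Jacobi identity \eqref{Compatible Co} and the compatible cocycle identity \eqref{compatible cocycle}; the compatible Jacobi identity is given.

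First, we treat the cocycle condition. Since $\de_s$ is of coboundary form, the ordinary cocycle condition for $\de_s$ with respect to $[\cdot_\la\cdot]^s$ is automatic by the Jacobi identity, as in \cite{Li}. For the mixed cocycle \eqref{compatible cocycle}, we expand $\de_1([x_\la y]^2)+\de_2([x_\la y]^1)$ using \eqref{coboundary1}, \eqref{coboundary2}. Each resulting term has the form $[[x_\la y]^2{}_\nu a_i]^1\otimes b_i$ and the like. We then apply the compatible Jacobi identity \eqref{compatible Jacobi} to the triple $(x,y,a_i)$ (and similarly to $(x,y,b_i)$) and compare with the right-hand side. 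So \eqref{compatible cocycle} holds for every $r$, with no condition on $r$; alternatively it follows from the bialgebra cocycle for $[\cdot]^1+[\cdot]^2$ and $\de_1+\de_2=d_{1+2}r$ minus the two individual ones. Hence, by Theorem \ref{thm:Li-coboundary} applied to $[\cdot]^1$ and to $[\cdot]^2$, the first two requirements are equivalent to condition (i) together with \eqref{comYB first} and \eqref{comYB second}. Here we use that Liberati's invariance condition is exactly what makes $\de_s$ skew, i.e.\ land in $\wedge^2\A$.

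It remains to show that, assuming (i), \eqref{comYB first} and \eqref{comYB second}, the compatible co-Jacobi identity \eqref{Compatible Co} is equivalent to \eqref{comYB third}. Because each $\de_s$ is skew-symmetric under (i), the left side of \eqref{Compatible Co} can be rewritten as a cyclic sum. Indeed, for skew cobrackets the combination $(\id\otimes\de_1)\de_2-\tau_{12}(\id\otimes\de_1)\de_2-(\de_1\otimes\id)\de_2$ equals $-\sum_{c.p.}(\de_1\otimes 1)\de_2$ up to sign conventions, exactly as in the standard Lie coalgebra argument. Thus \eqref{Compatible Co} is equivalent to
\[
\sum_{c.p.}\Big((\de_1\otimes 1)\de_2(x)+(\de_2\otimes 1)\de_1(x)\Big)=0 .
\]
Proposition \ref{prop:mixed-YB-identity} says that this cyclic sum equals $-(x_\la[\![r,r]\!]^1)^2-(x_\la[\![r,r]\!]^2)^1$ evaluated at $\la=-\p^{\otimes 3}$. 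This gives the equivalence with \eqref{comYB third}, in both directions, since (i) is available in either direction.

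The main obstacle is the careful translation of \eqref{Compatible Co} into the cyclic-sum form. One has to track that $\tau_{12}$ and the cyclic permutations commute correctly with the substitutions $\la=-\p^{\otimes 3}$, $\mu=-(\p^{\otimes2}\otimes1)$, and that skew-symmetry of $\de_s$ (which needs (i)) lets us convert $(\id\otimes\de)\de$ terms into permuted $(\de\otimes\id)\de$ terms. I would first verify this identity for a single skew cobracket, where it reproduces Liberati's computation, and then polarize: replace $\de$ by $\de_1+\de_2$ and subtract the pure terms. This gives the mixed version without new computation.
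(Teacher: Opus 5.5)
Your proposal is correct, and its two main ingredients coincide with the paper's. Theorem \ref{thm:Li-coboundary}, applied to each bracket separately, yields (i), \eqref{comYB first} and \eqref{comYB second}. The compatible co-Jacobi identity \eqref{Compatible Co} is put into cyclic-sum form using skew-symmetry of $\de_1,\de_2$, and then matched with \eqref{comYB third} through Proposition \ref{prop:mixed-YB-identity}.

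Where you differ is in how you handle the remaining requirement.

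The paper works directly from Definition \ref{com bia} and applies Theorem \ref{thm:Li-coboundary} a third time. It uses the sum bracket $[\cdot_\la\cdot]^1+[\cdot_\la\cdot]^2$, with coboundary $\de_1+\de_2$ and Yang--Baxter tensor $[\![r,r]\!]^1+[\![r,r]\!]^2$. The Yang--Baxter condition for the sum is the sum of the left sides of \eqref{comYB first}, \eqref{comYB second} and \eqref{comYB third}. Hence condition (iii) of the definition is equivalent to \eqref{comYB third}.

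You instead use the characterization in the Remark and observe that the compatible cocycle identity \eqref{compatible cocycle} holds for every $r$. That is right, and your second justification is the cleanest: subtract the two individual cocycle identities from the one for the sum bracket, whose cobracket $\de_1+\de_2$ is again a coboundary.

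Your route makes explicit that the coboundary form imposes nothing on the cocycle side, so all of the content of \eqref{comYB third} comes from the compatible co-Jacobi identity. In exchange, it rests essentially on the long computation of Proposition \ref{prop:mixed-YB-identity}.

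The paper's route does not actually need that proposition. Once the two individual structures and the sum structure are Lie conformal bialgebras, \eqref{Compatible Co} follows by polarization: it is exactly the co-Jacobi expression for $\de_1+\de_2$ minus those for $\de_1$ and $\de_2$. You could shorten your own argument the same way, by applying Liberati's criterion to the sum bracket instead of invoking the Proposition.
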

\begin{proof}
	Since $[\cdot_\la \cdot]^{1}$ and $[\cdot_\la \cdot]^{2}$ are compatible, the sum bracket is again a
	Lie conformal bracket. For this sum bracket, the coboundary map defined by $r$ is $\de_1+\de_2$,
	and the corresponding Yang--Baxter tensor is
	$[\![r,r]\!]^{1}+[\![r,r]\!]^{2}$.

	By Theorem \ref{thm:Li-coboundary}, applied separately to the two brackets, the two individual
	coboundary Lie conformal bialgebra conditions are equivalent precisely to the invariant
	symmetric-part condition for both brackets and to \eqref{comYB first}, \eqref{comYB second}.
	For the sum bracket, the invariant symmetric-part condition is obtained by adding the two
	invariant identities above; after \eqref{comYB first} and \eqref{comYB second}, the
		conformal Yang--Baxter condition for the sum bracket is therefore equivalent to \eqref{comYB third}.

	It remains to identify the compatible co-Jacobi condition. The individual bialgebra conditions
	imply that $\de_1$ and $\de_2$ are skew-symmetric. Hence \eqref{Compatible Co}
	is equivalent to
	\[
		\sum_{c.p.}\Big((\de_1 \otimes 1)(\de_2(x)) + (\de_2 \otimes 1)(\de_1(x))\Big)=0,\qquad x\in\A.
	\]
	By Proposition \ref{prop:mixed-YB-identity}, the compatible cobracket identity \eqref{Compatible Co} is equivalent to
	\eqref{comYB third}. Combining these equivalences with Definition \ref{com bia} proves the theorem.
\end{proof}
\begin{remark}
	We shall refer to \eqref{comYB first} and \eqref{comYB second} as the conformal Yang-Baxter condition for $(\A,[\cdot_\la\cdot]^1)$ and $(\A,[\cdot_\la\cdot]^2)$,respectively. And we call \eqref{comYB third} compatible conformal Yang-Baxter condition.
\end{remark}

\begin{ex}\label{ex-W-coboundary-compatible}
		Let $(\A,[\cdot_\lambda\cdot]^1,[\cdot_\lambda\cdot]^2)$ be the compatible Lie conformal algebra
		in Example \ref{ex-W-dual-coalgebra}.
	Consider
	\[
		r=H\otimes \p H-\p H\otimes H.
	\]
	Then $r+r^{21}=0$, so the invariant symmetric part condition in Theorem \ref{com YB} is automatically satisfied for both brackets.
	
	Moreover, every tensor factor of $r$ lies in the abelian conformal submodule $\C[\p]H$ for
	both $W(b_s)$ structures. Therefore each term in the conformal Yang--Baxter expressions contains
	a bracket between two elements of $\C[\p]H$, and hence
		$[\![r,r]\!]^1=[\![r,r]\!]^2=0$.
	Thus \eqref{comYB first}-\eqref{comYB third} hold, and $r$ defines a coboundary compatible Lie conformal bialgebra.
	
	Let $A_s(\la)=[L_\la H]^s=(\p+(1-b_s)\la)H$, $s=1,2$. Then $[L_\la \p H]^s=(\p+\la)A_s(\la)$.
	The corresponding cobrackets $\de_s(a)=(a_\la r)^s|_{\la=-\p^{\otimes 2}}$ are $\de_s(H)=0$
	and
	\begin{align}
		\de_s(L)
		&=\Big(A_s(\la)\otimes \p H+H\otimes (\p+\la)A_s(\la)
		-(\p+\la)A_s(\la)\otimes H-\p H\otimes A_s(\la)\Big)\Big|_{\la=-\p^{\otimes 2}}.\label{W-coboundary-delta}
	\end{align}
	In general $\de_s(L)$ is nonzero. Consequently
		$(\A,[\cdot_\la\cdot]^1,[\cdot_\la\cdot]^2,\de_1,\de_2)$
	is a nontrivial coboundary compatible Lie conformal bialgebra.
\end{ex}

\begin{defi}
	Let $(\A,[\cdot_\la\cdot]^1,[\cdot_\la\cdot]^2)$ be a compatible LCA and $r\in\A\otimes\A$. The compatible conformal CYBE is given by the following equations
	\[
	[\![r,r]\!]^{1}\equiv0\pmod{\partial^{\otimes3}},
	\qquad
	[\![r,r]\!]^{2}\equiv0\pmod{\partial^{\otimes3}}.
	\]
	We call $r$ a solution of the compatible conformal CYBE.
\end{defi}

	We now clarify the relationship between compatible conformal CYBE and compatible conformal Yang-Baxter condition.

\begin{lemm}\label{SCYBE0}
	Let $\A$ be a Lie conformal algebra and let $X\in \A\otimes\A\otimes\A$. If
	$X\equiv0\pmod{\partial^{\otimes3}}$,
	then, for every $a\in\A$,
	\[
		a_\lambda X\mid_{\lambda=-\partial^{\otimes3}}=0.
	\]
	Here $a_\lambda X$ is taken with respect to the induced action on $\A^{\otimes3}$.
\end{lemm}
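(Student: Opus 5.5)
Since $X\equiv 0 \pmod{\partial^{\otimes3}}$, we can write $X=\partial^{\otimes3}Y$ for some $Y\in\A^{\otimes3}$, where $\partial^{\otimes3}=\partial\otimes1\otimes1+1\otimes\partial\otimes1+1\otimes1\otimes\partial$ acts on $\A^{\otimes 3}$. The plan is to show that the induced action satisfies conformal sesquilinearity in the second argument on $\A^{\otimes3}$, namely
\[
a_\lambda(\partial^{\otimes3}Y)=(\lambda+\partial^{\otimes3})(a_\lambda Y),
\]
and then to substitute $\lambda=-\partial^{\otimes3}$. The factor $\lambda+\partial^{\otimes3}$ then becomes $-\partial^{\otimes3}+\partial^{\otimes3}=0$, which yields the claim.

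First, recall the induced action. For $Y=\sum u\otimes v\otimes w$,
\[
a_\lambda Y=\sum\big([a_\lambda u]\otimes v\otimes w+u\otimes[a_\lambda v]\otimes w+u\otimes v\otimes[a_\lambda w]\big).
\]
This is $\C$-linear in $Y$, so it suffices to treat pure tensors. Replace $Y$ by $(\partial\otimes1\otimes1)Y$. By \eqref{Lc1}, the first summand becomes $(\partial\otimes1\otimes1+\lambda)[a_\lambda u]\otimes v\otimes w$. The other two summands acquire only the operator $\partial\otimes1\otimes1$ applied to the unchanged term. The same computation applies with $\partial$ in the second or third slot. Summing over the three placements gives
\[
a_\lambda(\partial^{\otimes3}Y)=\partial^{\otimes3}(a_\lambda Y)+\lambda\,(a_\lambda Y),
\]
because exactly one of the three summands produces a $\lambda$ for each placement. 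This is the key identity.

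Second, substitute $\lambda=-\partial^{\otimes3}$. The only point requiring care, and the one I regard as the main (mild) obstacle, is the meaning of this substitution. Writing $a_\lambda Y=\sum_k\lambda^k Y_k$ with $Y_k\in\A^{\otimes3}$, the substitution means $\sum_k(-\partial^{\otimes3})^kY_k$. The operators $\lambda$ and $\partial^{\otimes3}$ commute, since $\lambda$ is a formal scalar and the substitution places the powers of $\partial^{\otimes3}$ to the left of the coefficients. Hence $(\lambda+\partial^{\otimes3})Z(\lambda)$ evaluated at $\lambda=-\partial^{\otimes3}$ equals $(-\partial^{\otimes3}+\partial^{\otimes3})Z(-\partial^{\otimes3})=0$. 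Concretely, $\sum_k\lambda^{k+1}Y_k+\sum_k\lambda^k\partial^{\otimes3}Y_k$ evaluates to $\sum_k(-\partial^{\otimes3})^{k+1}Y_k+\sum_k(-\partial^{\otimes3})^k\partial^{\otimes3}Y_k=0$. Therefore $a_\lambda X\mid_{\lambda=-\partial^{\otimes3}}=0$, and the argument uses only \eqref{Lc1}.
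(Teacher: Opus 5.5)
Your proposal is correct and follows the same route as the paper: write $X=\partial^{\otimes3}Y$, use conformal sesquilinearity of the induced action, $a_\lambda(\partial^{\otimes3}Y)=(\lambda+\partial^{\otimes3})a_\lambda Y$, and evaluate at $\lambda=-\partial^{\otimes3}$. The paper cites that sesquilinearity identity directly. Your slot-by-slot derivation of it and your check of the substitution convention fill in details the paper leaves implicit.
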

\begin{proof}
	Let $X=\partial^{\otimes3}Y$, $Y\in\A^{\otimes3}$. By conformal sesquilinearity for the induced action on
	$\A^{\otimes3}$,
	\[
		a_\lambda X
		=a_\lambda(\partial^{\otimes3}Y)
		=(\lambda+\partial^{\otimes3})a_\lambda Y.
	\]
	After the evaluation $\lambda=-\partial^{\otimes3}$, the right-hand side is zero.
\end{proof}

		If $r$ is a solution of the compatible conformal CYBE, then
	$[\![r,r]\!]^1\equiv[\![r,r]\!]^2\equiv0\pmod{\partial^{\otimes3}}$, so by
	Lemma \ref{SCYBE0} the three conformal Yang--Baxter conditions
	\eqref{comYB first}--\eqref{comYB third} all hold. However, \eqref{comYB first}
	and \eqref{comYB second} do not imply \eqref{comYB third}, and
	\eqref{comYB first}--\eqref{comYB third} do not imply that $r$ is a solution of
	the compatible conformal CYBE. The following two examples illustrate these two facts,
	respectively.

\begin{ex}\label{ex-weak-YB-not-automatic}
	Let $\A=\C[\partial]L\oplus \C[\partial]H$.
	Define two $\lambda$-brackets by
	\[
		[L_\lambda L]^s=(\partial+2\lambda)L,\qquad [H_\lambda H]^s=0,\qquad s=1,2,
	\]
	and
	\[
		[L_\lambda H]^1=\partial H,\qquad [L_\lambda H]^2=(\partial-\lambda)H,
	\]
	with the remaining brackets determined by skew-symmetry and conformal sesquilinearity.
	These are the Lie conformal algebras $W(1)$ and $W(2)$, respectively; hence they are
	compatible by Example \ref{exW1}. Take
	\[
		r=L\otimes H-H\otimes L=L\wedge H.
	\]
	Let $\partial_i$ denote the action of $\partial$ on the
	$i$-th tensor factor. A direct calculation gives
	\[
		[\![r,r]\!]^1
		=\p^{\otimes3}\big(L\otimes H\otimes H-H\otimes H\otimes L+H\otimes L\otimes H\big),
	\]
	and
	\begin{align*}
		[\![r,r]\!]^2
		&=(\partial_1+2\partial_3)L\otimes H\otimes H
		+(\partial_2-\partial_1)H\otimes H\otimes L\\
		&\quad +(2\partial_1+\partial_2)H\otimes L\otimes H.
	\end{align*}
	One checks that
	\[
		(a_\lambda[\![r,r]\!]^1)^1\mid_{\lambda=-\p^{\otimes3}}=0,\qquad
		(a_\lambda[\![r,r]\!]^2)^2\mid_{\lambda=-\p^{\otimes3}}=0,\qquad a=L,H.
	\]
	Thus \eqref{comYB first} and \eqref{comYB second} hold. However, for $a=L$,
	\begin{align*}
		& (L_\lambda[\![r,r]\!]^1)^2\mid_{\lambda=-\p^{\otimes3}}
		+(L_\lambda[\![r,r]\!]^2)^1\mid_{\lambda=-\p^{\otimes3}}\\
		&\quad =2\p^{\otimes3}\Big((\partial_2-\partial_3)L\otimes H\otimes H
		+(\partial_1-\partial_2)H\otimes H\otimes L\\
		&\qquad\quad +(\partial_3-\partial_1)H\otimes L\otimes H\Big)\neq0
	\end{align*}
	in $\A^{\otimes3}$. Hence \eqref{comYB third} does not follow from
	\eqref{comYB first} and \eqref{comYB second}.
\end{ex}

\begin{ex}\label{ex-sv-nonzero-mixed-cybe}
		Let $(\A,[\cdot_\la\cdot]^1,[\cdot_\la\cdot]^2)$ be the compatible Lie conformal algebra
		in Example \ref{ex-sv-compatible-LCA}. Consider the skew-symmetric tensor
	\[
		r=P\otimes Q-Q\otimes P=P\wedge Q.
	\]
	Then $r+r^{21}=0$, so the invariant symmetric part condition in Theorem \ref{com YB} holds for both brackets.
	A direct calculation gives
	\[
		[\![r,r]\!]^1=P\wedge Q\wedge M_1\not\equiv0
		\pmod{\partial^{\otimes3}},\qquad
		[\![r,r]\!]^2=P\wedge Q\wedge M_2\not\equiv0
		\pmod{\partial^{\otimes3}},
	\]
	where $U\wedge V\wedge W$ denotes the alternating tensor in $\A^{\otimes3}$, namely
	\begin{align*}
		U\wedge V\wedge W
		&=U\otimes V\otimes W+V\otimes W\otimes U+W\otimes U\otimes V\\
		&\quad-U\otimes W\otimes V-W\otimes V\otimes U-V\otimes U\otimes W.
	\end{align*}
	
	Let $\Omega_i=P\wedge Q\wedge M_i$, $i=1,2$. Since $[\![r,r]\!]^s=\Omega_s$, to verify
	\eqref{comYB first} and \eqref{comYB second} it suffices to compute
	$(a_\lambda\Omega_s)^s$ for $s=1,2$ and for the generators $a$ of $\A$. For each $s$, the actions
	of $P$ and $Q$ produce alternating tensors with two equal factors $M_s$, while the actions of
	$M_1$ and $M_2$ are zero. For the action of $L$ on $\Omega_s$, the three tensor positions contribute
	\[
		\left(\p\otimes1\otimes1+\frac{1}{2}\la\right)
		+\left(1\otimes\p\otimes1+\frac{1}{2}\la\right)
		+1\otimes1\otimes\p
		=\p^{\otimes3}+\la,
	\]
	which vanishes after the evaluation $\la=-\p^{\otimes3}$. Hence \eqref{comYB first} and \eqref{comYB second} hold for all $a\in\A$.

	We now verify the mixed case, namely \eqref{comYB third}.
	For $a=L$, the two terms in \eqref{comYB third} are
	$(\partial^{\otimes3}+\lambda)\Omega_1$ and
	$(\partial^{\otimes3}+\lambda)\Omega_2$, respectively, and hence both vanish after the
	evaluation $\lambda=-\partial^{\otimes3}$. For $a=M_i$, $i=1,2$, the two terms in \eqref{comYB third} are zero.
	For $a=P$, one has
	\[
		(P_\lambda[\![r,r]\!]^1)^2=-P\wedge M_1\wedge M_2,\qquad
		(P_\lambda[\![r,r]\!]^2)^1=P\wedge M_1\wedge M_2.
	\]
	For $a=Q$, one has
	\[
		(Q_\lambda[\![r,r]\!]^1)^2=-Q\wedge M_1\wedge M_2,\qquad
		(Q_\lambda[\![r,r]\!]^2)^1=Q\wedge M_1\wedge M_2.
	\]
	Thus \eqref{comYB third} holds for every $a\in\A$.
	By Theorem \ref{com YB}, $r$ defines a coboundary compatible Lie conformal bialgebra.
	
	The corresponding cobrackets are nonzero. More precisely,
	\[
		\de_s(L)=0,\qquad \de_s(M_1)=\de_s(M_2)=0,\qquad s=1,2,
	\]
	and
	\[
		\de_s(P)=P\wedge M_s,\qquad \de_s(Q)=Q\wedge M_s,\qquad s=1,2.
	\]
		Thus this example shows that \eqref{comYB first}-\eqref{comYB third} may hold
		even when both $[\![r,r]\!]^1$ and $[\![r,r]\!]^2$ are nonzero; in this case
		\eqref{comYB third} is satisfied by cancellation of nonzero terms.
	\end{ex}


\section{Conclusion}

In this paper, we introduced compatible Lie conformal bialgebras and developed their basic 
structural theory. We first studied compatible Lie conformal algebras, their representations and 
matched pairs, and then established the corresponding duality between finite compatible Lie 
conformal algebras and compatible Lie conformal coalgebras. Based on these preparations, we proved 
that compatible Lie conformal bialgebras, standard compatible conformal Manin triples and matched pairs of 
compatible Lie conformal algebras are equivalent in the finite-rank case. For the coboundary case, we characterized the tensors $r$ that define compatible Lie conformal bialgebras by the invariance of $r+r^{21}$ with respect to both brackets and three conformal Yang--Baxter conditions. We introduced the compatible conformal CYBE for comparison and proved that each of its solutions satisfies these three conditions. The converse does not hold. The examples show that the first two conditions do not imply the third and that all three may hold even when $r$ is not a solution of the compatible conformal CYBE; in the latter case, the third condition holds by cancellation of nonzero terms.

These results extend the classical bialgebra--Manin triple--Yang--Baxter correspondence to the 
compatible conformal setting. They also provide an algebraic framework for constructing new Lie 
conformal bialgebra structures from compatible brackets, compatible cobrackets and $r$-matrices. 
Further directions include the classification of finite examples, the study of explicit solutions 
of the compatible conformal Yang--Baxter equation and the investigation of connections with 
deformation theory, vertex algebra theory and integrable systems.

\section*{Statements and Declarations}

\subsection*{Funding}
The authors received no specific funding for this work.

\subsection*{Competing interests}
The authors declare that they have no competing interests.

\subsection*{Data availability}
Data sharing is not applicable to this article as no datasets were generated or analysed during 
the current study.

\end{document}